\newtheorem{theorem}{Theorem}[section]
\newtheorem{corollary}[theorem]{Corollary}
\newtheorem{lemma}[theorem]{Lemma}
\newtheorem{definition}[theorem]{Definition}
\theoremstyle{remark}
\newtheorem{remark}[theorem]{Remark}
\definecolor{newblue}{rgb}{0.19, 0.55, 0.91}
\newcommand{\eps}{\varepsilon}
\newcommand{\bS}{\ensuremath\mathbb{S}}
\newcommand{\B}{\ensuremath\mathbf{B}}
\newcommand{\ms}{\ensuremath\mathsf{m_{space}}}
\newcommand{\Abs}[1]{\left|#1\right|}
\newcommand{\simplex}[1]{\ensuremath\Delta_{#1}}
\def\1{\bm{1}}
\DeclareMathAlphabet{\mathsfit}{\encodingdefault}{\sfdefault}{m}{sl}
\SetMathAlphabet{\mathsfit}{bold}{\encodingdefault}{\sfdefault}{bx}{n}
\DeclareMathOperator*{\E}{\mathbb{E}}
\newcommand{\R}{\mathbb{R}}
\newcommand{\Var}{\mathbf{Var}}
\renewcommand{\tilde}{\widetilde}
\newcommand{\norm}[1]{\left\|#1\right\|}
\newcommand{\normtwo}[1]{\norm{#1}_2}
\providecommand{\expect}[2]{\ensuremath{\ifthenelse{\equal{#1}{}}{\mathbb{E}}{\mathbb{E}_{#1}}\!\left[#2\right]}\xspace}
\providecommand{\prob}[2]{\ensuremath{\ifthenelse{\equal{#1}{}}{\Pr}{\Pr_{#1}}\!\left[#2\right]}\xspace}
\newcommand{\inner}[1]{\langle #1\rangle}
\newcommand{\abs}[1]{\left|{#1}\right|}
\DeclareMathOperator{\poly}{poly}
\DeclareMathOperator{\polylog}{polylog}
\DeclareMathOperator{\scap}{Cap}
\newcommand{\Z}{\mathbb{Z}}  
\newcommand{\OL}{\mathsf{OL}}
\newcommand{\rowspan}{\mathrm{rowspace}}
\newcommand{\colspan}{\mathrm{colspace}}
\newcommand{\cN}{\mathcal{N}}
\DeclareMathOperator{\interior}{int}
\newcolumntype{L}[1]{>{\raggedright\let\newline\\\arraybackslash\hspace{0pt}}m{#1}}
\newcolumntype{C}[1]{>{\centering\let\newline\\\arraybackslash\hspace{0pt}}m{#1}}
\newcolumntype{R}[1]{>{\raggedleft\let\newline\\\arraybackslash\hspace{0pt}}m{#1}}
\begin{document}

\author{Yi Li\footnote{Supported in part by Singapore Ministry of Education Tier 1 grant RG75/21}\\ \small{Division of Mathematical Sciences}\\ \small{Nanyang Technological University}\\  \small{\texttt{yili@ntu.edu.sg}}\\
	   \and
	   Honghao Lin\footnote{Supported by National Science Foundation (NSF) Grant CCF-1815840} $\qquad\qquad$ David P. Woodruff\footnote{Supported in part by National Science Foundation (NSF) Grant CCF-1815840} \\ \small{Computer Science Department}\\ \small{Carnegie Mellon University} \\  \small{\texttt{\{honghaol,dwoodruf\}@andrew.cmu.edu}}
	   }
	   
\date{\vspace{-5ex}}	   

\title{The $\ell_p$-Subspace Sketch Problem in Small Dimensions with Applications to Support Vector Machines}

\maketitle

\begin{abstract}
In the $\ell_p$-subspace sketch problem, we are given an $n \times d$ matrix $A$ with $n > d$, and asked to build a small memory data structure $Q(A,\eps)$ so that, for any query vector $x \in \mathbb{R}^d$, we can output a number in $(1 \pm \eps) \|Ax\|_p^p$ given only $Q(A,\eps)$. This problem is known to require $\tilde{\Omega}(d \eps^{-2})$ bits of memory for $d = \Omega(\log(1/\eps))$. However, for $d = o(\log(1/\eps))$, no data structure lower bounds were known. Small constant values of $d$ are particularly important for estimating point queries for support vector machines (SVMs) in a stream (Andoni et al.~2020), where only tight bounds for $d = 1$ were known.

We resolve the memory required to solve the $\ell_p$-subspace sketch problem for any constant $d$ and integer $p$, showing that it is $\Omega\big (\eps^{-\frac{2(d-1)}{d+2p}} \big )$ bits and $\tilde{O}\big (\eps^{-\frac{2(d-1)}{d+2p}} \big )$ words, where the $\tilde{O}(\cdot)$ notation hides $\poly(\log(1/\eps))$ factors. This shows that one can {\it beat} the $\Omega(\eps^{-2})$ lower bound, which holds for $d = \Omega(\log(1/\eps))$, for any constant $d$. Further, we show how to implement the upper bound in a single pass stream, with an additional multiplicative $\poly(\log \log n)$ factor and an additive $\poly(\log n)$ cost in the memory. Our bounds extend to loss functions other than the $\ell_p$-norm, and notably they apply to point queries for SVMs with additive error, where we show an optimal bound of $\tilde{\Theta}\big(\eps^{-\frac{2d}{d+3}}\big)$ for every constant $d$. This is a near-quadratic improvement over the $\Omega \big (\eps^{-\frac{d+1}{d+3}} \big )$ lower bound of Andoni et al. Further, previous upper bounds for SVM point query were noticeably lacking: for $d = 1$ the bound was $\tilde{O}(\eps^{-1/2})$ and for $d = 2$ the bound was $\tilde{O}(\eps^{-4/5})$, but all existing techniques failed to give any upper bound better than $\tilde{O}(\eps^{-2})$ for any other value of $d$. Our techniques, which rely on a novel connection to low dimensional techniques from geometric functional analysis, completely close this gap. 
\end{abstract}

\section{Introduction} \label{sec:intro}
We consider the subspace sketch problem, which is the problem of designing a low memory data structure to compress a given $n \times d$ matrix $A$, so that later given only the compressed version of $A$, one can query norms of vectors of the form $Ax$ for $x \in \mathbb{R}^d$. Formally,

\begin{definition}
In the subspace sketch problem, we are given an $n \times d$ matrix $A$ with entries specified by $O(\log (nd))$ bits, an accuracy parameter $\eps > 0$, and a function $\Phi: \mathbb{R}^n \rightarrow \mathbb{R}^{\geq 0}$, and the goal is to design a data structure  $Q_{\Phi}$ so that, with constant probability, simultaneously for all $x \in \mathbb{R}^d$, $Q_{\Phi}(x) = (1 \pm \eps) \Phi(Ax)$. 
\end{definition}

An important case of the above is when the functions correspond to the classical $\ell_p$-norms, i.e.,  $\Phi(x) = \sum_{i=1}^n |x_i|^p$ for some $p \geq 1$. 
A space bound of $d^{O(p)}$ words is known for even integers $p$, independent of $\eps$. For $p$ that is not an even integer, 
it was shown in \cite{DBLP:journals/siamcomp/LiWW21} that there is an $\tilde{\Omega}(d \eps^{-2})$ lower bound on the memory required to solve the subspace sketch problem for $d = \Omega(\log(1/\eps))$. Here and throughout, the $\tilde{O}(\cdot)$ notation hides poly-logarithmic factors in its arguments. The fact that $d = \Omega(\log(1/\eps))$ was crucial for the arguments in \cite{DBLP:journals/siamcomp/LiWW21}, and a natural question is if the same $\tilde{\Omega}(d \eps^{-2})$ lower bound holds for smaller $d$, in particular for constant $d$. 

The interest in constant $d$ is particularly motivated given the recent work of \cite{ABL+20}, which studied the support vector machines (SVM) problem in constant dimensions in the streaming setting. Here $x$ can be thought of as a pair $(\theta, b) \in \mathbb{R}^d \times \mathbb{R}$ and each of the $n$ rows of $A$ can be thought of as a pair $(x_i, y_i) \in \mathbb{R}^d \times \{-1,1\}$, and for a parameter $\lambda > 0$ we have:
\begin{eqnarray}\label{eqn:svm}
\Phi((\theta, b)) = \frac{\lambda}{2} \|(\theta, b)\|_2^2 + \frac{1}{n} \sum_{i=1}^n \max\{0, 1-y_i (\theta^T x_i + b)\}.
\end{eqnarray}

The authors refer to the above as the ``point query" version of SVM and show an $\Omega \big (\eps^{-\frac{d+1}{d+3}} \big )$ bit lower bound for any single-pass streaming algorithm for solving this problem. In fact, their lower bound applies to the memory required of any data structure for solving the subspace sketch problem with $\Phi$ as in (\ref{eqn:svm}). In terms of upper bounds, \cite{ABL+20} show an $\tilde{O}(\eps^{-1/2})$ bound for $d = 1$ and an $\tilde{O}(\eps^{-4/5})$ bound for $d = 2$. For any $d > 2$, the best known upper bound is a trivial $\tilde{O}(\eps^{-2})$ bound obtained by uniform sampling. In fact, these upper bounds are also all one-pass streaming algorithms. One of the major open questions of \cite{ABL+20} was to close this nearly quadratic gap for large constant $d$. 

The $\ell_p$-subspace sketch problem has also been studied in functional analysis for constant values of $d$ for the special case of $p = 1$, and for the special case of requiring an embedding, i.e., a low dimension $m$ and a matrix $B$ of $m$ rows so that $\|Bx\|_1 = (1 \pm \eps)\|Ax\|_1$ for all $x$. In particular, a dimension of $\tilde{O}\big (\eps^{-\frac{2(d-1)}{d+2}} \big )$ was established in a sequence of work
\cite{beckt83,gordon85,schechtman87,linhart89,BL88,Mat96}, while a matching lower bound for this particular type of subspace sketch (and for $p = 1$ and constant $d$) was shown in \cite{bourgain89}. There are many natural questions left open by the functional analysis work: (1) can the upper bound be made a streaming upper bound with a small amount of memory? (2) does the lower bound hold for arbitrary data structures, (3) can the arguments extend to $p > 1$, etc.?

Throughout the remainder of this section, we assume that $d\geq 2$ and $p\geq 1$ are constants.

\begin{table}[t]
    \centering
    \begin{tabular}{|l|l l| l l|}
    \hline
    \textbf{Parameters}& \multicolumn{2}{c|}{\textbf{Lower Bound}}  & \multicolumn{2}{c|}{\textbf{Upper Bound}}\\
    \hline 
         $ p = 1$ & \multirow{5}{*}{$\Omega\big(\eps^{-\frac{2(d - 1)}{d + 2p}}\big)$ } & \multirow{5}{*}{Theorem~\ref{thm:lower_bound}} &$\tilde{O}\big(\eps^{-\frac{2(d - 1)}{d + 2}}\big)$ & \cite{Mat96}\\
         $p \in \Z\setminus 2\Z$ & & &$\tilde{O}\big(\eps^{-\frac{2(d - 1)}{d + 2p}}\big)$ & Theorem~\ref{thm:coreset_multiplicative} \\
         $p\in [1, \infty) \setminus \Z$ & & & $\tilde{O}\big(\eps^{-\frac{2(d^q - 1)}{d^q + 2}}\big)$ & $q=\lceil \frac{p}{2}\rceil$, Section~\ref{sec:general_tensor_UB}\\
         $p \in [1, \infty) \setminus \Z$, $d \ge 5$ & & &$\tilde{O}\big(\eps^{-\frac{2(d - 1)}{d + 2}}\big)$ & Theorem~\ref{thm:d>=5} \\
         $p \in (d-1, \infty) \setminus \Z$ & & & $\tilde{O}\big(\eps^{-\frac{2d}{2p - d + 2}}\big)$ & Section~\ref{sec:p>d-1} \\
         \hline 
         $p \in 2\Z$ & \multicolumn{2}{c|}{ no dependence on $\eps$ } & $O(1)$ & \cite{schechtman:even_p,DBLP:journals/siamcomp/LiWW21} \\
         \hline
         SVM & $\Omega\big(\eps^{-\frac{2d}{d + 3}}\big)$&  Theorem~\ref{thm:lower_bound_svm} & $\tilde{O}\big(\eps^{-\frac{2d}{d + 3}}\big)$ & Theorem~\ref{thm:streaming_svm}\\
         \hline
    \end{tabular}
    \caption{A summary of existing results and our results for the $\ell_p$-subspace sketch problem. The lower bounds are in terms of the number of bits and the upper bounds are in the number of words.}
    \label{tab:summary_of_results}
\end{table}

\subsection{Our Results}
A summary of our results is provided in Table~\ref{tab:summary_of_results}.

In this paper, we show that the $\Omega\big (\eps^{-\frac{2(d-1)}{d+2}} \big )$ lower bound actually holds for any type of data structure for $p = 1$. Furthermore, for every $p \in [1,\infty)\setminus 2\Z$, we obtain a lower bound of  $\Omega\big (\eps^{-\frac{2(d-1)}{d+2p}} \big )$. 

\begin{theorem}
\label{thm:lower_bound_informal}
Suppose that $p\in [1,\infty)\setminus 2\Z$. Any data structure that solves the $\ell_p$-subspace sketch problem for dimension $d$ and accuracy parameter $\eps$ requires $\Omega(\eps^{-\frac{2(d - 1)}{d + 2p}})$ bits of space.
\end{theorem}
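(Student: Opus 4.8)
The plan is an encoding (counting) argument. It suffices to construct a family $\{A^{(\eta)}\}$ of admissible $n\times d$ matrices (entries of $O(\log(nd))$ bits, $n>d$), indexed by $\eta$ in a set $\mathcal C$ with $|\mathcal C|=2^{\Omega(\eps^{-2(d-1)/(d+2p)})}$, whose point-query functions $f_\eta(x):=\|A^{(\eta)}x\|_p^p$ are pairwise \emph{$\eps$-distinguishable}: no single function is within a multiplicative $(1\pm C\eps)$ factor of two of them simultaneously on all of $\R^d$. Given such a family, a data structure that answers all point queries to within $(1\pm\eps)$ with constant probability yields, after fixing its internal randomness, a map on a constant fraction of $\mathcal C$ that must be injective — two members sent to the same state would be simultaneously $(1\pm\eps)$-approximated by the common answer function, contradicting distinguishability once $C$ is a large enough constant — so the state uses $\Omega(\log|\mathcal C|)$ bits. (The averaging over coins handles the "constant probability'' clause.)

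I would build the family from measures on the sphere. Identifying a matrix $A$ with the weighted point measure $\mu$ of its rows, $\|Ax\|_p^p=\int_{S^{d-1}}|\langle x,\theta\rangle|^p\,d\mu(\theta)=:(\Lambda\mu)(x)$, and the kernel operator $\Lambda$ is diagonalized by spherical harmonics via the Funk--Hecke theorem: $\Lambda|_{H_k}=\lambda_k\cdot\mathrm{id}$ with $\dim H_k\asymp k^{d-2}$. The quantitative heart is the behaviour of $\lambda_k$: the Gegenbauer coefficients of $|t|^p$ are governed by its singularity at $t=0$, so $|\lambda_k|\asymp_{d,p}k^{-(p+d/2)}$, and crucially $\lambda_k\neq0$ for all even $k$ precisely when $p\notin2\Z$ (for even integer $p$, $|t|^p$ is a polynomial, all but finitely many $\lambda_k$ vanish, and no $\eps$-dependent lower bound is possible — this is the dichotomy in the statement). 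Hence $P_{\le K}:=\bigoplus_{k\le K}H_k$ has dimension $\asymp K^{d-1}$ and $\Lambda$ restricted to it has all singular values $\ge|\lambda_K|$.

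Concretely, set $K\asymp\eps^{-2/(d+2p)}$ so that $|\lambda_K|\asymp\eps$; take a $(1/K)$-net $\{\theta_j\}_{j=1}^N$ of $S^{d-1}$ with $N=K^{d-1}$ and disjointly supported smooth bumps $\psi_j$ of mass $1$ on the caps of radius $\Theta(1/K)$ about $\theta_j$, so $\|\psi_j\|_2^2\asymp K^{d-1}=N$. For $\eta\in\{-1,1\}^N$ let $\mu_\eta:=\sum_j a(1+\tfrac12\eta_j)\psi_j$ with $a:=1/N$; this is automatically nonnegative, its density is within a factor $2$ of that of $\mu_0:=\sum_j a\psi_j$, and $(\Lambda\mu_0)(x)\asymp a\sum_j|\langle x,\theta_j\rangle|^p\asymp aN=1$, so all $f_\eta$ are pointwise $\Theta(\|x\|_p^p)$. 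Let $\mathcal C\subseteq\{-1,1\}^N$ be a Gilbert--Varshamov code of size $2^{\Omega(N)}$ and minimum distance $\Omega(N)$. For $\eta\neq\eta'\in\mathcal C$, $f_\eta-f_{\eta'}=a\,\Lambda\bigl(\sum_{j\in D}\pm\psi_j\bigr)$ with $|D|\asymp N$; orthogonality and equal norms of the bumps give $\bigl\|a\sum_{j\in D}\pm\psi_j\bigr\|_2\asymp a\sqrt{N}\,\|\psi_j\|_2\asymp1$, and since this perturbation (spanned by $(1/K)$-scale bumps) carries all but a small constant fraction of its $L^2$-energy below frequency $O(K)$, the singular-value bound gives $\|f_\eta-f_{\eta'}\|_\infty\ge\|f_\eta-f_{\eta'}\|_2\gtrsim|\lambda_K|\asymp\eps$, i.e.\ $\eps$-distinguishability. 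Finally, discretize each $\mu_\eta$ to $\poly(1/\eps)$ atoms on a $\poly(\eps)$-grid with relative quadrature error $o(\eps)$ to produce the matrices $A^{(\eta)}$; then $\Omega(\log|\mathcal C|)=\Omega(N)=\Omega(\eps^{-2(d-1)/(d+2p)})$.

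The step I expect to be the main obstacle is the claim that $\Lambda$ does not contract the $\asymp K^{d-1}$-dimensional bump-space by much more than the factor $|\lambda_K|$ by which it contracts a single degree-$K$ harmonic. This needs the sharp two-sided asymptotic $|\lambda_k|\asymp k^{-(p+d/2)}$ (so that every degree $\le K$ is weighted within a constant of $|\lambda_K|$) together with a Marcinkiewicz--Zygmund / frame-type estimate bounding the high-frequency ($>O(K)$) $L^2$-mass of functions spanned by $(1/K)$-net bumps; without the latter, the decay of $\lambda_k$ would erase the high-frequency part of the perturbations and the separation would collapse. This is where the low-dimensional geometric functional analysis enters, essentially a quantitative, multiplicative, $p$-dependent strengthening of Bourgain's entropy lower bound for zonoids. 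Secondary care is needed to keep the eigenvalue asymptotics effective in $p$ and $d$ (the constants degenerate as $p$ approaches an even integer) and to carry out the discretization within the bit-length budget; I expect these to be routine but lengthy.
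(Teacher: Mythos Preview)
Your skeleton matches the paper's: an encoding argument with $N\asymp K^{d-1}$ and $K\asymp\eps^{-2/(d+2p)}$, Funk--Hecke giving $\Lambda|_{H_k}=\lambda_k\cdot\mathrm{id}$, and the asymptotic $|\lambda_k|\asymp_{d,p}k^{-(d/2+p)}$ (nonvanishing precisely when $p\notin 2\Z$). Where you diverge is exactly the step you flag as the obstacle, and the paper resolves it by a different and cleaner route than a frame/Marcinkiewicz--Zygmund estimate.

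The paper does not use smooth bumps or a spectral-localization claim. It works with discrete point sets: a fixed $\eta$-separated set $S\subset\bS^{d-1}$ of $\Theta(N)$ points, and the codewords are \emph{subsets} $S_i\subset S$ of size $|S|/2$ with pairwise intersection $\le |S|/4$ (an intersecting family of size $2^{\Omega(N)}$). The separation lemma (their Lemma~3.2) bounds $\sup_x\bigl|\,\|Ax\|_p^p-\|Bx\|_p^p\,\bigr|$ for two well-separated symmetric point sets $A,B$ not via your inequality $\|\Lambda g\|_2\gtrsim|\lambda_K|\,\|g\|_2$, but via the \emph{Poisson identity} on $\bS^{d-1}$. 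Writing $\delta^2\ge\|h_A-h_B\|_{L^2}^2$, they invert Funk--Hecke and sum against the Poisson kernel $\sum_k r^k M(d,k)P_{k,d}(\langle u,\cdot\rangle)$ to obtain
\[
\Bigl\|\tfrac1n\sum_i\tfrac{1-r^2}{(1+r^2-2r\langle u,A_i\rangle)^{d/2}}-\tfrac1n\sum_i\tfrac{1-r^2}{(1+r^2-2r\langle u,B_i\rangle)^{d/2}}\Bigr\|_{L^2(u)}^2
\;\le\;\delta^2\max_{\text{even }k\ge 2}\bigl(r^{2k}\lambda_k^{-2}\bigr).
\]
The left side is a concrete double sum $\frac{2}{n^2}\bigl[\sum f(\langle A_i,A_j\rangle)+\sum f(\langle B_i,B_j\rangle)-2\sum f(\langle A_i,B_j\rangle)\bigr]$ with $f$ the Poisson kernel at parameter $r^2$. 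The diagonal terms give $\gtrsim N$ once $1-r^2=N^{-1/(d-1)}$; the cross term is the new work beyond Bourgain and is bounded by a \emph{dyadic level-set argument}: for each $A_i$, partition $\{B_j\}$ by distance shells $2^{k-1}\eta\le\|A_i-B_j\|<2^k\eta$, use that the spherical cap of radius $2^k\eta$ contains $O((2^k)^{d-1})$ separated points while $f$ decays like $(1-t)^{-d/2}$, and sum a geometric series. This yields the cross term $\le\frac12\cdot(\text{diagonal})$, so the left side is $\ge 1$, and optimizing $r$ in $\max_k r^{2k}\lambda_k^{-2}$ (attained at $k\sim(1-r^2)^{-1}$) gives $\delta\gtrsim N^{-(d+2p)/(2(d-1))}\asymp\eps$.

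Why this matters for your proposal: your frame claim---that an arbitrary $\pm$ combination of $(1/K)$-bumps retains a constant fraction of its $L^2$ mass at degree $\le O(K)$---is not obvious and may be delicate. The bumps are orthogonal in position, but their low-degree projections $P_{\le K}\psi_j$ are highly correlated (each is spread over the whole sphere), so the naive bound $\|P_{>CK}g\|_2^2\le |D|\cdot a^2\sum_{j\in D}\|P_{>CK}\psi_j\|_2^2$ loses a factor of $N$. The Poisson-kernel route sidesteps any such localization: it replaces the hard spectral cutoff $P_{\le K}$ by the soft weighting $r^k$, for which the $L^2$ norm is an \emph{exact} closed-form double sum, and then exploits the \emph{geometric} separation of the underlying points rather than a spectral property of the bump system. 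If you want to keep your bump framework, the quickest fix is to replace bumps by Dirac masses at the net points and invoke the Poisson identity directly; the level-set cross-term bound then substitutes for your missing frame inequality.
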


For every integer $p$, we obtain an $\tilde{O}\big (\eps^{-\frac{2(d-1)}{d+2p}} \big )$ upper bound, matching the lower bound up to logarithmic factors.

\begin{theorem}[Informal]
\label{thm:upper_bound_informal}
Suppose that $A \in \R^{n \times d}$ and $p$ is a positive integer. There is a polynomial-time algorithm that maintains a data structure using  $\tilde{O}(\eps^{-\frac{2(d - 1)}{d + 2p}})$ words of space, which solves the $\ell_p$-subspace sketch problem. 
\end{theorem}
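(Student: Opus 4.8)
The data structure I would build is a reweighted subset of the rows of $A$: the goal is to compute in polynomial time a set $S\subseteq[n]$ with $\abs{S}=\tilde O\big(\eps^{-\frac{2(d-1)}{d+2p}}\big)$ and weights $w_i\ge 0$ so that $\sum_{i\in S}w_i\,\abs{\inner{a_i,x}}^p=(1\pm\eps)\,\norm{Ax}_p^p$ for every $x\in\R^d$; a query $x$ is answered by direct evaluation, and the scaled rows are stored in $\tilde O(\abs{S})$ words. By $p$-homogeneity it suffices to enforce this for $x\in S^{d-1}$. The first step is to pass to a canonical position: using $\ell_p$ Lewis weights (equivalently, John's theorem applied to the body $\{x:\norm{Ax}_p\le1\}$) we find an invertible $T\in\R^{d\times d}$ after which, writing $u_i=a_i/\norm{a_i}_2$ and $m_i=\norm{a_i}_2^p$, the measure $\mu=\sum_i m_i\,\delta_{u_i}$ on $S^{d-1}$ satisfies $\int_{S^{d-1}}\abs{\inner{u,x}}^p\,d\mu(u)=\Theta_d(1)$ for all $x\in S^{d-1}$ and moreover no cap of radius $\delta$ carries more than an $O(\delta^{d-1})$ fraction of the total mass; since $d=O(1)$ the $\poly(d)$ distortion incurred is harmless. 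The task is now a \emph{cubature} problem: approximate $\Lambda(F)=\int F\,d\mu$ on the $d$-parameter test family $\gF=\{\,x\mapsto\abs{\inner{u,x}}^p : u\in S^{d-1}\,\}$ by $\Lambda'(F)=\int F\,d\nu$ for a discrete $\nu$ supported on at most $N$ of the $u_i$, to relative error $\eps$ uniformly over $x\in S^{d-1}$. For even $p$ this is immediate: $\abs{t}^p=t^p$, each $F\in\gF$ is a degree-$p$ polynomial in $x$, and Tchakaloff/Carath\'eodory produce $\nu$ with $\binom{d+p-1}{p}+1=O(1)$ atoms and no $\eps$-dependence, recovering the known $O(1)$ bound \cite{schechtman:even_p,DBLP:journals/siamcomp/LiWW21}; so all the content is in odd $p$.

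\textbf{The quadrature.} The key structural fact is that, although $\abs{t}^p$ is not a polynomial for odd $p$, it is smooth away from $t=0$ and its Gegenbauer coefficients decay like $\ell^{-(p+1)}$; morally it behaves like a function of smoothness $p+\tfrac12$ on the sphere, and the optimal randomized $L_2$ cubature error for such a class with $N$ nodes on $S^{d-1}$ is $\Theta\big(N^{-\frac12-\frac{p+1/2}{d-1}}\big)=\Theta\big(N^{-\frac{d+2p}{2(d-1)}}\big)$; setting this $=\eps$ gives $N=\tilde O\big(\eps^{-\frac{2(d-1)}{d+2p}}\big)$, which is exactly the generalization of the zonotope exponent $\frac{2(d-1)}{d+2}$ of \cite{Mat96} (the case $p=1$) from $\abs{t}$ to $\abs{t}^p$ and matches the lower bound of Theorem~\ref{thm:lower_bound_informal} up to $\polylog(1/\eps)$ factors. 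To realize such a quadrature from the data I would use a \emph{cap decomposition}: partition $S^{d-1}$ into $M=\Theta(\delta^{-(d-1)})$ caps $C_1,\dots,C_M$ of angular radius $\Theta(\delta)$, with centers $v_j$ and $\delta=\tilde\Theta\big(\eps^{2/(d+2p)}\big)$ (so $M=\tilde\Theta(N)$); within a cap $C_j$, reduce the atoms of $\mu|_{C_j}$ to $O(1)$ weighted atoms by matching, via a Carath\'eodory step on an $O(1)$-dimensional moment body, the moments of $\mu|_{C_j}$ in local tangent coordinates up to degree $D_0=O(d+p)=O(1)$, then draw $\tilde O(1)$ independent samples per cap to suppress the residual. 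Since $\theta\mapsto\inner{u,x}$ is analytic in the tangent coordinate $\theta$ and $\abs{t}^p$ agrees with a degree-$p$ polynomial on each side of $t=0$, the moment-matching step is essentially exact away from the ``crease'' $\{\inner{v_j,x}=0\}$, and near the crease it leaves a residual of target $L_2$-scale $\delta^{p+1/2}$ per unit mass $m(C_j):=\mu(C_j)$ (coming from integrating $t^{2p}$ over an interval of length $\delta$). Summing these residuals and the sampling variances over all caps --- they are independent across caps, so they add in $L_2$, and $\sum_j m(C_j)^2\lesssim\max_j m(C_j)\lesssim\delta^{d-1}$ by the Lewis normalization --- yields, via Bernstein's inequality, total error $\tilde O\big(\delta^{(d+2p)/2}\big)\cdot\norm{Ax}_p^p=\tilde O(\eps)\,\norm{Ax}_p^p$ for each fixed $x$.

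\textbf{Uniformity and wrap-up.} To promote ``for each fixed $x$'' to ``for all $x\in S^{d-1}$ simultaneously'', I would take an $\eps^{O(1)}$-net of $S^{d-1}$ of size $\eps^{-O(d)}$, run the fixed-$x$ estimate with its failure probability beaten down by a constant-factor increase in the number of samples, and union-bound; the off-net error is controlled because every $F\in\gF$ is $O(1)$-Lipschitz after the normalization. (A chaining bound, using that $\gF$ is a family of degree-$O(p)$ semialgebraic functions of $d$ variables and hence of pseudo-dimension $O(1)$, would give the same conclusion more cleanly and with slightly fewer logarithmic factors.) All pieces --- computing Lewis weights, solving the $O(1)$-size per-cap moment programs, sampling, undoing the fixed linear map $T$, and evaluating queries --- run in polynomial time, and the structure stores the $\tilde O(N)$ selected rows and their weights in $\tilde O(N)$ words; this is the formal content of Theorem~\ref{thm:coreset_multiplicative}.

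\textbf{Main obstacle.} The crux, and the place where the argument must work hardest, is the crease analysis for odd $p$. Near $\{\inner{u,x}=0\}$ the function $\abs{t}^p$ is only $C^{p-1}$, so no amount of moment-matching makes the per-cap error exact there; one must instead show (i) that the caps whose crease meets a given query direction contribute only a residual of $L_2$-scale $\delta^{p+1/2}$, not the naive worst case $\delta^p$, per unit mass, and (ii) --- the genuinely delicate step --- that these residuals, summed over the $\Theta(\delta^{-(d-1)})$ caps and controlled \emph{uniformly} over all $x\in S^{d-1}$ rather than merely over a net, still telescope to $o(\eps)\norm{Ax}_p^p$. It is precisely this estimate, which requires the Gegenbauer/smoothness input on $\abs{t}^p$ rather than a crude Taylor bound, that forces the exponent to be $\frac{2(d-1)}{d+2p}$ and not something larger; everything else is a careful but routine adaptation of the zonotope-approximation machinery of \cite{Mat96} and its precursors to arbitrary data matrices, to the data-structure (coreset) setting, and to the smoother kernels $\abs{t}^p$.
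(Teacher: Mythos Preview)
Your high-level architecture --- John/Lewis normalization, a cap (or cell) decomposition at scale $\delta=\tilde\Theta(\eps^{2/(d+2p)})$, per-cap moment matching via Carath\'eodory, Bernstein across caps, then a net --- is exactly the paper's route; this is the formal content of Lemma~\ref{lem:coreset} and Theorem~\ref{thm:coreset_multiplicative}. Two points of comparison are worth making.

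First, the moment-matching step is cleaner than you present it. Rather than ``moments in local tangent coordinates up to degree $O(d+p)$,'' the paper uses the tensor identity $\inner{u,x}^p=\inner{u^{\otimes p},x^{\otimes p}}$ for integer $p$: one applies Carath\'eodory (Lemma~\ref{lem:weights}) directly in $\R^{d^p}$ to the points $u^{\otimes p}$. This makes the away-from-crease contribution \emph{exactly} zero, not ``essentially exact,'' because on a cap where all $\inner{u_i,x}$ have the same sign one has $\sum w_i\abs{\inner{u_i,x}}^p=\pm\inner{x^{\otimes p},\sum w_i u_i^{\otimes p}}$.

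Second --- and this is the substantive correction --- the ``main obstacle'' you flag is not an obstacle at all, and no Gegenbauer/smoothness input is required. For a cap of diameter $O(\eta)$ intersected by the equator $\{\inner{\cdot,x}=0\}$ one simply uses the crude bound $\abs{\inner{u,x}}^p\le \eta^p$; since each cap carries mass $O(1/N)$ (after normalization) and only $O(N^{1-1/(d-1)})$ caps meet the equator (Lemma~\ref{lem:partition}), Bernstein gives an error of order $N^{-1-p/(d-1)}\cdot N^{(1-1/(d-1))/2}\sqrt{\log N}=N^{-(d+2p)/(2(d-1))}\sqrt{\log N}$. The exponent $\frac{2(d-1)}{d+2p}$ thus falls out of elementary counting --- the entire improvement over $p=1$ comes from replacing $\eta$ by $\eta^p$ in the $L_\infty$ bound on crease terms --- and your $\delta^{p+1/2}$ $L_2$-scale (which you yourself derive from integrating $t^{2p}$, not from any harmonic analysis) would work but is already more than what is needed. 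The spherical-harmonic machinery is used only for the \emph{lower} bound in this paper.
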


Moreover, we show the upper bound above can be implemented in a single pass row-arrival stream, with an additional multiplicative $\poly(\log \log n)$ factor and an additive $\poly(\log n)$ cost in the memory.

\begin{theorem}[Informal]
\label{thm:streaming_informal}
Let $A = a_1 \circ \cdots \circ a_n$ be a stream of $n$ rows, where $a_i \in \R^{1 \times d}$. There is an algorithm that maintains a data structure in $\tilde{O}(\eps^{-\frac{2(d - 1)}{d + 2p}})$ words of space, which solves the $\ell_p$-subspace sketch problem.
Moreover, the algorithm can be implemented in $\tilde{O}(\eps^{-\frac{2(d - 1)}{d + 2p}} + \log^{\frac{3d + 2p - 2}{d + 2p}} n)$ words of space. 
\end{theorem}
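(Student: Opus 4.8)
The plan is to convert the offline coreset of Theorem~\ref{thm:upper_bound_informal} (equivalently, the weighted row-subset coreset of Theorem~\ref{thm:coreset_multiplicative}) into a one-pass algorithm via the merge-and-reduce paradigm, and then to shave the space overhead of the generic construction from a multiplicative $\poly(\log n)$ down to a multiplicative $\poly(\log\log n)$ plus an additive $\poly(\log n)$. Write $\alpha=\frac{2(d-1)}{d+2p}$, so that the offline routine, given a weighted row-set in $\R^d$ and an accuracy $\delta$, returns in polynomial time a weighted subset of $\tilde O(\delta^{-\alpha})$ of those rows approximating the weighted $\ell_p$-objective to $(1\pm\delta)$ for all $x$; since a $w$-weighted row $a$ is just the unweighted row $w^{1/p}a$ for this objective, weighted inputs need no special treatment beyond rounding weights to $O(\log n)$ bits.

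Step one is to isolate the \emph{mergeability} of these coresets, which is what makes merge-and-reduce lose only polylogarithmically. Because $\|Ax\|_p^p=\sum_i|a_i^\top x|^p$ is a sum of \emph{nonnegative} per-row terms, if $C_1,\dots,C_k$ are $\eps_1,\dots,\eps_k$-coresets for disjoint row-blocks $A_1,\dots,A_k$ then $\sum_j\mathrm{val}_{C_j}(x)\in(1\pm\max_j\eps_j)\|(A_1\circ\cdots\circ A_k)x\|_p^p$, since after normalizing by $\|(A_1\circ\cdots\circ A_k)x\|_p^p$ the left side is a convex combination of the numbers $1\pm\eps_j$. Hence applying the offline routine with accuracy $\delta$ to $C_1\cup\cdots\cup C_k$ produces a $(\delta+\max_j\eps_j+\delta\max_j\eps_j)$-coreset for the union. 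Consequently the relative error in a merge-and-reduce tree compounds only along root-to-leaf paths, so a depth-$L$ tree whose reduces all run at accuracy $\Theta(\eps/L)$ outputs an $\eps$-coreset, and if only $O(L)$ partial coresets are alive at once the working space is $\tilde O\!\big(L\,(\eps/L)^{-\alpha}\big)=\tilde O(\eps^{-\alpha}L^{\alpha+1})$.

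With constant-size blocks this already gives the clean bound $\tilde O(\eps^{-\alpha}\log^{\alpha+1}n)=\tilde O\!\big(\eps^{-\alpha}\log^{(3d+2p-2)/(d+2p)}n\big)$, i.e.\ a multiplicative $\poly(\log n)$ overhead. To turn the $\log^{\alpha+1}n$ factor into an \emph{additive} term and leave only a multiplicative $\poly(\log\log n)$, I would (i) maintain a constant-factor running estimate of the total $\ell_p$-mass and keep verbatim the $\tilde O(\log^{\alpha+1}n)$ heaviest rows --- those whose own mass exceeds a $1/\log^{\alpha+1}n$ fraction of the total --- which is the additive cost; and (ii) route the remaining ``light'' rows through a merge-and-reduce over $\poly(\log n)$ contiguous blocks, so the tree has depth only $L=O(\log\log n)$ and each reduce runs at accuracy $\Theta(\eps/\log\log n)$, yielding $\tilde O(\eps^{-\alpha})\cdot\poly(\log\log n)$ for the $O(\log\log n)$ simultaneously-alive coresets. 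A block of $\poly(\log n)$ rows is summarized by buffering it verbatim, which is already paid for by the additive term. The role of the ``light'' restriction is to make the coarse leaf-level summaries suffice, and the $\poly(\log\log n)$ factor also absorbs the $O(\log\log n)$ recalibrations of the mass estimate as the stream grows.

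The crux --- and the main obstacle --- is precisely this space--accuracy trade-off: vanilla merge-and-reduce is immediate but costs a multiplicative $\polylog n$, and compressing that to ``multiplicative $\poly(\log\log n)$ plus additive $\poly(\log n)$'' requires showing that the heavy rows number only $\tilde O(\log^{\alpha+1}n)$ and that, once they are removed, a \emph{shallow} reduce-tree over the light rows still composes to error $\le\eps$ (via the mergeability lemma), all while controlling the drift of the running mass estimate and the $O(\log n)$-bit precision of the accumulated multiplicative weights. Everything else is routine: the polynomial running time is inherited from the offline routine of Theorem~\ref{thm:coreset_multiplicative}, the constant success probability is boosted by independent repetition, and a query is answered simply by evaluating the weighted $\ell_p$-objective on the surviving $\tilde O(\eps^{-\alpha})$ rows.
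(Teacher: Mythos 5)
Your first half is on target: the merge-and-reduce skeleton, the observation that $\|Ax\|_p^p$ is a nonnegative sum so coresets of disjoint blocks add, and the resulting $\tilde O(\eps^{-\alpha}\log^{\alpha+1}n)$ bound with $\alpha=\frac{2(d-1)}{d+2p}$ all match the paper (Algorithm~\ref{alg:1}, Lemma~\ref{lem:merge_and_reduce}, Theorem~\ref{thm:merge_and_reduce}). The second half --- converting the multiplicative $\polylog n$ into ``multiplicative $\poly(\log\log n)$ plus additive $\log^{\alpha+1}n$'' --- has two genuine gaps.

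First, the heavy/light split is keyed to the wrong quantity. A row's ``own $\ell_p$-mass'' (whatever scalar you mean by that) does not control how much it can affect $\|Ax\|_p^p$: a row with tiny norm can still be the \emph{only} row with a nonzero component along some direction $x$, in which case it carries essentially all of $\|Ax\|_p^p$ for that $x$ and cannot be safely lumped into a uniformly-buffered ``light'' pool. The relevant notion of importance for a multiplicative guarantee is the $\ell_p$-sensitivity $s_i(A)=\sup_x\frac{|\langle a_i,x\rangle|^p}{\|Ax\|_p^p}$, which depends on the whole matrix and cannot be read off a single row. This is exactly why the paper runs a coarse copy of the merge-and-reduce ($\mathsf{ALG}_1$ with $\eps=O(1)$) in parallel: it maintains a constant-factor subspace sketch of the prefix $A_{t-1}$ from which a $\poly(d)$-approximation to the \emph{online} sensitivity $s_t^{\OL}(A)$ can be computed via a well-conditioned basis (Theorem~\ref{thm:well-conditioned-basis}). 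Rows are then sampled with probability proportional to that estimated sensitivity and fed to a fine copy $\mathsf{ALG}_2$; the sum of online sensitivities is $O(\poly(d)\log^{O(1)}n)$ (Lemma~\ref{lem:bound_of_sensitivity}), so the sampled stream has length $\poly(1/\eps)\cdot\polylog n$, and running merge-and-reduce on \emph{that} stream gives the $\poly(\log\log n)$ factor. The additive $\log^{(3d+2p-2)/(d+2p)}n=\log^{\alpha+1}n$ term is precisely the footprint of the coarse instance $\mathsf{ALG}_1$ (blocks of size $\tilde O(\log^\alpha n)$, $\log n$ of them), not a count of heavy rows.

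Second, the shallow-tree claim is internally inconsistent. You say the light rows go through a merge-and-reduce ``over $\poly(\log n)$ contiguous blocks'' of depth $O(\log\log n)$, and also that a block of ``$\poly(\log n)$ rows'' is buffered verbatim. If there are $\poly(\log n)$ blocks covering $n$ light rows, each block has $n/\poly(\log n)$ rows and cannot be buffered in $\poly(\log n)$ space; if instead each block has $\poly(\log n)$ rows, there are $n/\poly(\log n)$ blocks and a binary merge-and-reduce over them still has depth $\Theta(\log n)$. To get depth $O(\log\log n)$ over a length-$n$ stream you would need branching factor $n^{1/\log\log n}$, not $\poly(\log n)$. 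The paper sidesteps this entirely: sensitivity sampling \emph{shortens the stream} before merge-and-reduce is applied, rather than fattening the tree.

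In short: your mergeability lemma and the $\polylog n$-overhead bound are correct and match the paper, but the final reduction to $\poly(\log\log n)$ overhead requires online $\ell_p$-sensitivity sampling driven by a parallel constant-accuracy sketch, not a per-row heavy/light split with a shallow reduce tree.
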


We also obtain an $O(1)$-update time algorithm with a slightly worse $\tilde{O}(\eps^{-\frac{2(d - 1)}{d + 2p - 1}})$ words of space bound for general $d = O(1)$ and a tight space bound of  $\tilde{O}(\eps^{-\frac{2(d - 1)}{d + 2p}})$ for $d \le 2p + 2$.

\begin{theorem}[Informal]
Let $A = a_1 \circ \cdots \circ a_n$ be a stream of $n$ rows, where $a_i \in \R^{1 \times d}$. There is an algorithm which maintains a data structure $Q$ of $\tilde{O}(\eps^{-\frac{2(d - 1)}{d + 2p - 1}})$ words of space which solves the $\ell_p$-subspace sketch problem.
Moreover, the algorithm updates $Q$ in $O(1)$ time and  can be implemented using  $\tilde{O}(\eps^{-\frac{2(d - 1)}{d + 2p - 1}})$ words of space.

When $d\leq 2p+2$, the size of $Q$ can be improved to $\tilde{O}(\eps^{-\frac{2(d - 1)}{d + 2p}})$ words of space and the whole algorithm can be implemented in $\tilde{O}(\eps^{-\frac{2(d - 1)}{d + 2p}})$ words of space.
\end{theorem}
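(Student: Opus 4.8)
The plan is to obtain $O(1)$ worst‑case update time by replacing the data‑dependent sampling used in the proof of Theorem~\ref{thm:streaming_informal} with a \emph{fixed}, data‑oblivious partition of the sphere of row directions, whose cells can be located in $O(1)$ time. Write $a_i=\|a_i\|_2\,u_i$ with $u_i\in S^{d-1}$ and $v_i=\|a_i\|_2^p$, so that $\|Ax\|_p^p=\sum_i v_i|u_i^\top x|^p$. Fix once and for all a partition of $S^{d-1}$ into $N$ cells of Euclidean diameter $\sigma\asymp N^{-1/(d-1)}$ for which the cell of a given unit vector is computable with $O(1)$ arithmetic operations (e.g.\ the radial pullback of an axis grid on a circumscribed cube). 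For each cell $C$ the structure maintains the running moment tensors $\sum_{i:\,u_i\in C}v_i u_i^{\otimes j}$ for $j=0,\dots,p-1$ (that is $O(d^p)=O(1)$ words, updated by a single tensor addition), together with a $v$‑weighted reservoir sample of $s=\polylog(1/\eps)$ of its rows (updated in $O(1)$ time). On a query $x$ -- which is allowed to be slow -- we output, summed over cells, the degree‑$(p-1)$ Taylor polynomial of $t\mapsto|t|^p$ about $\rho_C^\top x$ (where $\rho_C$, the $v$‑weighted centroid of $C$'s directions, is read off from the moments) evaluated at the $u_i^\top x$, plus the unbiased reservoir estimate of the Taylor residual.

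For the analysis I would first reduce, using the preconditioning behind Theorem~\ref{thm:streaming_informal} -- which for constant $d$ only touches the $d\times d$ Gram matrix and is $O(1)$ per row -- to the case $\|Ax\|_p^p\asymp 1$ for all unit $x$, so that the required relative $(1\pm\eps)$ accuracy becomes additive $O(\eps)$ accuracy. The per‑cell estimate is unbiased, so the whole error is variance: the order‑$(p-1)$ Taylor residual of a row is $O(\sigma^p)$ per unit mass, and summing the cell variances gives $O(\sigma^{2p+d-1}/s)$ once the cell masses are known to be spread, which is arranged by peeling off the $O(1)$ cells carrying an atypically large share of the mass and recursing on them in dimension $d-1$. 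A net over $S^{d-1}$ of granularity $\eps$ together with Bernstein's inequality upgrades the pointwise bound to all $x$ simultaneously. Choosing $\sigma=\eps^{2/(d+2p-1)}$ balances the space $N\cdot(s+O(d^p))$ against the error constraint and yields $\tilde O(\eps^{-\frac{2(d-1)}{d+2p-1}})$ words; since each update is a constant number of tensor additions, a reservoir update, and a Gram‑matrix update, the update time is $O(1)$ and the whole algorithm runs in that much space.

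For the sharper bound when $d\le 2p+2$ I would instead store moment tensors up to order $p$. Because $|t|^p$ coincides with a polynomial of degree exactly $p$ on each of the two half‑lines, the order‑$p$ Taylor expansion about $\rho_C^\top x$ reproduces $|u_i^\top x|^p$ \emph{exactly} for every row lying on the same side of the equator of $x$ as $\rho_C$; a residual, still $O(\sigma^p)$ per unit mass, survives only for cells within angular distance $\sigma$ of that equator, a band of measure $O(\sigma)$. The relevant cell‑variance sum then improves to $O(\sigma^{2p+d}/s)$, and rebalancing gives $\sigma=\eps^{2/(d+2p)}$ and total space $\tilde O(\eps^{-\frac{2(d-1)}{d+2p}})$. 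The hypothesis $d\le 2p+2$ is precisely what makes this rebalancing close -- it keeps the equatorial residual, after the heavy cells have been peeled, a lower‑order contribution compared with $\|Ax\|_p^p\asymp 1$ -- and the algorithm still fits in that space because the Gram matrix and the $O(d^p)$ per‑cell tensors are only $O(1)$ overhead.

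The step I expect to be the main obstacle is the uniform‑in‑$x$ guarantee against a partition we are not allowed to re‑choose: for a single $x$ the reservoir estimator is easy, but $\sup_{x\in S^{d-1}}$ is sensitive to how the mass of $A$ distributes among the (fixed) cells, and the real content is showing that preconditioning together with the recursive peeling of heavy cells forces $\sum_C W_C^2$ (with $W_C$ the cell mass) to be small \emph{uniformly over which great subsphere plays the role of the equator}, all while never re‑partitioning so that the per‑row cost stays $O(1)$.
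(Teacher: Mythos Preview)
Your high-level plan---fix a data-oblivious partition of $\bS^{d-1}$ into cells of diameter $\sigma$, maintain per-cell tensor moments plus a reservoir sample, and answer queries by combining the moment contribution with a sampled residual---is the same skeleton the paper uses in Section~5.4. The genuine gap is in how you propose to deal with heavy cells, and this is exactly where the paper's main idea sits.

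The paper does \emph{not} ``peel off the $O(1)$ heavy cells and recurse in dimension $d-1$.'' It performs an online split: whenever a region accumulates more than $\eta^{d-1}n$ points, that region is closed and a fresh copy (same geometric extent) is opened for subsequent arrivals. This caps every region at $\eta^{d-1}n$ points while creating at most $O(\eta^{-(d-1)})$ extra regions, and is trivially $O(1)$ time per update. Your peeling proposal fails on two counts. First, there need not be only $O(1)$ heavy cells: if the rows concentrate on a great subsphere $\bS^{d-2}\subset\bS^{d-1}$, then $\Theta(\sigma^{-(d-2)})$ cells are simultaneously heavy. Second, ``recursing in dimension $d-1$'' on a heavy cell is not well-defined here---the query $x$ still ranges over all of $\bS^{d-1}$, and $|\langle u_i,x\rangle|^p$ does not reduce to a lower-dimensional inner product just because the $u_i$ are clustered.

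Your explanation of why $d\le 2p+2$ is needed is also off. In the paper the sharper bound comes from splitting overfull regions into \emph{geometrically smaller} sub-regions (halving the diameter) rather than into same-size copies. At level $i$ (diameter $\eta/2^i$) the equator meets $|I_i|=O((\eta/2^i)^{-(d-2)})$ regions, each contributing variance at most $(\eta^{d-1}n)^2(\eta/2^i)^{2p}$; summing picks up a factor $2^{i(d-2-2p)}$, which is summable over $i$ exactly when $d\le 2p+2$. Your stated reason (``keeps the equatorial residual lower-order after peeling'') does not identify this geometric-series mechanism.

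Two smaller discrepancies. The paper stores only the single $p$-th tensor sum $\sum_{y\in R}y^{\otimes p}$ per region and uses it to recover $\sum_{y\in R}|\langle x,y\rangle|^p$ \emph{exactly} whenever the equator misses $R$; the reservoir sample (one point, not $\polylog(1/\eps)$ points) is consulted only for regions the equator crosses. Your order-$(p-1)$ Taylor scheme leaves an $O(\sigma^p)$ residual in \emph{every} cell, which is what forces you to control $\sum_C W_C^2$ globally and manufactures the heavy-cell difficulty you then struggle to resolve. Finally, the results actually proved in Section~5.4 are for additive error under $\|a_i\|_2=O(1)$; the multiplicative preconditioning you invoke (via the Gram matrix, which is not the John ellipsoid of the $\ell_p$ body) is not part of this section's argument.
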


To obtain a tight bound for the SVM point query problem mentioned above, we also study the following version of the affine $\ell_p$-subspace sketch problem, where 
\[
\Phi(x, b) = \sum_{i = 1}^n |b - \inner{A_i, x}| \;.
\]
for every $x \in \R^d$ and $b \in \R$. We show a tight space complexity of $\tilde{\Theta}(\eps^{-\frac{2d}{d + 2p + 1}})$.

\begin{theorem}[Informal]
\label{thm:affine_informal}
Suppose that $p\in [1,\infty)\setminus 2\Z$. Any data structure that solves the affine $\ell_p$-subspace sketch problem for dimension $d$ and accuracy parameter $\eps$ requires $\Omega(\eps^{-\frac{2d}{d + 2p + 1}})$ bits of space.

When $p$ is an integer, there is a polynomial-time algorithm that returns a data structure with $\tilde{O}(\eps^{-\frac{2d}{d + 2p + 1}})$ words of space, which solves the affine $\ell_p$ subspace sketch problem. 
\end{theorem}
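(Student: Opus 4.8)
The whole statement is the dimension-$(d{+}1)$ $\ell_p$-subspace sketch result in disguise. Writing $b-\inner{A_i,x}=\inner{(-A_i,1),(x,b)}$, the affine objective becomes a homogeneous one in $d{+}1$ variables:
\[
\Phi(x,b)=\sum_{i=1}^n\bigl|b-\inner{A_i,x}\bigr|^p=\bigl\|\tilde A\,(x,b)\bigr\|_p^p,\qquad \tilde A:=[\,-A\mid\mathbf 1\,]\in\R^{n\times(d+1)}.
\]
Thus the affine $\ell_p$-subspace sketch problem in dimension $d$ is exactly the ordinary $\ell_p$-subspace sketch problem in dimension $d{+}1$ restricted to matrices whose last column is the all-ones vector. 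Since $\tfrac{2((d+1)-1)}{(d+1)+2p}=\tfrac{2d}{d+2p+1}$, the two halves of the theorem match the two halves of the dimension-$(d{+}1)$ result, and the only things to verify are that the upper bound is inherited verbatim and that the lower bound survives the restriction to ``all-ones last column'' matrices.

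\emph{Upper bound.} Given $A\in\R^{n\times d}$ with $O(\log(nd))$-bit entries, form $\tilde A=[\,-A\mid\mathbf 1\,]$, which still has $O(\log(nd))$-bit entries, and invoke the dimension-$(d{+}1)$ construction: Theorem~\ref{thm:coreset_multiplicative} when $p$ is an odd integer, and the $O(1)$-word polynomial representation when $p$ is an even integer. On a query $(x,b)$ output the stored estimate of $\|\tilde A(x,b)\|_p^p$; by the guarantee this equals $(1\pm\eps)\Phi(x,b)$ simultaneously for all $(x,b)$. The space is $\tilde O(\eps^{-\frac{2d}{d+2p+1}})$ words and the construction runs in polynomial time, as required; no further work is needed here.

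\emph{Lower bound.} I would run the argument behind Theorem~\ref{thm:lower_bound} in dimension $d{+}1$, but arrange that the hard matrices have an all-ones last column, i.e.\ that they correspond to generating measures supported on the hyperplane $H=\{y\in\R^{d+1}:y_{d+1}=1\}$ rather than on the sphere $S^{d}$. Concretely: fix a smooth base probability measure $\mu_0$ on a bounded region $R\subset H$ and consider sign-indexed perturbations $\mu_\sigma=\mu_0+\eta\sum_j\sigma_j\phi_j$, where $\{\phi_j\}$ is a family of $\Theta(k^{d})$ localized bumps at spatial scale $1/k$ inside $R$ and $\sigma\in\{\pm1\}^{\Theta(k^d)}$, with $\eta$ small enough that every $\mu_\sigma\ge0$. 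Rounding the bump amplitudes and clearing denominators realizes each $\mu_\sigma$ by $N=\poly(1/\eps)$ weighted atoms of $O(\log(1/\eps))$-bit complexity, turning it into a legal affine instance $A^{(\sigma)}$ with $\Phi_{A^{(\sigma)}}(x,b)$ equal, up to a harmless additive $\poly(\eps)$, to $\int_R|b-\inner{a,x}|^p\,d\mu_\sigma(a)$. The same balance as in Theorem~\ref{thm:lower_bound}---trading the number of bumps, $\Theta(k^{d})$, against the largest frequency $k$ for which a size-$\eta$ perturbation is still ``$\eps$-visible'' on the query sphere $S^{d}$---forces $k\asymp\eps^{-2/(d+2p+1)}$, so the family has size $\Theta(k^{d})=\Theta(\eps^{-2d/(d+2p+1)})$. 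A correct data structure must distinguish every pair $\mu_\sigma,\mu_{\sigma'}$, since there is a query $(x,b)$ on which the two functions differ by more than $2\eps$ relatively, so by the usual encoding/Fano argument it needs $\Omega(\eps^{-2d/(d+2p+1)})$ bits. (If the hard instances of Theorem~\ref{thm:lower_bound} in dimension $d{+}1$ can already be realized with an all-ones last column, this collapses to a black-box reduction; in general one re-runs the construction as above.)

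\textbf{Main obstacle.} The delicate part is the lower bound, specifically justifying that confining the generating measure to a single hyperplane costs no polynomial factor: one needs a $d$-parameter bump family on a bounded piece of $H$ which, when probed by the full $(d{+}1)$-dimensional query $(x,b)$, is as rich as the unrestricted dimension-$(d{+}1)$ construction on $S^{d}$. This amounts to re-deriving the geometric--functional-analytic input of Theorem~\ref{thm:lower_bound} in a \emph{localized} form---the decay of the spherical-harmonic coefficients of $y\mapsto|\inner{z,y}|^p$ and the non-negativity budget for the perturbation amplitude, now on a cap of $S^{d}$ rather than globally---and checking that these still balance to the exponent $2/(d+2p+1)$. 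Once this estimate is in hand, the discretization into bounded-bit atoms, the passage from measures to matrices, and the counting step are routine.
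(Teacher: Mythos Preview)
Your upper bound is correct and is exactly the paper's argument: form $\tilde A=[-A\mid\mathbf 1]\in\R^{n\times(d+1)}$ and invoke Theorem~\ref{thm:coreset_multiplicative} in dimension $d{+}1$.

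For the lower bound your approach diverges from the paper's, and the paper's route is considerably more direct. You propose to build a fresh hard family via localized bump perturbations of a base measure on a flat patch of the hyperplane $\{y_{d+1}=1\}$, and you correctly flag that re-deriving the spherical-harmonic estimates in this localized setting is the main obstacle. The paper instead realizes exactly the option you mention parenthetically: it reuses the dimension-$(d{+}1)$ hard point set from Section~\ref{sec:lower_bound} almost verbatim. The two adjustments are (i) restrict the separated point set $S\subset\bS^{d}$ to a spherical cap $\{x\in\bS^{d}: x_{d+1}\ge (3/4)^{1/p}\}$, which has constant area and hence still contains $\Omega(N)$ of the points, and (ii) rescale each surviving point so its last coordinate equals $1$, which places it in a thin spherical shell $\{\alpha\le\norm{x}_2\le\beta\}$ with $\beta/\alpha$ a fixed constant close to $1$. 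Step (ii) is what forces the ``all-ones last column'' structure. To make the separation argument go through after rescaling, the paper proves a mild strengthening of Lemma~\ref{lem:min_error} (namely Lemma~\ref{lem:min_error_weight}) in which the points lie in a thin shell rather than exactly on $\bS^{d-1}$; the proof is the same spherical-harmonic computation with an extra bookkeeping of the constants $c_A,c_B$ coming from the now non-uniform radial weights, and the shell must be thin enough that $(c_A-c_B)^2$ does not swamp the signal. Once that lemma is in hand, the communication-complexity reduction from Section~\ref{sec:lower_bound} applies unchanged and yields $\Omega(\eps^{-2d/(d+2p+1)})$ bits.

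So your instinct that the lower bound ``should'' be the $(d{+}1)$-dimensional one restricted to the hyperplane is right, but the execution does not require a new localized bump construction; it requires only the observation that the existing hard points, after intersecting with a cap and rescaling the last coordinate, remain well-separated in angle and lie in a thin shell, together with the (short) check that Lemma~\ref{lem:min_error} survives the passage from sphere to thin shell.
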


Based on these results, we show that a tight space bound for the point query problem can be derived via a black box reduction with $p=1$.

\begin{theorem}[Informal]
\label{thm:svm_informal}
Any data structure which solves the $d$-dimensional point estimation problem for SVM requires $\Omega(\eps^{-\frac{2d}{d + 3}})$ bits of space.

Furthermore, there is an algorithm that maintains a data structure using $\tilde{O}(\eps^{-\frac{2d}{d + 3}})$ words of space, which solves the $d$-dimensional point query problem for SVM. The algorithm can be implemented using $\tilde{O}(\eps^{-\frac{2d}{d + 3}})$ words of space. 
\end{theorem}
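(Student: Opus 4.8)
The plan is to prove both halves by black-box reductions, in each direction, to the affine $\ell_1$-subspace sketch problem of Theorem~\ref{thm:affine_informal} with $p=1$, whose space complexity in dimension $d$ is exactly $\tilde\Theta(\eps^{-\frac{2d}{d+3}})$. The bridge in both directions is the identity $\max\{0,s\}=\tfrac12(s+|s|)$ applied to the hinge loss, which converts the nonlinear part of the SVM objective into an absolute-value (i.e.\ $\ell_1$) expression plus an exactly computable linear term.

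\emph{Upper bound.} Writing $s_i=1-y_i(\theta^T x_i+b)$ gives
\begin{equation*}
\sum_{i=1}^n \max\{0,s_i\}=\tfrac12\sum_{i=1}^n\bigl(1-y_i(\theta^Tx_i+b)\bigr)+\tfrac12\sum_{i=1}^n\bigl|1-y_i(\theta^Tx_i+b)\bigr|.
\end{equation*}
The first sum equals $\tfrac n2-\tfrac12\inner{\sum_i y_ix_i,\theta}-\tfrac b2\sum_i y_i$, so together with $\tfrac\lambda2\norm{(\theta,b)}_2^2$ it is computed exactly from $O(d)$ words ($n$, $\sum_i y_i$, $\sum_i y_ix_i$, $\lambda$). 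Since $y_i\in\{-1,1\}$, $|1-y_i(\theta^Tx_i+b)|=|y_i-\theta^Tx_i-b|$, which equals $|(1-b)-\inner{x_i,\theta}|$ when $y_i=1$ and $|(1+b)-\inner{-x_i,\theta}|$ when $y_i=-1$. Hence, partitioning the rows by label, it suffices to keep an affine $\ell_1$-subspace sketch of $\{x_i:y_i=1\}$ and one of $\{-x_i:y_i=-1\}$, each in dimension $d$; on a query $(\theta,b)$ one evaluates them at $(\theta,1-b)$ and $(\theta,1+b)$, adds the exact linear and regularization terms, and divides by $n$ (correct simultaneously for all queries by a union bound over the two sketches). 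Theorem~\ref{thm:affine_informal} ($p=1$) makes each sketch $\tilde O(\eps^{-\frac{2d}{d+3}})$ words. Summing two multiplicative $(1\pm\eps)$-approximations of nonnegative quantities that are $O(n)$ under the standard normalization (data and relevant queries of constant norm) gives additive $O(\eps)$ error for $\Phi$, so rescaling $\eps$ by a constant yields the stated bound. The streaming version follows by routing each incoming row to one of the two streaming affine $\ell_1$-subspace sketches according to its label and maintaining the three running sums exactly.

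\emph{Lower bound.} We reduce the $d$-dimensional affine $\ell_1$-subspace sketch problem to $d$-dimensional SVM point query. By a constant-factor rescaling of the hard instance behind Theorem~\ref{thm:affine_informal} ($p=1$) we may assume $\norm{A_i}_2=\Theta(1)$, $\Phi(x,b)=\sum_i|b-\inner{A_i,x}|=\Theta(n)$, and $|b-\inner{A_i,x}|\ge 1$ for every row $i$ and every query $(x,b)$ from which the lower bound extracts information. From $A$ build the SVM instance with the $2n$ points $(-A_i,+1)$ and $(-A_i,-1)$, $i=1,\dots,n$. Using $\max\{0,1-t\}+\max\{0,1+t\}=1+|t|$ whenever $|t|\ge 1$, the paired contribution of rows $\pm i$ at the SVM query $(\theta,b_{\mathrm{svm}})=(x,b)$ is exactly $1+|b-\inner{A_i,x}|$, so $\Phi_{\mathrm{svm}}((x,b))=\tfrac\lambda2\norm{(x,b)}_2^2+\tfrac12+\tfrac1{2n}\Phi(x,b)$. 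Since the reduction knows $\lambda$ and $\norm{(x,b)}_2^2$, an additive-$\eps$ estimate of $\Phi_{\mathrm{svm}}$ recovers $\Phi(x,b)$ up to additive $2n\eps$, hence a $(1\pm\Theta(\eps))$-approximation because $\Phi(x,b)=\Theta(n)$. Thus an $S$-bit SVM point-query data structure yields an $O(S)$-bit affine $\ell_1$-subspace sketch at accuracy $\Theta(\eps)$, and Theorem~\ref{thm:affine_informal} gives $S=\Omega(\eps^{-\frac{2d}{d+3}})$.

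\emph{Main obstacle.} The delicate step is the lower-bound reduction: $\max\{0,1-t\}+\max\{0,1+t\}$ reveals $|t|$ only when $|t|\ge 1$, so one must realize the hard affine $\ell_1$ instance at a fixed constant scale on which $|b-\inner{A_i,x}|\ge 1$ holds uniformly over all rows and over the bounded set of queries used by the lower bound — verifying this structural feature of the construction behind Theorem~\ref{thm:affine_informal} is the crux. A related but routine matter, present on both sides, is translating between the multiplicative guarantee of the affine $\ell_1$ sketch and the additive-error formulation of SVM point query; this is harmless precisely because the data points and the relevant queries have constant norm, which forces the hinge term to be $\Theta(n)$, but keeping track of the normalization constants is where care is needed.
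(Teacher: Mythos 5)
Your upper-bound reduction is correct, and it is a genuinely different (and arguably cleaner) route than the paper's. You decompose the hinge loss via $\max\{0,s\}=\tfrac12(s+|s|)$, compute the linear part exactly from a handful of running sums, and feed the absolute-value part into two affine $\ell_1$-subspace sketches, one per label; the additive guarantee then follows from the multiplicative guarantee because all data points and queries have constant norm. The paper instead re-opens the coreset proof of Lemma~\ref{lem:coreset} and verifies that the same partition/sampling argument works directly for the $\max\{0,\cdot\}$-loss (using that it equals a linear function on non-intersected groups and is dominated by $|\cdot|$ on intersected ones). Your version is fully black-box; the paper's version yields the stated streaming bound $\tilde O(\eps^{-2d/(d+3)})$ with no additive $\polylog n$ term because it first uniformly samples $O(1/\eps^2)$ points before running the coreset, a step your black-box route would also need to absorb the $\polylog n$ term coming from Theorem~\ref{thm:streaming}.

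The lower-bound reduction has a genuine gap which you name but do not close, and which in fact cannot be closed in the way you propose. Building one SVM instance from $\{(-A_i,+1),(-A_i,-1)\}_i$ and querying once computes $\sum_i\bigl(\max\{0,1-t_i\}+\max\{0,1+t_i\}\bigr)=\sum_i\max\{2,\,1+|t_i|\}$ with $t_i=b-\inner{A_i,x}$, which equals $\sum_i(1+|t_i|)$ only if $|t_i|\ge1$ for every $i$. That structural assumption does not hold for the hard instance behind Theorem~\ref{thm:lower_bound_affine}: there the points $A_i$ have $\|A_i\|_2=O(1)$ (some arbitrarily small) and the queries $(x,b)$ lie on $\bS^d$, so $|b-\inner{A_i,x}|$ ranges all the way down to $0$; no constant-factor rescaling of the $A_i$ or of the query can make $|b-\inner{A_i,x}|\ge1$ uniformly, since $b$ itself can be near $0$. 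The paper sidesteps this by using the unconditional identity $\max\{0,s\}+\max\{0,-s\}=|s|$: one evaluates the SVM point-estimation data structure \emph{twice}, on data $X$ at query $(\theta,1-b)$ and on $-X$ at query $(\theta,1+b)$ (equivalently, on a single instance $X$ at queries $(\theta,1-b)$ and $(-\theta,1+b)$), and sums, which recovers $\tfrac1n\sum_i|b-\theta^\top x_i|$ for every $(\theta,b)$ with no constraint on the sign or size of $t_i$. The factor-two overhead in space is absorbed in the $\Omega(\cdot)$. You should replace your single-query/label-pairing reduction by this two-evaluation one; otherwise the lower bound does not go through.
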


Lastly, we obtain results for non-integer $p$, giving algorithms with $o(\eps^{-2})$ words of space even for such $p$. The details are postponed to Appendix~\ref{sec:non-integer}.

\subsection{Our Techniques}
One of our key technical contributions is to connect the SVM point query problem to the $\ell_1$-subspace sketch problem and to use techniques for the latter from geometric functional analysis, which previously had not been considered in the context of the SVM problem~\cite{ABL+20}.
Throughout this section, we assume that $p\geq 1$ is not an even integer.

\subsubsection{Lower Bound}
The idea behind our lower bound for the subspace sketch problem is to give Alice one of $m = 2^{\Omega(n)}$ possible subsets $S_1, \ldots, S_m$ of $n/2$ points of a fixed set $S = \{p_1, \ldots, p_n\}$ of $n$ points on the unit sphere $\bS^{d-1}$,
where $\|p_i - p_j \|_2 \geq \eta$ for all $i \neq j$. Here we have $|S_i \cap S_j| \leq n/4$ for every pair $i \neq j$. If Alice has a specific subset $S_i$, she can send the subspace sketch of her set to Bob. Bob then pretends he has an $S_j$ and enumerates over all possible queries $x$, and by construction of our sets $S_i$, we will (abusing notation and writing a set as the matrix whose rows correspond to the entries in the set) have that $\abs{\|S_i x\|_p - \|S_j x\|_p}$ is larger than the tolerable subspace sketch error, and Bob will be able to determine that $i \neq j$. 

Our main novelty is Lemma \ref{lem:min_error}, which says that for two sets $A$ and $B$ of points on the sphere, each symmetric around the origin and such that no point in $A$ is close to any point in $B$, there is some direction $x$ on the sphere for which $|\|Ax\|_p^p - \|Bx\|_p^p|$ is large. 
The proof of this lemma is inspired by ideas from geometric functional analysis \cite{bourgain89}, which give lower bounds for the specific case when the data structure is an embedding for $p = 1$ and for constant $d$. Indeed, as in their proof, we make use of spherical harmonics. However, we require a significant strengthening of the arguments in \cite{bourgain89}. In particular, the lemma in \cite{bourgain89} can be seen as lower bounding $\max_x \|Ax\|_p^p - \min_x \|Ax\|_p^p$, which, after expanding $\norm{Ax}_p^p$ in a spherical harmonic series, boils down to lower bounding
\[
\sum_{i,j} f(\inner{A_i,A_j}),\quad \text{where} \quad f(t) = \frac{1-r^4}{(1 + r^4 - 2r^2 t)^{d/2}},
\]
where $r$ is a parameter to be determined. The lower bound of \cite{bourgain89} is obtained simply by considering only the terms for which ``$i=j$''.  In our case, we need to lower bound $\max_x (\norm{Ax}_p^p - \norm{Bx}_p^p) - \min_x (\norm{Ax}_p^p - \norm{Bx}_p^p)$, which reduces to lower bounding a more complicated quantity:
\begin{equation}\label{eqn:lb_intro_aux}
\sum_{i,j} f(\inner{A_i,A_j}) + \sum_{i,j} f(\inner{B_i,B_j}) - \sum_{i,j} f(\inner{A_i,B_j}).
\end{equation}
The first two terms can be lower bounded similarly by taking only the ``$i = j$" terms as before. However, there is a third term which causes additional complications since it requires a good upper bound. To do this, for each point $A_i$, we partition the points $B_j$ into level sets of geometrically increasing distances from $A_i$. The critical observation is that the number of points in each level set grows at a slower rate than the function $f$ decays. Hence, the contribution from each level set is geometrically decreasing and the total contribution for each fixed $A_i$ can thus be controlled. In the end, we are able to show that the third term in \eqref{eqn:lb_intro_aux} is at most a constant fraction of the first two terms, which leads to our desired lower bound.

\subsubsection{Upper Bound}
Our upper bound is inspired from an argument of Matousek~\cite{Mat96} for $p=1$. We give a high-level description of this idea, assuming first that each point $A_i\in \bS^{d-1}$. The points $A_i$ are partitioned into a number of groups $P_1,\dots,P_s$ each of diameter at most $\eta = \eps^{2/(d+2)}$ (Lemma~\ref{lem:partition}), then $\Theta(d)$ random points are chosen from each group, such that the barycenter of the randomly selected points is the same as the barycenter of the group (Lemma~\ref{lem:weights}). The data structure stores the selected points in each group as a surrogate for the group.

For a fixed query point $x$, each group $P_j$ belongs to one of three types, based on its relative position to the the equator $\{y: \inner{x,y} = 0\}$: positive type, if $\inner{A_i,x} > C\eta$ for all $A_i\in P_j$; negative type, if $\inner{A_i,x} < -C\eta$ for all $A_i\in P_j$; zero type, if $\abs{\inner{A_i,x}} \leq C\eta$ for all $A_i\in P_j$, where $C > 0$ is an absolute constant. When $P_j$ is of positive type, we have $\sum_{A_i\in P_j} \abs{\inner{A_i,x}} = \sum_{A_i\in P_j} \inner{A_i,x} = \inner{\sum_{A_i\in P_j} A_i,x}$, which can be calculated exactly without error from the sampled points, as guaranteed by the barycenter property. When $P_j$ is of negative type, the contribution can be calculated exactly in a similar manner. Next, consider the groups $P_j$ of zero type. Since all summands $\abs{\inner{A_i,x}}$ are small, a Bernstein bound shows that using randomly selected points can approximate the \emph{total} contribution from \emph{all} zero-type groups up to a small additive error with high probability. Taking a union bound over a net for query points $x$, the overall sum $\sum_i \abs{\inner{A_i, x}}$ can be estimated with a small additive error $\eps$ with high probability simultaneously for all $x\in \bS^{d-1}$.

Now, suppose that $p$ is an odd integer. 
The key observation is a ``tensor trick''
\[
\inner{x, y}^p = \inner{x^{\otimes p}, y^{\otimes p}},
\]
where $x^{\otimes p}$ and $y^{\otimes p}$ are $d^p$-dimensional vectors. Lemma~\ref{lem:weights} is then applied to a group of $d^p$-dimensional points, so $\Theta(d^p)$ points are selected randomly in each group and stored in the data structure. Another crucial observation is that in this way, all the error comes from the terms $|\inner{A_i, x}|^p$ such that $|\inner{A_i, x}| \le \eta$, which implies $|\inner{A_i, x}|^p \le \eta^p \ll \eta$. This allows for tighter concentration than what is possible for $p = 1$. 
We can therefore estimate $\sum_i \abs{\inner{A_i,x}}^p$ up to a small additive error $\eps$ for all $x\in\bS^{d-1}$, assuming that $\norm{A_i}_2 = O(1)$ for all $i$.

The procedure above can be generalized to estimate $\sum_i w_i\abs{\inner{A_i,x}}^p$ up to an additive error of $\eps\sum_i w_i$, where $w_i\geq 0$ is the weight associated with the point $A_i$. This requires that the random points selected from each group carry (new) weights such that the weighted barycenters are the same. This was already attained in Matousek's work (Lemma~\ref{lem:weights}).

In order to obtain a multiplicative error for a general matrix $A$, we perform a preprocessing step, which transforms the John ellipsoid of $\{x\in \R^d: \norm{Ax}_p \leq 1\}$ to the unit $\ell_2$ ball in $\R^d$ via a linear transformation, giving a matrix $A'$ for which we can show that $\norm{A'_i}_2 = O(1)$ and $\norm{A'x}_p^p= \Omega(\sum_i \norm{A_i'}_2^p)$. This is sufficient to deduce that the additive error to the normalized version of $A'$ is in fact a multiplicative error for $A$.

We remark that having a $p$-th power in $\abs{\inner{A_i,x}}^p$ is only useful when $\inner{A_i,x}$ is small, and improvements exploiting this benefit may not occur in other algorithms. 
For example, Matousek proposed another algorithm~\cite{Mat96} which removes the $\log(1/\eps)$ factors for $d \ge 5$, obtaining a clean $O(\eps^{-\frac{2(d-1)}{d+2}})$ bound. The analysis of this algorithm relies on the fact that the function $x\mapsto \abs{\inner{u,x}} - \abs{\inner{v,x}}$ is Lipschitz on the region where $\inner{u,x}$ and $\inner{v,x}$ have the same sign and the Lipschitz constant is proportional to $\norm{u-v}_2$. However, we cannot expect a substantially smaller Lipschitz constant for the function $x\mapsto \abs{\inner{u,x}}^p-\abs{\inner{v,x}}^p$. Interestingly, as we shall show in Appendix~\ref{sec:non-integer}, despite it  failing to give a tight bound for integer $p > 1$, this algorithm actually implies an $O\big(\eps^{-\frac{2(d - 1)}{d + 2}}\big)$ upper bound for all constant $p > 1$ once $d \ge 5$.

\subsubsection{Streaming Upper Bound} 

The preceding approach for the upper bound does not give a streaming algorithm since computing the groups $P_1,\dots,P_s$ to perform the partition requires storing all of the points. Nevertheless, it can be viewed as a coreset procedure which, given a set of weighted points $(A,w)$ (where $w$ is a vector in which $w_i$ is the weight for $A_i$), outputs a small subset $B\subseteq A$ of size $\tilde{O}(\eps^{-\frac{2(d-1)}{d+2p}})$ together with new weights $w'$ such that $\sum_j w_j' \abs{\inner{B_j,x}}^p  = (1\pm\eps)\sum_i w_i \abs{\inner{A_i,x}}^p$. The standard merge-and-reduce framework (see, e.g., \cite{BDM+20} in the context of numerical linear algebra) can then be applied, leading to a streaming algorithm using $\tilde{O}(\eps^{-\frac{2(d-1)}{d+2p}}\polylog(n/\eps))$ words of space. However, this memory bound depends on the product of a term involving $\eps$ and a term involving $\log n$. By instead running the algorithm with $\eps = \Theta(1)$ and using it to estimate the so-called $\ell_p$-sensitivities, according to which we can sample the points $A_i$, we can replace $n$ with $\poly(d/\eps) \log n$, and then run the merge-and-reduce framework on this new value of $n$, resulting in only a $\poly(\log \log n)$ factor multiplying the term depending on $\eps$, plus an additive $\poly(\log n)$ term. 

\subsubsection{Connection to SVM}
As mentioned, one of our key contributions is showing that the SVM point query problem can be related the $\ell_1$-subspace sketch problem via a black-box reduction. As shown in~\cite{ABL+20}, the function $\Phi$ for SVM can be modified to
\[
\Phi(\theta, b) = \frac{1}{n} \sum_{i=1}^n \max\{0, b - \theta^T x_i \} \;,
\]
without loss of generality.
Consider the special case when $b = 0$. Suppose that $X = \{x_i\}$ is the point set given by the data stream. Let $-X = \{-x: x\in X\}$ and observe that
\[
\underset{X}{\Phi(\theta, 0)} + \underset{-X}{\Phi(\theta, 0)} = \frac{1}{n}\sum_i \left(\max\{0, \theta^\top x_i\} + \max\{0, -\theta^\top x_i\}\right) = \frac{1}{n} \sum_i \left|\theta^\top x_i\right| \;,
\]
which means that a lower bound for the $d$-dimensional $\ell_1$-subspace actually yields a lower bound for the $d$-dimensional point query problem for SVM. To obtain a tight lower bound, we instead study a special affine version of the $\ell_p$-subspace sketch problem, where 
\[
\Phi(x, b) = \sum_{i = 1}^{n} |b - \inner{A_i, x}|
\]
for a given $x \in \R^d$ and $b \in \R$.

The key observation is the following. Given a matrix $A \in \R^{n \times d}$, let $B \in \R^{n \times (d + 1)}$ be the matrix in which the $i$-th row $B_i = \begin{pmatrix}A_i & -1\end{pmatrix}$. Suppose that $x \in \R^d$ and $b \in \R$ are the query vector and value, and let $y  = \begin{pmatrix}x^\top & b\end{pmatrix}^\top \in \R^{d + 1} $. Then
\[
\sum_i |\inner{A_i, x} - b|^p = \sum_i |\inner{B_i, y}|^p = \norm{By}_p^p.
\]
Hence, the $d$-dimensional affine $\ell_p$-subspace sketch is related to the $(d + 1)$-dimensional $\ell_p$-subspace sketch problem where for all points $A_i$, the last coordinate is $1$.
A closer examination of the lower bound for the $\ell_p$ subspace sketch problem reveals that the lower bound for $\max_x (\norm{Ax}_p^p - \norm{Bx}_p^p) - \min_x (\norm{Ax}_p^p - \norm{Bx}_p^p)$ continues to hold, up to a constant factor, even when $A$ and $B$ do not necessarily lie on $\bS^{d-1}$ but rather in a thin spherical shell, i.e., $\norm{A_i}_2, \norm{B_i}_2 = \Theta(1)$, provided that their projections on $\bS^{d-1}$, $A_i/\norm{A_i}_2$ and $B_i/\norm{B_i}_2$, are sufficiently separated from each other. Hence, the point set $S$ in the communication problem can now be chosen from the spherical cap $\{x\in\bS^{d}: x_{d+1} = \Omega(1)\}$ so that normalizing the last coordinate $x_{d+1}$ to $1$ yields a vector $x' = x/x_{d+1}$ lying inside the spherical shell. The same lower bound (up to a constant factor) for the $(d+1)$-dimensional $\ell_p$ sketch problem then follows. 

\section{Preliminaries} \label{sec:prelim}

\paragraph{Notation} Let $\bS^{d-1}$ denote the unit sphere in $\R^d$, i.e., $\bS^{d-1} = \{x\in \R^d: \norm{x}_2 = 1\}$, and let  $\simplex{d-1}$ denote the standard $(d-1)$-simplex, i.e., $\simplex{d-1} = \{x\in \R^d: x_1+\cdots+x_d = 1\text{ and }x_i\geq 0\text{ for all }i\}$.

For two matrices $A$ and $B$ of the same number of columns, we denote their vertical concatenation by $A\circ B$.

\paragraph{Spherical Harmonics} The spherical harmonics $\{Y_{k,j}\}_{k,j}$ form an orthonormal basis in $L^2(\bS^{d-1},\sigma_{d-1})$, where $\sigma_{d-1}(x)$ denotes the normalized rotationally-invariant measure on $\bS^{d - 1}$. Here $k\geq 0$ is an integer and $j = 1,\dots, M(d,k)$ for each $k$, where $M(d,k) = \binom{k+d-2}{d-2} + \binom{k+d-3}{d-2} = O(k^{d-2})$. The following are some useful properties of spherical harmonics (see, e.g.~\cite{atkinson12,DX13}).
\begin{itemize}
    \item (Addition Theorem) For all $x,y\in \bS^{d-1}$,
\begin{equation}
    \label{eq:1}
    \sum_{j} Y_{k, j}(x) Y_{k,j}(y) = M(d, k) P_{k,d}(\inner{x,y}),
\end{equation}
where $P_{k,d}(t)$ is the Legendre polynomial of degree $k$ in dimension $d$.
    \item (Funk-Hecke Formula) Suppose that $f:[-1,1]\to\R$ is a function and $y\in \bS^{d-1}$. For $h:\bS^{d-1}\to \R$ defined as $h(x)=f(\langle x,y\rangle)$, it holds that
    \begin{equation}
    \label{eq:funk-hecke}
    \inner{h, Y_{k, j}} = \int_{\bS^{d - 1}} h(x) Y_{k, j}(x) \, d\sigma_{d   -1}(x) = \lambda_k Y_{k, j}(y),
    \end{equation}
    where 
    \begin{equation}\label{eq:lambda_k}
    \lambda_k = \frac{\Gamma(\frac{d}{2})}{\sqrt{\pi}\Gamma(\frac{d-1}{2})}\int_{-1}^1 f(t)(1-t^2)^{\frac{d-3}{2}} P_{k,d}(t) dt.
    \end{equation}
    \item (Poisson Identity) It holds for all $0\leq r<1$ and all $t\in [-1,1]$ that
    \begin{equation}
        \label{eq:3}
        \frac{1 - r^2}{(1 + r^2 - 2rt)^{d/2}} = \sum_{k = 0}^{\infty} M(d, k) r^k P_{k,d}(t).
    \end{equation}
\end{itemize}

\begin{lemma}\label{lem:lambda_k}
Suppose that $d\geq 2$ and $p\geq 1$ are constants. Let $f(t) = |t|^p$ and $\lambda_k$ be as defined in~\eqref{eq:lambda_k}. It holds that (i) $\lambda_k = 0$ for odd $k$; (ii) when $p$ is not an even integer, $\lambda_k\neq 0$ for all even $k$ and
\[
|\lambda_k| \sim_{d,p} \frac{1}{k^{d/2+p}} \sin \frac{\pi p}{2}, \quad \text{even }k\to\infty;
\]
(iii) when $p$ is an even integer, $\lambda_k = 0$ for all $k > p$.
\end{lemma}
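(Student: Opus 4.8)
The plan is to treat the three parts separately, with (i) and (iii) being short and (ii) carrying all the work; the key to (ii) is to produce an exact closed form for $\lambda_k$ as a ratio of Gamma functions and then read off both the nonvanishing and the asymptotics.

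\textbf{Parts (i) and (iii).} Part (i) is immediate from parity: $P_{k,d}$ has the same parity as $k$, whereas $f(t)=|t|^p$ and $(1-t^2)^{(d-3)/2}$ are even, so for odd $k$ the integrand in \eqref{eq:lambda_k} is odd on $[-1,1]$ and the integral vanishes. For (iii), when $p$ is an even integer $f(t)=t^p$ is a polynomial of degree $p$; since $P_{k,d}$ (the normalized ultraspherical polynomial $C_k^{(d-2)/2}$, with weight exponent $(d-3)/2>-1$ because $d\geq 2$) is orthogonal to every polynomial of degree $<k$ with respect to the weight $(1-t^2)^{(d-3)/2}$, we get $\lambda_k=0$ whenever $k>p$.

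\textbf{Closed form for (ii).} Fix an even $k=2m$ and set $\lambda=(d-2)/2$. Write $P_{k,d}=C_{2m}^\lambda/C_{2m}^\lambda(1)$ and use the Rodrigues formula $(1-t^2)^{\lambda-1/2}C_{2m}^\lambda(t)=\kappa_{m,\lambda}\,\frac{d^{2m}}{dt^{2m}}\big[(1-t^2)^{2m+\lambda-1/2}\big]$ for an explicit constant $\kappa_{m,\lambda}$. For $p>2m-1$ (so that $|t|^p$ has enough classical derivatives and $\tfrac{d^{2m}}{dt^{2m}}|t|^p\in L^1_{\mathrm{loc}}$), integrating by parts $2m$ times — all boundary terms at $\pm1$ vanish because $(1-t^2)^{2m+\lambda-1/2}$ vanishes to order $>2m-1$ there — gives
\[
\int_{-1}^1 |t|^p(1-t^2)^{\lambda-1/2}C_{2m}^\lambda(t)\,dt \;=\; \kappa_{m,\lambda}\,\frac{\Gamma(p+1)}{\Gamma(p-2m+1)}\int_{-1}^1 |t|^{p-2m}(1-t^2)^{2m+\lambda-1/2}\,dt,
\]
and the last integral is the Beta integral $B\big(\tfrac{p-2m+1}{2},\,2m+\lambda+\tfrac12\big)$. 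Dividing by $C_{2m}^\lambda(1)$, applying the Legendre duplication formula to collapse $\Gamma\big(\tfrac{p-2m+1}{2}\big)/\Gamma(p-2m+1)$, and simplifying (here the factor $\Gamma(\lambda+\tfrac12)=\Gamma(\tfrac{d-1}{2})$ cancels against the prefactor of $\lambda_k$), one obtains, for $p>2m-1$,
\[
\lambda_k \;=\; \frac{\Gamma(d/2)\,\Gamma(p+1)}{2^{p}}\cdot\frac{1}{\Gamma\big(1-m+\tfrac p2\big)\,\Gamma\big(m+\tfrac{d+p}{2}\big)} .
\]
Both sides are analytic in $p$ on $(-1,\infty)$ — the right-hand side has no poles there since $1/\Gamma$ is entire — so by analytic continuation the identity holds for all real $p>-1$ and all integers $m\geq0$. (The same evaluation can alternatively be quoted from standard tables of Gegenbauer integrals.)

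\textbf{Asymptotics and nonvanishing.} By the reflection formula, $1/\Gamma\big(1-m+\tfrac p2\big)=\frac{(-1)^{m+1}}{\pi}\sin\!\big(\tfrac{\pi p}{2}\big)\,\Gamma\big(m-\tfrac p2\big)$, hence
\[
\lambda_k \;=\; \frac{(-1)^{m+1}\Gamma(d/2)\,\Gamma(p+1)}{2^{p}\pi}\,\sin\!\Big(\tfrac{\pi p}{2}\Big)\cdot\frac{\Gamma\big(m-\tfrac p2\big)}{\Gamma\big(m+\tfrac{d+p}{2}\big)}.
\]
When $p$ is not an even integer, $\sin(\pi p/2)\neq 0$ and the two remaining Gamma values are finite and nonzero, so $\lambda_k\neq 0$ for every even $k$; when $p$ is an even integer and $k=2m>p$, the argument $1-m+\tfrac p2$ is a non-positive integer, so $\lambda_k=0$, which re-derives (iii). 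Finally, Stirling's formula gives $\Gamma\big(m-\tfrac p2\big)/\Gamma\big(m+\tfrac{d+p}{2}\big)\sim m^{-(d/2+p)}$, and since $k=2m$ this yields $|\lambda_k|\sim_{d,p}\big|\sin(\tfrac{\pi p}{2})\big|\,k^{-(d/2+p)}$ as $k\to\infty$, as claimed.

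\textbf{Main obstacle.} The only genuine subtlety is that the Rodrigues-plus-integration-by-parts computation is valid only for $p$ large relative to $m$ (namely $p>2m-1$), since for smaller $p$ the $2m$-th derivative of $|t|^p$ ceases to be a locally integrable function and one would otherwise have to deal with Hadamard finite parts and possible $\delta$-derivative terms. The fix, as above, is to establish the clean closed form on that range of $p$ and then propagate it to all $p>-1$ by analyticity in $p$. Aside from that, the remaining work — tracking $\kappa_{m,\lambda}$, $C_{2m}^\lambda(1)$, and the duplication and reflection identities — is routine bookkeeping needed to reach the closed form and its Stirling asymptotics.
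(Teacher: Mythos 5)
Your proof is correct, and it arrives at the same closed form for $\lambda_k$ (as a ratio of Gamma functions) that the paper obtains, from which the nonvanishing and the $k^{-(d/2+p)}\sin(\pi p/2)$ asymptotics follow in exactly the same way. The genuine difference is how you get to that closed form: the paper quotes the tabulated integral (2.21.1.1) from Prudnikov et al., whereas you rederive it from scratch via the Rodrigues representation of $C_{2m}^\lambda$, $2m$-fold integration by parts, a Beta integral, and the duplication and reflection identities, with an analytic-continuation step to propagate the identity from $p>2m-1$ to all $p>-1$. Your version is more self-contained and makes transparent why the reflection-formula factor $\sin(\pi p/2)$ appears (and hence why even integer $p$ are special), at the cost of the extra bookkeeping in tracking the constants $\kappa_{m,\lambda}$ and $C_{2m}^\lambda(1)$; the paper's table lookup is shorter but less illuminating. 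One small point worth making explicit is why $\lambda_k$, as a function of $p$, is analytic on $\operatorname{Re}(p)>-1$: this follows because $|t|^p(1-t^2)^{(d-3)/2}P_{k,d}(t)$ is holomorphic in $p$ for each fixed $t\in(-1,1)\setminus\{0\}$ and the integral converges locally uniformly in $p$, so Morera applies. With that noted, the analytic-continuation step is airtight and the rest is routine, so the argument stands.
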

\begin{proof}
Since $P_{k,d}(t)$ is an odd function when $k$ is odd, it is clear that $\lambda_k = 0$ when $k$ is odd. We shall assume that $k$ is even in the rest of the proof. In this case $P_{k,d}$ is an even function.

Note that $P_{n,k}$ is the normalized Gegenbauer polynomial, $P_{k,d}(t) = C_k^{\alpha}(t)/C_k^{\alpha}(1)$ with $\alpha = d/2-1$. It is known that $C_k^\alpha(1) = (2\alpha)_k/(k!)$. Invoking the identity (2.21.1.1) from \cite{prudnikov}, we have that
\[
\lambda_k = c(d) (-1)^{k/2}\Gamma\left(\alpha+\frac{1}{2}\right)\frac{\Gamma(\frac{p+1}{2})}{\Gamma(1+\frac{p}{2}+\alpha+\frac{k}{2})}\left(\frac{-p}{2}\right)_{k/2} .
\]
Now it is clear that $\lambda_k=0$ when $p$ is an even integer and $k > p$, and $\lambda_k\neq 0$ for all even $k$ when $p$ is not an even integer. In the latter case, when $k > 0$,
\[
\lambda_k = c(d, p) (-1)^{k/2+1} \sin\left(\frac{\pi p}{2}\right)\frac{\Gamma(\frac{k-p}{2})}{\Gamma(\frac{k+d+p}{2})} 
 \sim_{d,p} (-1)^{k/2+1} \sin\left(\frac{\pi p}{2}\right) \frac{1}{k^{d/2+p}}. \qedhere
\]
\end{proof}

\paragraph{Volume of Spherical Caps} For a point $x\in \bS^{d-1}$ and $r > 0$, consider the spherical cap $C(x,r) = \{y\in \bS^{d-1}: \norm{x - y}_2 \leq r \}$. It is clear that $\sigma_{d-1}(C(x,r))$ is independent of $x$ and so we write it as $\sigma_{d-1}(\scap(r))$. We cite a result on the volume of spherical caps from~\cite{boroczky2003} as follows.
\begin{lemma}[{\cite[Lemma 3.1]{boroczky2003}}] \label{lem:spherical cap}
When $r^2 \leq 2(1-1/\sqrt{d+1})$, it holds that
\[
\kappa_d\frac{r^{d-1}}{1-r^2/2}\left(1 - \frac{r^2}{4}\right)^{\frac{d-1}{2}}\leq \sigma_{d-1}(\scap(r)) \leq \kappa_d\frac{r^{d-1}}{1-r^2/2}\left(1 - \frac{r^2}{4}\right)^{\frac{d-1}{2}},
\]
where $\kappa_d = \Gamma(\frac{d}{2})/(2\sqrt{\pi}\Gamma(\frac{d+1}{2}))$.
\end{lemma}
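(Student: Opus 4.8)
The plan is to reduce the two‑sided estimate for $\sigma_{d-1}(\scap(r))$ to a one‑dimensional integral and then recognize an essentially exact antiderivative; since the statement is quoted verbatim from \cite{boroczky2003} one could also just cite it, but here is how I would prove it directly. First I would disintegrate $\sigma_{d-1}$ along the latitude coordinate $t=\inner{x,y}$: for $x,y\in\bS^{d-1}$ one has $\norm{x-y}_2^2=2(1-t)$, so $y\in C(x,r)$ iff $t\ge 1-r^2/2$, and the marginal of $\sigma_{d-1}$ in $t$ has density proportional to $(1-t^2)^{(d-3)/2}$ on $[-1,1]$. This yields
\[
\sigma_{d-1}(\scap(r)) = \frac{\int_{1-r^2/2}^{1}(1-t^2)^{(d-3)/2}\,dt}{\int_{-1}^{1}(1-t^2)^{(d-3)/2}\,dt}.
\]
The denominator is the Beta integral $B(\tfrac12,\tfrac{d-1}{2})=\sqrt{\pi}\,\Gamma(\tfrac{d-1}{2})/\Gamma(\tfrac d2)$, and using $\Gamma(\tfrac{d+1}{2})=\tfrac{d-1}{2}\Gamma(\tfrac{d-1}{2})$ one checks its reciprocal equals exactly $(d-1)\kappa_d$. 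So the task reduces to estimating the numerator $N:=\int_{1-r^2/2}^{1}(1-t^2)^{(d-3)/2}\,dt$.

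Next I would substitute $s=1-t$ and write $h=r^2/2$ and $g(s)=s(2-s)=1-t^2$, so that $N=\int_0^h g(s)^{(d-3)/2}\,ds$. The key observation is that $g'(s)=2(1-s)$, which makes $g^{(d-1)/2}$ an almost-antiderivative of the integrand:
\[
\frac{d}{ds}\,g(s)^{(d-1)/2}=(d-1)(1-s)\,g(s)^{(d-3)/2}.
\]
Since $h=r^2/2\le 1-1/\sqrt{d+1}<1$ by hypothesis, we have $1\le (1-s)^{-1}\le (1-h)^{-1}$ on $[0,h]$; integrating the identity above (and using $g(0)=0$) gives the sandwich
\[
\frac{g(h)^{(d-1)/2}}{d-1}\;\le\;N\;\le\;\frac{g(h)^{(d-1)/2}}{(d-1)(1-h)}.
\]
Substituting back $g(h)=h(2-h)=r^2(1-r^2/4)$ and $1-h=1-r^2/2$, and multiplying by $1/\!\int_{-1}^1(1-t^2)^{(d-3)/2}\,dt=(d-1)\kappa_d$, this becomes
\[
\kappa_d\, r^{d-1}\Bigl(1-\tfrac{r^2}{4}\Bigr)^{\frac{d-1}{2}}\;\le\;\sigma_{d-1}(\scap(r))\;\le\;\frac{\kappa_d\, r^{d-1}}{1-r^2/2}\Bigl(1-\tfrac{r^2}{4}\Bigr)^{\frac{d-1}{2}},
\]
which is the asserted estimate (the left-hand side as printed looks like a typo for this lower bound, which merely drops the harmless factor $(1-r^2/2)^{-1}\ge 1$).

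I do not anticipate a genuine obstacle: everything is elementary once the one-dimensional reduction is in place. The only mildly delicate points are the $\Gamma$-function bookkeeping needed to identify the normalizing constant as $(d-1)\kappa_d$, and — when $d=2$, where $g(s)^{(d-3)/2}$ is singular at $s=0$ — checking that the integrand is still integrable there so that the antiderivative computation and the endpoint evaluation $g(0)=0$ remain legitimate.
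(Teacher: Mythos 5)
The paper provides no proof of this lemma --- it is quoted directly from \cite{boroczky2003} --- so there is nothing in-paper to compare against; what you give is a correct, self-contained derivation. The reduction to the latitude integral, the identity $\tfrac{d}{ds}\bigl(s(2-s)\bigr)^{(d-1)/2}=(d-1)(1-s)\bigl(s(2-s)\bigr)^{(d-3)/2}$, the sandwiching of $(1-s)^{-1}$ over $[0,h]$ with $h=r^2/2<1$, and the $\Gamma$-function check that $1/\!\int_{-1}^1(1-t^2)^{(d-3)/2}\,dt=(d-1)\kappa_d$ are all correct, and you rightly note the only delicate point is the integrable singularity at $s=0$ when $d=2$. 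You have also correctly spotted that the inequality as transcribed in the paper is a typo (the two sides are literally identical); the intended lower bound is $\kappa_d\,r^{d-1}(1-r^2/4)^{(d-1)/2}$, without the $(1-r^2/2)^{-1}$ factor, which is exactly what your argument yields and which matches the cited source.
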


\paragraph{Partition of Sphere} We shall need the following partition lemma in~\cite{Mat96}.
\begin{lemma}[{\cite[Lemma 5]{Mat96}}]
\label{lem:partition}
Let $P$ be an $N$-point set in $\bS^{d - 1}$, and let $s \ge 2$ be a constant. There is an $O(N^{d-\frac{1}{d-1}})$-time deterministic algorithm which computes a collection of disjoint $s$-point subsets $P_1, \dots, P_t \subset P$, which together contain at least half the  points of $P$, and with the following properties: 
\begin{enumerate}[label=(\roman*)]
    \item  For every $x \in \bS^{d - 1}$, the hyperplane $\{y: \inner{x, y} = 0\}$ only intersects
at most $O(N^{1 - \frac{1}{d - 1}})$ sets among $P_i$. Here ``intersect'' means that there exist $x$ and $y$ in $P_i$ such that $x, y$ are on different sides of the hyperplane.

\item  Each $P_i$ has diameter at most $O(N^{-\frac{1}{d - 1}})$.
\end{enumerate}
\end{lemma}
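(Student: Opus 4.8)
The plan is to obtain the partition as a refinement of a geometric grid at the critical resolution $\delta\sim N^{-1/(d-1)}$: this is the scale at which $\bS^{d-1}$ decomposes into $\Theta(N)$ pieces of diameter $O(N^{-1/(d-1)})$, which is exactly what ``groups of constant size covering a constant fraction of the points'' calls for. Three requirements pull against one another: property~(ii) wants the cells as small as possible; the ``at least half'' clause forbids them from being so small that most cells contain fewer than $s$ points and must be discarded; and property~(i) is a stabbing-number guarantee that must hold \emph{uniformly} over all hyperplanes through the origin.

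First I would handle the diameter and the geometric heart of~(i). Lay an axis-parallel grid of side $\delta=c\,N^{-1/(d-1)}$ on $\R^d$, for a constant $c=c(d,s)$ to be fixed, and keep only the grid cells meeting $\bS^{d-1}$; there are $\Theta(\delta^{-(d-1)})=\Theta(N/c^{d-1})$ of them, and each, intersected with $\bS^{d-1}$, has diameter at most $\sqrt d\,\delta=O(N^{-1/(d-1)})$, a bound inherited by every $P_i$ we form inside it, which gives~(ii). For the stabbing bound, fix $x\in\bS^{d-1}$: any grid cell that meets both $\{y:\inner{x,y}=0\}$ and $\bS^{d-1}$ is contained in the ``slab $\cap$ shell'' region $\{y:\abs{\inner{x,y}}\le\sqrt d\,\delta\text{ and }\abs{\norm{y}_2-1}\le\sqrt d\,\delta\}$, whose $d$-volume is $O(\delta^2)$; since the grid cells have disjoint interiors and volume $\delta^d$, at most $O(\delta^{2-d})=O(N^{1-1/(d-1)})$ of them fit inside it. This volume estimate is uniform in $x$ and is the source of the exponent $1-\tfrac1{d-1}$.

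Next, covering at least half the points together with the grouping inside the cells, which is the crux. If $c$ is a large enough constant then the ``heavy'' cells, those containing at least $s$ points, together hold at least $N/2$ of the points, since the points in cells with fewer than $s$ points number at most $(s-1)$ times the $\Theta(N/c^{d-1})$ nonempty cells. Inside a heavy cell, however, one cannot simply cut the points into arbitrary $s$-subsets: a heavy cell may contain up to $\Theta(N)$ points, lying on an essentially $(d-1)$-dimensional patch, and for \emph{any} grouping of $k$ such points some origin hyperplane crosses $\Theta(k^{1-1/(d-1)})$ of the $s$-subsets, so summing over the $O(N^{1-1/(d-1)})$ crossed cells (by concavity) overshoots the target. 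The partition therefore has to be built as a single, scale-coordinated object: one applies Matousek's simplicial-partition / bounded-stabbing-number machinery in the $(d-1)$-dimensional geometry intrinsic to the sphere with classes of the constant size $s$ (this alone yields $\Theta(N)$ classes, each origin hyperplane stabbing $O(N^{1-1/(d-1)})$ of them), and one forces each class to lie inside a single grid cell by including the grid's $O(N^{1/(d-1)})$ defining hyperplanes in the cutting used in the construction; since an origin hyperplane meets only $O(N^{1-1/(d-1)})$ grid cells, this refinement costs only constant factors in the number of classes and in the stabbing number, while guaranteeing diameter $O(N^{-1/(d-1)})$. A final rounding pass that pools the $O(1)$ leftover points per cell and regroups them locally makes the classes exactly of size $s$ while keeping $\ge N/2$ points, and the standard cutting-construction bounds give the stated $O(N^{d-\frac{1}{d-1}})$ running time.

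The hard part will be precisely this combination. Each requirement is individually routine --- a grid handles the diameter, and the simplicial-partition machinery handles the stabbing number with the correct $(d-1)$-dimensional exponent --- but a two-level ``grid first, then partition inside each cell'' construction fails because the two levels' stabbing numbers compound, so the $O(N^{1-1/(d-1)})$ crossing budget must be \emph{charged only once}, i.e.\ the grid must be folded into the cutting rather than applied on top of it. The seemingly harmless ``at least half'' constraint is what pins the grid to scale $\Theta(N^{-1/(d-1)})$ with a carefully chosen constant and forces the leftover-pooling bookkeeping.
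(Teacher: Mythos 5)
The paper states Lemma~\ref{lem:partition} as a citation to \cite[Lemma 5]{Mat96} and does not give a proof, so there is no in-paper argument to compare against; I am therefore evaluating your outline on its own merits and against Matousek's published construction.

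You assemble the routine parts correctly. The axis-parallel grid at scale $\delta=c\,N^{-1/(d-1)}$ gives the diameter bound (ii), the slab-and-shell volume argument bounds the number of grid cells meeting a given equatorial hyperplane by $O(\delta^{2-d})=O(N^{1-1/(d-1)})$, and making $c$ large handles the ``at least half the points'' requirement. You also diagnose --- quantitatively and correctly --- why the obvious two-level scheme fails: grouping inside each crossed cell, even optimally, gives per-cell crossing number $\Theta((k_j/s)^{1-1/(d-1)})$, and H\"older over the $O(N^{1-1/(d-1)})$ crossed cells yields only $O(N^{1-1/(d-1)^2})$, which overshoots. That observation is the real content of the lemma and you land on it precisely.

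But the step you yourself call ``the hard part'' is, in effect, the entire proof, and ``fold the grid's defining hyperplanes into the cutting'' does not close it as written. A $(1/\rho)$-cutting guarantees that each cell is crossed by at most $m/\rho$ of the $m$ hyperplanes of the family --- not by zero of them --- so adding the $O(N^{1/(d-1)})$ grid hyperplanes to the family does not by itself confine a resulting cell to a single grid box, and such a cell can still have diameter $\omega(\delta)$. It is also not automatic that grid hyperplanes (which are affine and axis-parallel, not great $(d-2)$-spheres or their images under the chart in which the partition theorem is run) can be absorbed into the spherical partition machinery; the claim that this ``costs only constant factors in the number of classes and in the stabbing number'' is asserted, not proved. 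What the lemma requires is a single, scale-coordinated partition in which the crossing number and the diameter are controlled \emph{simultaneously} at every level of the recursion, which is the genuinely nontrivial construction in \cite{Mat96}. Your proposal is a faithful reconstruction of what such a construction must achieve, and of why the naive two-level attempt fails, but it stops short of the construction itself.
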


\section{$\ell_p$-Subspace Sketch Lower Bound} 
\label{sec:lower_bound}

We first state an auxiliary lemma.
\begin{lemma}\label{lem:f-value}
Suppose that $c(d)\leq r<1$. Define a function $f:[-1,1]\to\R$ as
\[
f(t) = \frac{1 - r^4}{(1 + r^4 - 2r^2 t)^{d / 2}} + \frac{1 - r^4}{(1 + r^4 + 2r^2 t)^{d / 2}} .
\]
Then for $0\leq t\leq 1$, it holds that
\[
f(t) \leq \frac{2(1-r^2)}{(2r^2)^{d/2}} \left(\frac{1}{(1-t)^{d/2}} + 1\right).
\]
\end{lemma}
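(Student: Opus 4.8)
The plan is to bound the two summands of $f(t)$ separately, each against one of the two terms on the right-hand side. The only genuine idea needed is a completion-of-the-square identity for the first denominator; everything else is elementary estimation using $0\le r<1$ and $0\le t\le 1$.

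First I would record the identity
\[
1 + r^4 - 2r^2 t = (1-r^2)^2 + 2r^2(1-t),
\]
which one checks by expanding $(1-r^2)^2 = 1 - 2r^2 + r^4$. Since $t\le 1$, the first summand on the right is nonnegative, so $1 + r^4 - 2r^2 t \ge 2r^2(1-t)$, and hence
\[
\frac{1}{(1 + r^4 - 2r^2 t)^{d/2}} \le \frac{1}{(2r^2)^{d/2}(1-t)^{d/2}}.
\]
Together with $1 - r^4 = (1-r^2)(1+r^2) \le 2(1-r^2)$ (using $r<1$), this bounds the first term of $f(t)$ by $\dfrac{2(1-r^2)}{(2r^2)^{d/2}}\cdot\dfrac{1}{(1-t)^{d/2}}$. (At $t=1$ the claimed inequality is vacuous since its right-hand side is infinite, so one may assume $t<1$ here.)

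For the second summand, since $t\ge 0$ we have $1 + r^4 + 2r^2 t \ge 1 + r^4 \ge 2r^2$ by AM--GM, so
\[
\frac{1 - r^4}{(1 + r^4 + 2r^2 t)^{d/2}} \le \frac{1-r^4}{(2r^2)^{d/2}} \le \frac{2(1-r^2)}{(2r^2)^{d/2}}.
\]
Adding the two bounds yields $f(t) \le \dfrac{2(1-r^2)}{(2r^2)^{d/2}}\left(\dfrac{1}{(1-t)^{d/2}} + 1\right)$, as claimed. I do not expect any real obstacle: the hypothesis $c(d)\le r$ is not even used (it presumably only matters for the lemmas that invoke this one), and the whole argument is a two-line estimate once the identity $1 + r^4 - 2r^2 t = (1-r^2)^2 + 2r^2(1-t)$ is spotted.
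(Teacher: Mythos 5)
Your proof is correct and takes essentially the same route as the paper's: the paper substitutes $q = 1-r^2$ and writes the first denominator as $2(1-q)(1-t) + q^2$, which is precisely your completion-of-the-square identity $1+r^4-2r^2t = 2r^2(1-t) + (1-r^2)^2$ in different notation; the second term is likewise handled by dropping $2r^2t\ge 0$ and using $1+r^4\ge 2r^2$. Your observation that the hypothesis $c(d)\le r$ is unused here is also accurate.
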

\begin{proof}
Let $q = 1 - r^2$. Then the first term of $f(t)$ is
\[
\frac{1 - r^4}{(1 + r^4 - 2r^2 t)^{d / 2}} \leq \frac{2q - q^2}{(2 - 2q + q^2 - 2(1 - q)t)^{d / 2}} \leq \frac{2q}{(2(1-q)(1-t))^{d/2}}
\]
and the second term of $f(t)$ is
\[
\frac{1 - r^4}{(1 + r^4 + 2r^2 t)^{d / 2}} \leq \frac{2q - q^2}{(2 - 2q + q^2)^{d / 2}} \leq \frac{2q}{(2(1-q))^{d/2}}.
\]
The claimed result follows.
\end{proof}

Our result is mainly based on the following lemma. 

\begin{lemma}
\label{lem:min_error}
Suppose that $p\in [1,\infty)\setminus 2\Z$ is a constant.
Let $A$ and $B$ be sets of $n\leq N$ points on $\bS^{d - 1}$. Suppose that $A$ and $B$ are symmetric around the origin and $\|A_i - B_j\|_2 \geq \eta = C_1 N^{-\frac{1}{d - 1}} $ for all $i, j$. Then we have
\[
\delta \equiv \sup_{x \in \bS^{d - 1}} \frac{1}{n}\Abs{ \norm{Ax}_p^p - \norm{Bx}_p^p } \ge c_2 N^{-\frac{d + 2p}{2(d - 1)}}.
\]
 In the statement above, $C_1 > 0$ is a constant that depends only on $d$ and $c_2 > 0$ is a constant that depends on $d$ and $p$ only.
\end{lemma}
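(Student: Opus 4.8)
I would set $g(x)=\|Ax\|_p^p-\|Bx\|_p^p=\sum_i|\langle A_i,x\rangle|^p-\sum_i|\langle B_i,x\rangle|^p$, so that $n\delta=\|g\|_{L^\infty(\bS^{d-1})}$ and, since $|A|=|B|=n$, $\int_{\bS^{d-1}}g\,d\sigma_{d-1}=0$. By the Funk--Hecke formula~\eqref{eq:funk-hecke} for $t\mapsto|t|^p$ one has $|\langle a,x\rangle|^p=\sum_{k,j}\lambda_kY_{k,j}(a)Y_{k,j}(x)$, hence $\widehat{g}_{k,j}=\lambda_k c_{k,j}$, where $c_{k,j}:=\sum_iY_{k,j}(A_i)-\sum_iY_{k,j}(B_i)$; by Lemma~\ref{lem:lambda_k}, $\lambda_k=0$ for odd $k$, $\lambda_k\neq0$ for every even $k$, and $|\lambda_k|\asymp_{d,p}k^{-(d/2+p)}$ (for even integer $p$ instead $\lambda_k=0$ for all $k>p$, which is why that case is excluded). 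The plan is to fix a radius $r\in(0,1)$ with $1-r^2=c_0\eta$ for a sufficiently small constant $c_0=c_0(d)>0$ and to test $g$ against
\[
\Psi(x):=\sum_i\Xi(\langle A_i,x\rangle)-\sum_i\Xi(\langle B_i,x\rangle),\qquad \Xi(t):=\sum_{k\text{ even}}\frac{r^{2k}}{\lambda_k}\,M(d,k)\,P_{k,d}(t),
\]
noting that $\Xi$ is a genuine smooth zonal function (the series converges geometrically since $r<1$) and that, by the addition theorem~\eqref{eq:1}, $\widehat{\Psi}_{k,j}=(r^{2k}/\lambda_k)\,c_{k,j}$ for even $k$ and $0$ otherwise.

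Then $\langle g,\Psi\rangle=\sum_{k,j}r^{2k}c_{k,j}^2=:\mathcal S(r)$, and by the addition theorem and the Poisson identity~\eqref{eq:3} the very same quantity rewrites in physical space as
\[
\mathcal S(r)=\norm{\sum_iP_r(\langle A_i,\cdot\rangle)-\sum_iP_r(\langle B_i,\cdot\rangle)}_2^2=2\!\left(\sum_{i,i'}f(\langle A_i,A_{i'}\rangle)+\sum_{i,i'}f(\langle B_i,B_{i'}\rangle)-2\sum_{i,i'}f(\langle A_i,B_{i'}\rangle)\right),
\]
where $P_r$ is the Poisson kernel of~\eqref{eq:3}, $f$ is the symmetrized kernel of Lemma~\ref{lem:f-value}, and the right-hand sums range over one representative of each antipodal pair (the symmetrization reflecting the central symmetry of $A$ and $B$). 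Since $\mathcal S(r)=\langle g,\Psi\rangle\le\|g\|_\infty\|\Psi\|_1\le\|g\|_\infty\|\Psi\|_2$, while
\[
\|\Psi\|_2^2=\sum_{k,j}\frac{r^{4k}}{\lambda_k^2}\,c_{k,j}^2\ \le\ \Big(\max_{k\text{ even}}\frac{r^{2k}}{\lambda_k^2}\Big)\mathcal S(r)\ \asymp_{d,p}\ (1-r^2)^{-(d+2p)}\mathcal S(r)
\]
(using $1/\lambda_k^2\asymp_{d,p}k^{d+2p}$ and the elementary estimate $\max_k r^{2k}k^m\asymp_m(1-r^2)^{-m}$), this yields the reduction
\[
n\delta=\|g\|_\infty\ \gtrsim_{d,p}\ (1-r^2)^{(d+2p)/2}\sqrt{\mathcal S(r)}.
\]
It therefore suffices to show $\mathcal S(r)\gtrsim_d n(1-r^2)^{-(d-1)}$: substituting $1-r^2=c_0\eta$, $\eta=C_1N^{-1/(d-1)}$ and using $n\le N$ then gives $\delta\gtrsim_{d,p}\eta^{p+1/2}/\sqrt n\ge\eta^{p+1/2}/\sqrt N\asymp_{d,p}N^{-\frac{d+2p}{2(d-1)}}$, which is the claim.

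For the lower bound on $\mathcal S(r)$: since $f>0$ on $[-1,1]$, the two ``self'' sums dominate their diagonals, so
\[
\sum_{i,i'}f(\langle A_i,A_{i'}\rangle)+\sum_{i,i'}f(\langle B_i,B_{i'}\rangle)\ \ge\ n\,f(1)\ \ge\ n\cdot\frac{1+r^2}{(1-r^2)^{d-1}}\ \gtrsim\ n(1-r^2)^{-(d-1)},
\]
the bound $f(1)\ge(1+r^2)/(1-r^2)^{d-1}$ being exactly what appears inside the proof of Lemma~\ref{lem:f-value}. This is the analogue of Bourgain's ``keep only the $i=j$ terms'' step. The step I expect to be the main obstacle---and the genuinely new ingredient over Bourgain's single-set argument---is to show the cross sum $\sum_{i,i'}f(\langle A_i,B_{i'}\rangle)$ is at most, say, $\tfrac14$ of that diagonal; this is where the separation $\|A_i-B_j\|_2\ge\eta$ enters. (One also uses that, after identifying antipodes, the points of $A$ and of $B$ are themselves mutually $\eta$-separated; this is automatic in the intended application, where $A$ and $B$ are subsets of a fixed $\eta$-separated point set, and can otherwise be arranged.)

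To bound the cross sum, fix $A_i$; for each $B_{i'}$ take the antipodal representative closer to $A_i$ and bin the $B_{i'}$ into dyadic shells $\mathcal B_\ell=\{B_{i'}:2^\ell\eta\le\|A_i-B_{i'}\|_2<2^{\ell+1}\eta\}$ for $\ell=0,1,\dots,O(\log(1/\eta))$. The $\eta$-separation gives each $B_{i'}\in\mathcal B_\ell$ a private spherical cap of radius $\eta/2$, all disjoint and contained in a cap of radius $O(2^\ell\eta)$ about $A_i$, so Lemma~\ref{lem:spherical cap} gives $|\mathcal B_\ell|=O(2^{\ell(d-1)})$; and since $1-\langle A_i,B_{i'}\rangle=\tfrac12\|A_i-B_{i'}\|_2^2\asymp(2^\ell\eta)^2$ on $\mathcal B_\ell$, Lemma~\ref{lem:f-value} gives $f(\langle A_i,B_{i'}\rangle)=O\!\big((1-r^2)(2^\ell\eta)^{-d}+(1-r^2)\big)$ there. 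Thus $\mathcal B_\ell$ contributes $O\!\big((1-r^2)\eta^{-d}\,2^{-\ell}\big)$ from the first term---geometrically decaying in $\ell$ precisely because the shells grow like $2^{\ell(d-1)}$ while $f$ decays faster, like $2^{-\ell d}$---and $O\!\big((1-r^2)2^{\ell(d-1)}\big)$ from the second, which sums over $\ell$ to $O((1-r^2)\eta^{-(d-1)})$ and is absorbed into $O((1-r^2)\eta^{-d})$. Summing over $\ell$ and then over the $n$ choices of $A_i$ gives $\sum_{i,i'}f(\langle A_i,B_{i'}\rangle)=O\!\big(n(1-r^2)\eta^{-d}\big)$, whose ratio to $n\,f(1)\gtrsim n(1-r^2)^{-(d-1)}$ is $O\!\big(((1-r^2)/\eta)^d\big)=O(c_0^d)$; choosing $c_0$ small makes it $\le\tfrac14$, so $\mathcal S(r)\ge\tfrac12\,n\,f(1)\gtrsim n(1-r^2)^{-(d-1)}$, completing the argument.
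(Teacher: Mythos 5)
Your proof is correct and takes essentially the same route as the paper's: both expand $|\langle a,x\rangle|^p$ in spherical harmonics via Funk--Hecke, relate $\|g\|_\infty$ to the Poisson-kernel self- and cross-correlations $\mathcal{S}(r)$ using the asymptotics $|\lambda_k|\asymp k^{-(d/2+p)}$ from Lemma~\ref{lem:lambda_k}, lower bound the self-correlations by their diagonal terms, and control the cross-correlation by a dyadic-shell decomposition using Lemmas~\ref{lem:f-value} and~\ref{lem:spherical cap}. The only presentational difference is that you pass from $\|g\|_\infty$ to $\mathcal{S}(r)$ by duality against the test function $\Psi$ and Cauchy--Schwarz, whereas the paper applies Parseval directly on $\|h_A-h_B\|_{L_2}\le\delta$ and then reads the bound off the Fourier side of \eqref{eq:6}; both derivations yield the identical inequality $\|g\|_\infty^2\,\max_{k}\, r^{2k}\lambda_k^{-2}\gtrsim\mathcal{S}(r)$, and your parenthetical about within-$A$ and within-$B$ separation is a hypothesis the paper uses implicitly (it holds in its application) as well.
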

\begin{proof}
The proof is inspired by the proof of \cite[Proposition 6.6]{bourgain89}, which uses spherical harmonics.
Let $h_A(x) = \frac{1}{n} \|Ax\|_p^p = \frac{1}{n} \sum_i \Abs{\inner{A_i,x}}^p$  and let $h_B(x)$ be defined similarly. We expand $h_A - h_B$ into 
\[
h_A - h_B = \sum_{k = 0}^{\infty} \sum_j \inner{h_A - h_B, Y_{k, j}} Y_{k, j}. 
\]
where $\inner{h, Y_{k, j}}$ denotes the inner product in $L_2(\sigma_d)$ for which   
\[
\inner{h, Y_{k, j}} = \int_{\bS^{d - 1}} h(x) Y_{k, j}(x) \, d\sigma_{d-1}(x) .
\]
It follows from Parseval's identity that
\[
\delta^2 \ge \|h_A-h_B\|_{L_2(\sigma_{d-1})}^2 = \sum_{k = 1}^{\infty} \sum_j \inner{h_A - h_B, Y_{k, j}}^2 .
\]
Now, from~\eqref{eq:1} and~\eqref{eq:3} we have for all $u,v\in \bS^{d-1}$  that
\begin{equation}
    \label{eq:4}
    \frac{1 - r^2}{(1 + r^2 - 2r\inner{u, v})^{d/2}} = \sum_{k = 0}^{\infty}  r^k \sum_{j} Y_{k, j}(u) Y_{k, j}(v).
\end{equation}
Combining with~\eqref{eq:funk-hecke} and using the fact that $A$ and $B$ are symmetrically distributed on $\bS^{d-1}$, we obtain that
\[
    \frac{1}{n}\sum_{i = 1}^{n} \frac{(1 - r^2)}{(1 + r^2 - 2r\inner{u, A_i})^{d/2}} = 1 + \sum_{\text{even }k\geq 2} r^k \lambda_k^{-1} \sum_j \inner{h_A, Y_{k,j}} Y_{k, j}(u) ,
\]
and
\[
    \frac{1}{n}\sum_{i = 1}^{n} \frac{(1 - r^2)}{(1 + r^2 - 2r\inner{u, B_i})^{d/2}} = 1 + \sum_{\text{even }k\geq 2} r^k \lambda_k^{-1} \sum_j \inner{h_B, Y_{k,j}} Y_{k, j}(u) .
\]
Hence,
\[
    \frac{1}{n}\sum_{i = 1}^{n} \frac{(1 - r^2)}{(1 + r^2 - 2r\inner{u, A_i})^{d/2}} - \frac{1}{n}\sum_{i = 1}^{n} \frac{(1 - r^2)}{(1 + r^2 - 2r\inner{u, B_i})^{d/2}} = \sum_{\text{even }k\geq 2}  r^k \lambda_k^{-1} \sum_j \inner{h_A - h_B, Y_{k,j}} Y_{k, j}(u) ,
\]
Integrating with respect to $u$ on $\bS^{d-1}$, we have that
\begin{equation}\label{eq:6}
\begin{aligned}
    &\quad\ \left\|\frac{1}{n}\sum_{i = 1}^{n} \frac{(1 - r^2)}{(1 + r^2 - 2r\inner{u, A_i})^{d/2}} - \frac{1}{n}\sum_{i = 1}^{n} \frac{(1 - r^2)}{(1 + r^2 - 2r\inner{u, B_i})^{d/2}}\right\|_{L_2(\sigma_{d-1})}^2 \\
    &= \sum_{\text{even }k\geq 2} r^{2k} \lambda_k^{-2} \sum_j \inner{h_A - h_B, Y_{k,j}}^2 \\
    &\le \delta^2 \max_{\text{even }k\geq 2}(r^{2k} \lambda_k^{-2})  ,
\end{aligned}
\end{equation}
%
The leftmost side of~\eqref{eq:6} equals (using \eqref{eq:4})
\[
\frac{1}{n^2}\left( \sum_{i,j} \frac{1-r^4}{(1+r^4-2r^2\inner{A_i,A_j})^{d/2}} + \sum_{i,j} \frac{1-r^4}{(1+r^4-2r^2\inner{B_i,B_j})^{d/2}} - 2\sum_{i,j} \frac{1-r^4}{(1+r^4-2r^2\inner{A_i,B_j})^{d/2}} \right).
\]
Let $I, J$ be the index sets of $A$ and $B$ such that the points in $A_I, B_J$ are on the same hemisphere. We can rewrite the expansion above as
\begin{equation}
\label{eq:f}
\frac{2}{n^2} \left(\sum_{i \in I, j \in I} f(\inner{A_i, A_j}) + \sum_{i \in J, j \in J}  f(\inner{B_i, B_j}) - 2\sum_{i \in I, j \in J} f(\inner{A_i, B_j}) \right),
\end{equation}
where $f(t)$ is as defined in Lemma~\ref{lem:f-value}.

We now choose $(1 - r^2)^{-d + 1} = N$. Suppose that $N$ is sufficiently large such that $1 - r^2 \leq c_4(d) < 1$, where $c_4(d)$ is a constant depending only on $d$. Next we turn to lower bounding the expression in~\eqref{eq:f}.

Considering the summands with $i = j$. We see that the first two terms inside the bracket of \eqref{eq:f} are at least
\[
\sum_{i \in I} f(\inner{A_i, A_i}) + \sum_{j \in J} f(\inner{B_j, B_j})\ge n\frac{1 - r^4}{(1 - r^2)^d} \ge  n(1 - r^2)^{-d + 1} = n N.
\]
Next we bound the cross terms. We first show that, for every $A_i$,
\begin{equation}
\label{eq:bound_one_x}
\sum_{j \in J} f(\inner{A_i, B_j}) \le \frac{1}{4} N.
\end{equation}
Without loss of generality, we assume $\inner{A_i, B_j} \ge 0$ for all $j$, as otherwise we can replace $B_j$ with $-B_j$. To achieve this, consider a fixed $i\in I$. Let $I_k = \{j\in J: \ 2^{k - 1}\eta \le \norm{A_i- B_j}_2 < 2^{k}\eta \}$ for $k = 1, 2, \dots, K$, where $K$ is the smallest positive integer such that $2^K\eta \geq 1/2$.
We also define $I_{K+1} = \{j\in J:  \norm{A_i- B_j}_2 \geq 2^{K} \eta  \}$.
Then, we have that
\[
\sum_{j \in J} f(\inner{A_i, B_j}) = \sum_{k = 1}^{K+1} S_k = \sum_{k = 1}^{K+1} \left(\sum_{j \in I_k} f(\inner{A_i, B_j})\right)\; .
\]

Next, we will bound $S_k$ individually. 
From Lemma~\ref{lem:spherical cap} we have that 
\[
|I_k| \le \frac{\sigma_{d-1}(\scap(2^k\eta))}{\sigma_{d-1}(\scap(\eta/2))} \le C_3(d)\cdot (2^k)^{d-1}.
\]
By Lemma~\ref{lem:f-value}, when $0\leq t\le 1 - 2^{2k-1}\eta^2$ and $k\leq K$, we have that
\begin{align*}
f(t) &\leq \frac{2(1-r^2)}{(2r^2)^{d/2}} \left(\frac{1}{(1-t)^{d/2}} + 1\right) \\
&\leq \frac{2(1-r^2)}{(2(1-c_4))^{d/2}} \left(\frac{1}{2^{(k-1/2)d}\eta^d} + 1\right) \\
&\leq \frac{C_4(d)}{2^{kd}}\frac{1-r^2}{\eta^d}.
\end{align*}
Also, when $0 \le t < 7/8$, we have that 
\begin{align*}
f(t) &\leq \frac{2(1-r^2)}{(2r^2)^{d/2}} \left(\frac{1}{(1-t)^{d/2}} + 1\right) \\
&\leq \frac{2(1-r^2)}{(2(1-c_4))^{d/2}} \left(\frac{1}{(1/8)^{d/2}} + 1\right) \\
&\leq C_4(d)(1-r^2).
\end{align*}

Consequently (noting that $\inner{A_i, B_j} = 1 - \frac{\|A_i - B_j\|_2^2}{2}$),
\begin{align*}
\sum_{j \in J} f(\inner{A_i, B_j}) &\leq \left(\sum_{k \le K} C_3(d)(2^k)^{d-1} \cdot C_4(d)\frac{1-r^2}{\eta^d} \frac{1}{2^{kd}}\right) + \frac{n}{2} \cdot C_4(d)(1-r^2) \\
& \leq C_5(d)\left(\frac{1-r^2}{\eta^d} + \frac{N}{2} (1-r^2)\right) \leq \frac{1}{4}N
\end{align*}
provided that
\[
\frac{1}{4}N\geq C_5(d)\left(\frac{1-r^2}{\eta^d} + \frac{N}{2}(1-r^2)\right) = C_5(d)\left( N^{-\frac{1}{d-1}}\eta^{-d} + \frac{1}{2}N^{1-\frac{1}{d-1}} \right),
\]
which holds whenever
\[
\eta \geq C_1(d) N^{-\frac{1}{d-1}}
\]
and $N$ is sufficiently large.
It follows from \eqref{eq:bound_one_x} that
\[
2\sum_{i \in I, j \in J} f(\inner{A_i, B_j}) \le \frac{1}{2}n N,
\]
which implies that the expression in~\eqref{eq:f} is at least
\[
\frac{2}{n^2}\left(n N - \frac{1}{2} n N \right) \geq 1.
\]
Plugging this back into \eqref{eq:6} yields
\[
\delta^2 \max_{\text{even }k}(r^{2k} \lambda_k^{-2}) \ge 1 .
\]
By Lemma~\ref{lem:lambda_k}, there exists a constant $c(d)$ such that
\[
c(d, p) \delta^2 \max_{\text{even }k}(r^{2k} k^{d + 2p}) \ge \Omega(1).
\]
The maximum is attained when $k \approx (d+2)/\ln(1/r) \sim \frac{d}{1 - r^2}$, from which we obtain that 
\[
\delta \ge c(d,p) N^{-\frac{d + 2p}{2(d - 1)}}  . \qedhere
\]
\end{proof}

Equipped with the lemma above, we are now ready to prove our lower bound. We need the following lemma on the size of intersecting families.
\begin{lemma}[{\cite[Theorem 30]{Blocki2014}}]\label{lem:intersecting_family}
Suppose that $0<\alpha^2<\beta<\alpha<1$ and $n$ is sufficiently large. There exists a family $\mathcal{S}$ of subsets of $[n]$ such that (i) $|S| = \alpha n$ for each $S\in \mathcal{S}$, (ii) $|S\cap T|\leq \beta n$ for every pair of distinct $S,T\in \mathcal{S}$ and (iii) $|\mathcal{S}| = \Omega(e^{(\beta-\alpha^2)^2 n})$.
\end{lemma}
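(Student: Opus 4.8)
The plan is to reduce the statement to a coding-theoretic one via a transversal construction, which yields the exact count cleanly in one parameter regime, and to supplement it with a greedy / probabilistic-deletion argument for the rest.

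First I would set up the reduction. Assume for the moment that $1/\alpha$ and $\alpha n$ are integers (divisibility, and the requirement below that $1/\alpha$ be a prime power, are handled by padding $[n]$ with a bounded number of dummy coordinates and slightly perturbing $\alpha$; since $n$ is large this only changes the constant hidden in $\Omega(\cdot)$). Partition $[n]$ into $m=\alpha n$ blocks $B_1,\dots,B_m$, each of size $q=1/\alpha$, and to each vector $v=(v_1,\dots,v_m)$ with $v_i\in B_i$ associate the transversal $S_v=\{v_1,\dots,v_m\}$. Then $|S_v|=m=\alpha n$, and for distinct $v,w$,
\[
|S_v\cap S_w| \;=\; \#\{i: v_i=w_i\} \;=\; m - d_H(v,w),
\]
where $d_H$ is Hamming distance over the alphabet $B_i\cong[q]$. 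Hence $|S_v\cap S_w|\le\beta n$ iff $d_H(v,w)\ge m(1-\beta/\alpha)$, so $\mathcal{S}=\{S_v:v\in\mathcal{C}\}$ satisfies (i) and (ii) as soon as $\mathcal{C}\subseteq[q]^m$ is a $q$-ary code of relative minimum distance at least $\delta_0:=1-\beta/\alpha$, and $0<\delta_0<1$ exactly because $0<\beta<\alpha$.

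Next I would instantiate $\mathcal{C}$. In the regime $q=1/\alpha\ge m=\alpha n$ (i.e.\ $n\le\alpha^{-2}$, which is the situation when $\alpha$ is allowed to shrink with $n$, as in the applications) I would take $\mathcal{C}$ to be a Reed--Solomon code over $\mathbb{F}_q$ of length $m$ with designed distance $\lceil\delta_0 m\rceil$; being MDS,
\[
|\mathcal{C}| \;=\; q^{\,m-\lceil\delta_0 m\rceil+1} \;\ge\; q^{(1-\delta_0)m} \;=\; q^{\,\beta m/\alpha} \;=\; (1/\alpha)^{\beta n},
\]
which is precisely the claimed bound, giving (iii). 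For the complementary regime $n>\alpha^{-2}$ a single RS code does not suffice (the alphabet is smaller than the code length), and I would instead use the $q$-ary Gilbert--Varshamov bound (or a concatenated construction) to obtain a code of relative distance $\delta_0$ of size $q^{(1-H_q(\delta_0)-o(1))m}$; or, more simply, a direct probabilistic-deletion argument: sample $M$ transversals uniformly at random, note $|S_v\cap S_w|\sim\mathrm{Bin}(m,\alpha)$ has mean $\alpha^2 n<\beta n$ (when $\beta>\alpha^2$), bound $\Pr[|S_v\cap S_w|>\beta n]$ by a Chernoff tail, delete one endpoint of each violating pair, and keep $\Omega(M)$ transversals for $M$ chosen so that the expected number of violating pairs is at most $M/2$.

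The step I expect to be the main obstacle is the complementary regime $n>\alpha^{-2}$: there the clean MDS count is unavailable, and recovering the exact exponent $(1/\alpha)^{\beta n}$ — rather than merely $2^{\Omega(n)}$ — requires either a careful analysis of a concatenated code or verifying the entropy inequality that the Gilbert--Varshamov count $q^{(1-H_q(\delta_0))m}$ is at least $(1/\alpha)^{\beta n}$; that comparison is the delicate point, and it implicitly constrains how small $\beta$ may be relative to $\alpha$ (concretely, $\beta>\alpha^2$ is what makes the random construction meaningful, matching the typical intersection of two random $\alpha n$-sets). The orthogonal, routine obstacle is the prime-power and divisibility bookkeeping for $1/\alpha$, $\alpha n$, and $\delta_0 m$, which I would dispose of by the padding argument mentioned above.
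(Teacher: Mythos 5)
Your transversal reduction to a $q$-ary code of length $m=\alpha n$ over an alphabet of size $q=1/\alpha$ with relative distance $\delta_0=1-\beta/\alpha$ is clean, but both branches of your case analysis break down. The Reed--Solomon branch requires $q\ge m$, i.e.\ $n\le\alpha^{-2}$; for constant $\alpha,\beta$ and large $n$ (the paper invokes the lemma with $\alpha=1/2$, $\beta=1/4$, so $\alpha^{-2}=4$) this hypothesis is vacuous. In the complementary regime the Gilbert--Varshamov / deletion branch does not deliver the claimed count: for every $\delta\in(0,1-1/q)$ one has $H_q(\delta)>\delta$ (indeed $H_q$ is concave with $H_q(0)=0$, infinite slope at $0$, and $H_q(1-1/q)=1>1-1/q$, so its graph lies above the chord of slope $\tfrac{q}{q-1}>1$), hence the GV rate $1-H_q(\delta_0)$ is strictly below $1-\delta_0=\beta/\alpha$ and GV produces strictly fewer than $q^{(1-\delta_0)m}=(1/\alpha)^{\beta n}$ codewords. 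The Chernoff-deletion variant loses by the same margin. So the ``delicate point'' you flag is in fact a genuine failure, not merely a gap to be filled in later.

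There is also a more basic obstruction you should surface: the statement as quoted cannot hold for every $0<\beta<\alpha<1$. If $\beta\le\alpha^2$, the centered indicators $w_i=\1_{S_i}-\alpha\1$ satisfy $\inner{w_i,w_j}=|S_i\cap S_j|-\alpha^2 n\le 0$ for $i\ne j$ while $\inner{w_i,w_i}=\alpha(1-\alpha)n>0$, and a set of nonzero vectors in $\R^n$ with pairwise nonpositive inner products has size at most $2n$ (and $O(1)$ once $\beta<\alpha^2$ strictly). In particular, with $\alpha=1/2$ and $\beta=1/4$ --- exactly the parameters the paper plugs in --- at most $2n$ such sets exist, not $2^{\Omega(n)}$. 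So either [BBD13] carries an additional hypothesis such as $\beta>\alpha^2$ or a different exponent that was dropped in the restatement, or the lemma is being invoked at a boundary point where it fails. For the paper's downstream argument this is harmless --- any constant $\beta\in(\alpha^2,\alpha)$ still gives $|S_i\setminus S_j|=\Omega(n)$, and the plain deletion argument, whose rate $H(\alpha)-\alpha H(\beta/\alpha)-(1-\alpha)H((\alpha-\beta)/(1-\alpha))$ is positive precisely when $\beta>\alpha^2$, already yields the $2^{\Omega(n)}$ sets that are actually used --- but if your goal is to prove the lemma at the stated exponent $\beta\log(1/\alpha)$, you must first pin down the precise range of $(\alpha,\beta)$ for which that rate is even attainable, which your proposal does not do.
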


Let $\eta = C_1(d) N^{-\frac{1}{d - 1}}$, where $C_1(d)$ is as in Lemma~\ref{lem:min_error}. It follows from Lemma~\ref{lem:spherical cap} that we can take $n = c(d) N \leq N/2$ points $p_1, \dots, p_n$ on some spherical cap $C(x, \sqrt{2-\eta})$ such that $\|p_i - p_j\|_2 \ge \eta $ for every $1 \le i < j \le n$. Since $p_i$'s are chosen from a spherical cap of radius $\sqrt{2-\eta}$, we also have that $\norm{p_i + p_j}_2\geq \eta$ for all pairs $i\neq j$. Let $S = \{p_1, ..., p_n\}$. Applying Lemma~\ref{lem:intersecting_family}, we can find $m = 2^{\Omega(n)}$ subsets $S_1,\dots,S_m$ of $S$ such that $|S_i| = n/3$ for each $i$ and $|S_i\cap S_j| \leq n/6$ for every pair $i\neq j$. 

Given the approximation parameter $\eps$, let $N = c(d) \eps^{-\frac{2(d - 1)}{d + 2p}}$ be such that $c_2 N^{-\frac{d + 2p}{2(d-1)}} = 18\eps$, where $c_2$ is as in Lemma~\ref{lem:min_error}. 

Now we consider the following communication problem: suppose that Alice has one of the subsets $S_1, S_2, \dots, S_m$ and she wants to send a message to Bob, who needs to determine which subset Alice has. We shall show that we can solve this problem if we have an $\ell_p$-subspace sketch data structure. Suppose that the subset Alice has is $S_t$ and $Q$ is a data structure such that $Q(x) = (1 \pm \eps) \|S_t x\|_p$ for all $x \in \bS^{d - 1}$. Then, Alice sends such a data structure $Q$ to Bob.  

On the one hand, we have for the subset $S_i = S_t$ that
\begin{equation}\label{eqn:the-one-hand}
\left|Q(x) - \|S_ix\|_p\right| \le n \eps, \quad \forall x \in \bS^{d-1}.    
\end{equation}
On the other hand, if $S_t \ne S_i$, we know from the construction of the subsets $S_i$ that $|S_i \setminus S_t| = |S_t \setminus S_i| \geq  n/6$. Applying Lemma~\ref{lem:min_error} to $S_i\cup(-S_i)$ and $S_t\cup(-S_t)$, we see that there exists an $x \in \bS^{d - 1}$ such that 
\[
\Abs{ \|S_ix\|_p - \|S_t x\|_p } \ge \frac{1}{6}n\cdot c_2 N^{-\frac{d+2p}{2(d-1)}} \geq 3 n\eps, 
\]
whence it follows that
\begin{equation}\label{eqn:the-other-hand}
\left|Q(x) - \|S_ix\|_p\right| \ge |\|S_i x\|_p - \|S_tx\|_p| - \left|\|S_tx\|_p - Q(x)\right| \geq 3 n \eps - n\eps = 2n \eps.
\end{equation}
Combining \eqref{eqn:the-one-hand} and \eqref{eqn:the-other-hand}, we immediately see that Bob can decide which of the two cases he is in by querying a sufficient number of points on $\bS^{d - 1}$, which implies a lower bound of $\Omega(\log m) = \Omega(n) = \Omega(\eps^{-\frac{2(d - 1)}{d + 2p}})$ bits of space.
\begin{theorem}
\label{thm:lower_bound}
Suppose that $p\in [1,\infty)\setminus 2\Z$ and $d$ are constants. Any data structure that solves the $\ell_p$-subspace sketch problem for dimension $d$ and accuracy parameter $\eps$ requires $\Omega(\eps^{-\frac{2(d - 1)}{d + 2p}})$ bits of space.
\end{theorem}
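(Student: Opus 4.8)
The proof follows the reduction sketched in the discussion above: we turn a space lower bound into a communication lower bound, with Lemma~\ref{lem:min_error} supplying the geometric separation. The plan is as follows. First I fix parameters: given $\eps$, choose $N = \Theta\big(\eps^{-\frac{2(d-1)}{d+2p}}\big)$ so that $c_2 N^{-\frac{d+2p}{2(d-1)}} = 12\eps$, where $c_2$ is the constant of Lemma~\ref{lem:min_error}, and set $\eta = C_1(d) N^{-\frac{1}{d-1}}$. Using the spherical cap volume bound (Lemma~\ref{lem:spherical cap}), I pack $n = \Theta(N)\le N/2$ points $p_1,\dots,p_n$ into a proper spherical cap $C(x_0,\sqrt{2-\eta})$ with $\|p_i-p_j\|_2\ge\eta$ for all $i\ne j$; since the cap is strictly smaller than a hemisphere, this also gives $\|p_i+p_j\|_2\ge\eta$, which is needed once we symmetrize. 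Set $S=\{p_1,\dots,p_n\}$.

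Next I build the hard family. By Lemma~\ref{lem:intersecting_family} (with $\alpha=1/2$, $\beta=1/4$) there are $m=2^{\Omega(n)}$ subsets $S_1,\dots,S_m\subseteq S$ with $|S_i|=n/2$ and $|S_i\cap S_j|\le n/4$, hence $|S_i\setminus S_j|=|S_j\setminus S_i|\ge n/4$ for $i\ne j$. In the communication game, Alice holds some $S_t$ and transmits to Bob a subspace-sketch data structure $Q$ for the point set $S_t$ (so $Q(x)=(1\pm\eps)\|S_tx\|_p^p$ for all $x\in\bS^{d-1}$, per the problem definition), of bit-length $s$; Bob must recover $t$. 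Since each summand $\abs{\inner{p_i,x}}^p\le 1$, we have $\|S_tx\|_p^p\le n/2$, so $\Abs{Q(x)-\|S_ix\|_p^p}\le \eps n/2\le n\eps$ for all $x$ when $i=t$.

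For $i\ne t$, the common part cancels: $\|S_ix\|_p^p-\|S_tx\|_p^p=\|(S_i\setminus S_t)x\|_p^p-\|(S_t\setminus S_i)x\|_p^p$. I apply Lemma~\ref{lem:min_error} to the origin-symmetric sets $A=(S_i\setminus S_t)\cup(-(S_i\setminus S_t))$ and $B=(S_t\setminus S_i)\cup(-(S_t\setminus S_i))$: these are disjoint, lie on $\bS^{d-1}$, and every cross distance is $\ge\eta$ by the pairwise $\eta$-separation of $S$ together with $\|p_i+p_j\|_2\ge\eta$. Since $A$ has $\ge n/2$ points and $\|Ax\|_p^p=2\|(S_i\setminus S_t)x\|_p^p$, the lemma yields $x^\star\in\bS^{d-1}$ with $\Abs{\|S_ix^\star\|_p^p-\|S_tx^\star\|_p^p}\ge \tfrac14 n\,c_2 N^{-\frac{d+2p}{2(d-1)}}=3n\eps$, hence $\Abs{Q(x^\star)-\|S_ix^\star\|_p^p}\ge 3n\eps-n\eps=2n\eps$. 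As $x\mapsto\|S_ix\|_p^p$ is Lipschitz on $\bS^{d-1}$ with constant polynomial in $n,p$, Bob can test every candidate $i$ by evaluating $Q$ on a sufficiently fine finite net of $\bS^{d-1}$: the worst-case discrepancy is below $\tfrac32 n\eps$ iff $i=t$, so Bob recovers $t$. This forces $s=\Omega(\log m)=\Omega(n)=\Omega\big(\eps^{-\frac{2(d-1)}{d+2p}}\big)$ bits; a randomized $Q$ with constant success probability is handled by running the argument on its success event.

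I expect no real obstacle here, since the heavy lifting is entirely in Lemma~\ref{lem:min_error}, which I may assume. The only care points are bookkeeping: accounting for the factor of $2$ in both the cardinality and the $\ell_p^p$-value introduced by symmetrization, observing that the subspace sketch approximates the additive $\ell_p^p$ quantity that Lemma~\ref{lem:min_error} controls (so the norm-versus-$p$th-power distinction never erodes the needed $\Theta(n)$ gap), and choosing the constant in $N$ so that $c_2 N^{-\frac{d+2p}{2(d-1)}}$ comfortably dominates the $O(n\eps)$ error budget of $Q$. All of this is routine.
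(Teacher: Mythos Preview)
Your proposal is correct and follows essentially the same route as the paper: pack an $\eta$-separated set $S$ of $n=\Theta(N)$ points into a sub-hemispherical cap, build a $2^{\Omega(n)}$-size intersecting family $\{S_i\}$ via Lemma~\ref{lem:intersecting_family}, and use Lemma~\ref{lem:min_error} to distinguish $S_t$ from every other $S_i$ via the sketch $Q$. Your bookkeeping is in fact slightly cleaner than the paper's text in one place: you apply Lemma~\ref{lem:min_error} to the symmetrized \emph{differences} $(S_i\setminus S_t)\cup(-(S_i\setminus S_t))$ and $(S_t\setminus S_i)\cup(-(S_t\setminus S_i))$, which is what the cross-separation hypothesis $\|A_i-B_j\|_2\ge\eta$ actually requires (the paper phrases it as applying the lemma to $S_i\cup(-S_i)$ and $S_t\cup(-S_t)$, but the common part must first be cancelled for the hypothesis to hold, and the preceding line about $|S_i\setminus S_t|\ge n/4$ makes clear this is the intent).
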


\begin{remark}
Our proof does not work when $p$ is an even integer, which is as expected. In this case, $\norm{Ax}_p^p$ is a polynomial and  the spherical harmonic series will be finite. The term $\max_k (r^{2k}\lambda_k^{-2})$ in \eqref{eq:6} will be a constant instead of a quantity depending on $N$.
\end{remark}
\section{$\ell_p$-Subspace Sketch Upper Bound}
\label{sec:upper_bound}

Our approach follows \cite{Mat96}, which deals with the case of $p=1$. The following is our key lemma, which is an analogue of \cite[Proposition 7]{Mat96} for a general $p$.

\begin{lemma}
\label{lem:coreset}
Let $P \in \R^d$ be an $N$-point set with each point having $\ell_2$ norm $O(1)$, and let $w\in \simplex{N-1}$ be an  associated weight vector. There is an $O(d^{3/2}N^d)$-time deterministic algorithm which computes a subset $P' \subset P$ of at most $\frac{7}{8}N$ points with a weight vector $w'\in \simplex{|P'|-1}$ such that with probability $1 - 1 / N$, for every $x \in \bS^{d - 1}$ 
\[
\left|\sum_i w_i |\inner{P_i, x}|^p - \sum_i w'_i |\inner{P'_i, x}|^p \right| = O(N^{-\frac{d +2p}{2(d - 1)}}\sqrt{\log N}) \; .
\]
\end{lemma}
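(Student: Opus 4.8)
The plan is to follow the high-level description of the upper bound from the techniques section, adapting Matousek's argument for $p=1$ to general integer $p$ via the tensor trick. First I would invoke the partition lemma (Lemma~\ref{lem:partition}) applied to the $N$-point set $P$ (after normalizing, we may assume each point has $\ell_2$-norm $\Theta(1)$, say exactly on the sphere after rescaling; the $O(1)$-norm case is handled by absorbing the scalar into the weights or noting it only affects constants). This produces disjoint $s$-point groups $P_1,\dots,P_t$ covering at least half the points, each of diameter $O(N^{-1/(d-1)})$, and with the property that for any $x$, the equator hyperplane $\{y:\inner{x,y}=0\}$ crosses only $O(N^{1-1/(d-1)})$ of the groups. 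Here $s = \Theta(d^p)$ is chosen so that Lemma~\ref{lem:weights} applies to the $d^p$-dimensional tensored points $A_i^{\otimes p}$: within each group we select $\Theta(d^p)$ points at random (keeping the un-selected points, so that $P'$ contains the complement of the selected sets, of size at most $\frac{7}{8}N$ for $s$ a large enough constant) with new nonnegative weights $w'$ so that the weighted barycenter $\sum_{i\in P_j} w_i A_i^{\otimes p}$ is preserved.

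Next I would fix a query $x\in\bS^{d-1}$ and classify each group $P_j$ as positive, negative, or zero type according to whether $\inner{A_i,x} > C\eta$, $< -C\eta$, or $|\inner{A_i,x}|\le C\eta$ for all $A_i\in P_j$, where $\eta = O(N^{-1/(d-1)})$ is the diameter bound and $C$ is an absolute constant; since each group has diameter $O(\eta)$, there is no fourth type. For positive-type groups, $\sum_{i\in P_j} w_i|\inner{A_i,x}|^p = \sum_{i\in P_j} w_i \inner{A_i,x}^p = \inner{\sum_{i\in P_j} w_i A_i^{\otimes p}, x^{\otimes p}}$, which is exactly preserved by the barycenter property of $w'$; negative-type groups are identical up to a global sign $(-1)^p$ since $p$ is odd (and when $p$ is even the sign is moot, though the main case of interest is odd $p$). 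So all error comes from zero-type groups, and crucially, for each point $A_i$ in a zero-type group, $|\inner{A_i,x}|\le C\eta$, so $|\inner{A_i,x}|^p \le (C\eta)^p$, which is much smaller than $\eta$ for $p\ge 2$ — this is the key improvement over $p=1$ that lets the additive error scale as $N^{-(d+2p)/(2(d-1))}$ rather than $N^{-(d+2)/(2(d-1))}$.

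The remaining work is a concentration argument: I would write the total zero-type contribution as a sum over zero-type groups, and within each group the selected-point estimator $\sum_{\text{selected }i} w_i' |\inner{A_i,x}|^p$ is an unbiased estimator of the true group contribution (this requires checking that the sampling/reweighting scheme of Lemma~\ref{lem:weights} preserves expectations for the function $t\mapsto|t|^p$, not just the linear/barycenter part — actually more carefully, one uses that the error in each group is bounded by the group's weighted mass times $(C\eta)^p$, and the random choice of which points to keep makes the per-group error mean-zero). Summing over the $O(N^{1-1/(d-1)})$ zero-type groups (only these have error, by the positive/negative analysis), each contributing a bounded-variance mean-zero term with magnitude $O(\eta^p \cdot (\text{group weight}))$, a Bernstein inequality gives that with probability $1 - 1/\poly(N)$ the total error is $O(\eta^p\sqrt{N^{1-1/(d-1)}}\sqrt{\log N})$ — here I need to track the weight normalization $\sum w_i = 1$ to see the variance is at most $(\text{max per-group weight})\cdot \eta^{2p}\cdot(\text{number of zero groups})$; roughly this is $O(N^{-1}\cdot \eta^{2p}\cdot N^{1-1/(d-1)})$ after noting masses sum to $1$. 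Plugging in $\eta = \Theta(N^{-1/(d-1)})$ gives error $O(N^{-p/(d-1)}\cdot N^{-1/(2(d-1))}\sqrt{\log N}) = O(N^{-(2p+1)/(2(d-1))}\sqrt{\log N})$; one then checks this matches the claimed $O(N^{-(d+2p)/(2(d-1))}\sqrt{\log N})$ after the correct bookkeeping of how the number of zero groups, their masses, and the $\eta^p$ factor combine — I expect the honest count gives the group-weight sum bounded appropriately so the exponent works out to $(d+2p)/(2(d-1))$. Finally I would take a union bound over an $\eps$-net of $\bS^{d-1}$ of size $N^{O(d)}$ and argue via Lipschitz continuity of $x\mapsto\sum w_i|\inner{A_i,x}|^p$ (Lipschitz constant $\poly(N)$ since all points have $O(1)$ norm) that controlling the error on the net controls it everywhere. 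The main obstacle I anticipate is the careful variance/mass bookkeeping in the Bernstein step — making sure the weight normalization, the number of zero-type groups, and the $\eta^{2p}$ bound on each summand combine to give exactly the exponent $(d+2p)/(2(d-1))$ rather than something weaker, and ensuring the sampling scheme of Lemma~\ref{lem:weights} really does yield unbiased, bounded-variance per-group estimators for the nonlinear function $|t|^p$ restricted to zero-type groups.
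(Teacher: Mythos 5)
Your proposal follows the same route as the paper's proof: partition via Lemma~\ref{lem:partition} into constant-size groups of small diameter, tensor to $d^p$ dimensions and apply Lemma~\ref{lem:weights} to draw a barycenter-preserving random subset of each group, split the analysis by whether the equator crosses the group, and finish with Bernstein plus a net. But the variance bookkeeping, which you yourself flag as the step you are unsure about, does not close as written: your calculation gives error exponent $(2p+1)/(2(d-1))$ rather than $(d+2p)/(2(d-1))$, a loss of a full $\sqrt{N}$ factor for $d\ge 2$. The bug is that you bound the per-group variance by $(\text{group weight})\cdot\eta^{2p}$, but it should be $(\text{group weight})^2\cdot\eta^{2p}$. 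The correct account is: the sampling deviation $X_j$ in an intersected group satisfies $|X_j|\le 2\cdot(\text{group weight})\cdot O(\eta^p)=O(N^{-1}\eta^p)=:U$, because the partition is applied only to points of weight $\le 2/N$ and the groups have constant size, so each group carries $O(1/N)$ mass (and the sampled weights sum to the same mass). Then $\Var[X_j]\le U^2$, and over the $|I|=O(N^{1-1/(d-1)})$ intersected groups Bernstein gives $|\sum_{j\in I}X_j|=O(U\sqrt{|I|}\sqrt{\log N})=O\!\left(N^{-1-\frac{p}{d-1}}\cdot N^{\frac12-\frac{1}{2(d-1)}}\sqrt{\log N}\right)=O\!\left(N^{-\frac{d+2p}{2(d-1)}}\sqrt{\log N}\right)$. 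You missed the squaring of the $O(1/N)$ group mass.

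Two smaller issues. First, your positive/negative/zero classification is not a complete case split (a group with $\inner{A_i,x}$ straddling $C\eta$ is none of the three); the paper simply splits by whether the equator $\{y:\inner{x,y}=0\}$ intersects the group, and when it does not, all the inner products have the same sign so the barycenter identity handles the group exactly without needing a threshold $C\eta$. Second, you say $P'$ ``contains the complement of the selected sets''---this is backwards: $P'$ consists of the randomly \emph{selected} points (together with the points not placed into any group), since it is the selected subset that carries the new weights and the barycenter/unbiasedness guarantees. On unbiasedness itself, you are right to flag it; the needed property is that the expected new weight of each original point equals its original weight, which is item (iv) of Lemma~\ref{lem:weights}, and this makes $\E\big[\sum_{y\in Q_j}v_y g(y)\big]=\sum_{y\in P_j}w_y g(y)$ for any function $g$, in particular $g(y)=\abs{\inner{x,y}}^p$.
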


Given matrix $A$ and error parameter $\eps$, we apply Lemma~\ref{lem:coreset} repeatedly as in~\cite{Mat96}, obtaining a sequence of subsets with $\frac{7}{8}N$, $(\frac{7}{8})^2 N$, $\ldots$ points, until a subset of size $\tilde{O}(\eps^{-\frac{2(d - 1)}{d + 2p}})$ is obtained. Note that the errors of the successive approximations form a geometric progression, and hence the final error will be $O(\eps)$. The following theorem follows immediately.

\begin{theorem}
\label{thm:coreset_additive}
Suppose that $A \in \R^{n \times d}$ satisfies that $\norm{A_i}_2 = O(1)$ for all $i$ and $p$ is a positive integer constant. There is an $O(d^{3/2}n^d)$-time deterministic algorithm which computes a matrix $B$ consisting of $m = O(\eps^{-\frac{2(d - 1)}{d + 2p}}\log^{\frac{d - 1}{d + 2p}}(\eps^{-1}))$ rows of $A$, and an associated weight vector $w\in \simplex{m-1}$, such that with high probability for every $x \in \bS^{d - 1}$,
\[
\left|\sum_{i=1}^m w_i |\inner{B_i, x}|^p - \frac{1}{n}\norm{Ax}_p^p \right| = O(\eps) \; .
\] 
\end{theorem}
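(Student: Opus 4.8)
\emph{Proof plan.} The plan is to build $B$ and $w$ by iterating the coreset reduction of Lemma~\ref{lem:coreset}. I would start from the uniformly weighted point set $(P^{(0)},w^{(0)})=(\{A_1,\dots,A_n\},(1/n,\dots,1/n))$, a legitimate input since $w^{(0)}\in\simplex{n-1}$ and $\norm{A_i}_2=O(1)$ by hypothesis, and which satisfies $\sum_i w^{(0)}_i\abs{\inner{A_i,x}}^p=\tfrac1n\norm{Ax}_p^p$. At stage $j$, apply Lemma~\ref{lem:coreset} to $(P^{(j)},w^{(j)})$ with $N_j:=|P^{(j)}|$ to obtain $(P^{(j+1)},w^{(j+1)})$ with $P^{(j+1)}\subseteq P^{(j)}$, $w^{(j+1)}\in\simplex{N_{j+1}-1}$, $N_{j+1}\le\tfrac78 N_j$, and such that with probability at least $1-1/N_j$,
\[
\sup_{x\in\bS^{d-1}}\Bigl|\sum_i w^{(j)}_i\abs{\inner{P^{(j)}_i,x}}^p-\sum_i w^{(j+1)}_i\abs{\inner{P^{(j+1)}_i,x}}^p\Bigr|\le E_j:=C\,N_j^{-\frac{d+2p}{2(d-1)}}\sqrt{\log N_j}
\]
for a constant $C=C(d,p)$. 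I would iterate until the first stage $J$ with $N_J\le m$, where $m=\Theta\bigl(\eps^{-\frac{2(d-1)}{d+2p}}\log^{\frac{d-1}{d+2p}}(1/\eps)\bigr)$; substituting this value confirms $m^{-\frac{d+2p}{2(d-1)}}\sqrt{\log m}=\Theta(\eps)$, i.e.\ it is exactly the right power of $\eps$ and of $\log(1/\eps)$. The output is $B$, the submatrix whose rows are the points of $P^{(J)}$, together with $w=w^{(J)}\in\simplex{|P^{(J)}|-1}$; by construction $B$ has at most $m$ rows of $A$.

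\emph{Correctness.} Conditioned on no stage failing, the triangle inequality telescopes the successive approximations, giving $\sup_{x\in\bS^{d-1}}\bigl|\sum_i w_i\abs{\inner{B_i,x}}^p-\tfrac1n\norm{Ax}_p^p\bigr|\le\sum_{j=0}^{J-1}E_j$. The crux is that each reduction shrinks the point set by a constant factor \emph{from both sides} --- Lemma~\ref{lem:coreset} keeps $\Omega(N)$ of the $N$ input points, not merely at most $\tfrac78 N$ --- so $N_j/N_{j+1}\in[\tfrac87,O(1)]$, and hence $E_{j+1}/E_j=(N_j/N_{j+1})^{\frac{d+2p}{2(d-1)}}\sqrt{\log N_{j+1}/\log N_j}\ge\rho$ for some constant $\rho=\rho(d,p)>1$, provided $N_j$ exceeds a threshold depending only on $d,p$ (needed only to keep $\sqrt{\log N_{j+1}/\log N_j}$ close to $1$). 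Since $N_j>m\to\infty$ for every $j<J$, this holds throughout, so $(E_j)_{j<J}$ grows at least geometrically and $\sum_{j=0}^{J-1}E_j=O(E_{J-1})$. Finally, because each step shrinks by at most a constant factor, $m<N_{J-1}=O(m)$, so $E_{J-1}=\Theta\bigl(m^{-\frac{d+2p}{2(d-1)}}\sqrt{\log m}\bigr)=\Theta(\eps)$ by the choice of $m$, and the $O(\eps)$ error bound follows. In particular $\log n$ never enters the final error: the geometric growth of the $E_j$ makes the sum dominated by its last few terms, all of which have $N_j=\Theta(m)$ and thus $\log N_j=\Theta(\log(1/\eps))$.

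\emph{Failure probability, running time, and the main obstacle.} There are $J=O(\log(n/m))=O(\log n)$ stages, each failing independently with probability at most $1/N_j\le 1/m$, so a union bound bounds the total failure probability by $O(\log(n)/m)$, which is $o(1)$ and can be driven below any prescribed $1/\poly(1/\eps)$ by amplifying the per-stage success probability --- this only enlarges the constant $C$ (the $\sqrt{\log N}$ factor comfortably absorbs a $\sqrt{\log(1/\delta)}$ term) and so changes none of the stated bounds. The running time is $\sum_{j=0}^{J-1}O(d^{3/2}N_j^d)\le O(d^{3/2}n^d)\sum_{j\ge 0}(7/8)^{jd}=O(d^{3/2}n^d)$ since $d=O(1)$. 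The one step that goes beyond routine bookkeeping is the geometric-progression argument for the $E_j$: it rests on the two-sided constant-factor shrinkage of the point set at each stage, which has to be extracted from the construction underlying Lemma~\ref{lem:coreset}; without it, the naive triangle inequality over $\Theta(\log n)$ stages yields only an $O(\eps\log n)$ bound, and the $\log^{\frac{d-1}{d+2p}}(1/\eps)$ (rather than $\poly\log n$) factor in $m$ would be lost.
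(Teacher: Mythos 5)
Your overall scheme — iterate Lemma~\ref{lem:coreset}, telescope the errors with the triangle inequality, and observe that $(E_j)$ grows geometrically so the sum is dominated by the last term — is exactly the paper's argument, and your bookkeeping of $m$, the failure probability, and the running time is correct.

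However, your diagnosed ``crux'' is a red herring. You assert that the two-sided shrinkage $N_j/N_{j+1}=O(1)$ is essential and must be excavated from the internals of Lemma~\ref{lem:coreset}, and that without it one is stuck with $O(\eps\log n)$. That is not so: the one-sided bound $N_{j+1}\le\tfrac78 N_j$ stated in the lemma already suffices. Writing $g(N)=N^{-\alpha}\sqrt{\log N}$ with $\alpha=\tfrac{d+2p}{2(d-1)}>\tfrac12$, note that $g$ is strictly decreasing once $N$ exceeds a constant. Then \emph{(a)} you do not need $N_{J-1}=O(m)$: since $N_{J-1}>m$ and $g$ is decreasing, $E_{J-1}=Cg(N_{J-1})<Cg(m)=O(\eps)$ directly; and \emph{(b)} for the geometric growth, $N_j\ge\tfrac87 N_{j+1}$ and monotonicity give $g(N_j)\le g\bigl(\tfrac87 N_{j+1}\bigr)$, hence
\[
\frac{E_{j+1}}{E_j}\;\ge\;\frac{g(N_{j+1})}{g\bigl(\tfrac87 N_{j+1}\bigr)}\;=\;\Bigl(\tfrac87\Bigr)^{\alpha}\sqrt{\frac{\log N_{j+1}}{\log N_{j+1}+\log\tfrac87}}\;\ge\;\rho>1
\]
once $N_{j+1}$ exceeds a threshold depending only on $d,p$ — which holds since $N_{j+1}>m$ for $j+1<J$. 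No upper bound on the shrinkage rate is used anywhere. The sum is then $\sum_{j<J}E_j\le E_{J-1}/(1-1/\rho)=O(E_{J-1})=O(\eps)$, and $\log n$ never appears. (Your observation that the construction does in fact keep $\Omega(N)$ points is true, but it is not needed, and leaning on it makes the argument look more fragile than it is.)
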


To prove Lemma~\ref{lem:coreset}, we first give intuition, which is inspired by the proof of~\cite[Proposition 7]{Mat96}.  Given a point set $P$, from Lemma~\ref{lem:partition} we know that for at least half of the points of $P$, we can divide them into groups $P_1, P_2, \ldots, P_t$ which satisfy the corresponding conditions (for the points that are not on the sphere, we scale them to unit vectors when applying the partition lemma). That is, (i) each $P_i$ is within a small area and (ii) for every $x \in \bS^{d - 1}$, the hyperplane $\{y: \inner{x, y} = 0\}$ only intersects a small number of sets among $P_i$.

Given a query point $x \in \bS^{d - 1}$, let $H$ be the hyperplane $\{y: \inner{x, y} = 0\}$. For the subset $P_i$, we first consider the case that $H$ does not intersect $P_i$. In this case we have 
\[
\sum_{y \in P_i} |\inner{x, y}|^p = \sum_{y \in P_i} \inner{x, y}^p \;.
\]
The tensor product trick $\inner{x, y}^p  = \inner{x^{\otimes p}, y^{\otimes p}} $ implies that
\[
\sum_{y \in P_i} |\inner{x, y}|^p = \sum_{y \in P_i} \inner{x^{\otimes p}, y^{\otimes p}} = \inner{x^{\otimes p},\sum_{y \in P_i} y^{\otimes p}}\;.
\]
Hence, what we need to store is $\sum_{y \in P_i} y^{\otimes p}$, which can be seen as a $d^p$-dimensional vector.

For the other case where $H$ intersects $P_i$, the method above will not work. However, note that in this case $|\inner{x, y}|$ is small for every $y \in P_i$ as each $P_i$ lies in a small region. We also know that $H$ intersects only a small number of groups $P_i$; therefore, sampling points from the intersection is enough to achieve an $\eps$-additive error.

However, one issue here is that it is not easy to determine whether $H$ intersects $P_i$ if we only store a limited number of points. To address this, for each subset $P_i$, we choose a random subset $T\subseteq P_i$ with associated weights $\{w_y\}_{y\in T}$ such that (i) $\sum_{y \in T} w_y \cdot y^{\otimes p} = \sum_{y \in P_i} y^{\otimes p}$ (which is helpful for the first case) and (ii) $\E \sum_{y \in T} w_y \cdot |\inner{x, y}|^p = \sum_{y \in P_i} \inner{x, y}^p$ (which we will show is enough for the second case). The following lemma will be useful.

\begin{lemma}[\cite{Mat96}, Lemma~8]
\label{lem:weights}
Let $P = \{P_1,\dots,P_s\} \subset \R^d$ be a set of $s\ge d + 1$ points with an associated weight vector $u\in \simplex{s-1}$. There is a deterministic algorithm which computes a group of the subsets $T_1, \dots, T_{s'}$, each associated with a probability $p_i$ and a weight vector $w_i\in \simplex{t_i-1}$, where $s' \le s - d$ and $t_i = |T_i|$, such that 
\begin{enumerate}[label=(\roman*)]
    \item $\sum_{i=1}^{s'} p_i = 1$;
    \item $t_i \leq d + 1$ for each $i \in [s']$;
    \item $\sum_{j=1}^{t_i} w_{i,j} T_{i,j} = \sum_{i=1}^s u_i P_i$ for each $T_i = \{T_{i,j}\}_{j=1,\dots,t_i}$;
    \item $\sum_{i \in I_k} p_i w_{i,j(i,k)} = u_k$ for each $k \in [s]$, where $I_k = \{i\in [s']: P_k\in T_i\}$ and $j(i,k)$ is the index $j$ such that $T_{i,j}=P_k$ for $i\in I_k$.
\end{enumerate}
Furthermore, the running time of the algorithm is dominated by that of finding a feasible solution to a linear program with $s$ variables and $d+1$ constraints.
\end{lemma}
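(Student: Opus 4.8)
This is a classical fact of convex geometry, and the plan is to derive it from Carath\'eodory's theorem applied to a polytope of weight vectors. Let $b = \sum_{i=1}^{s} u_i P_i$ denote the weighted barycenter and set $Q = \{\, v \in \simplex{s-1} : \sum_{i=1}^{s} v_i P_i = b \,\}$, a polytope that is nonempty since $u \in Q$. Two structural facts about $Q$ do all the work. First, every vertex of $Q$ has at most $d+1$ nonzero coordinates: $Q$ is cut out by the $s$ inequalities $v_k \ge 0$ together with the $d+1$ affine equalities $\sum_k v_k = 1$ and $\sum_k v_k (P_k)_\ell = b_\ell$ for $\ell = 1,\dots,d$, so at a vertex at least $s-(d+1)$ of those inequalities must be tight, forcing all but at most $d+1$ coordinates to vanish. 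Second, $Q$ lies in an affine subspace of dimension at most $s-1-d$ (assuming, as we may, that $P_1,\dots,P_s$ affinely span $\R^d$; otherwise one works inside their affine hull and the argument goes through with the smaller dimension), so by Carath\'eodory's theorem $u = \sum_{i=1}^{s'} p_i v^{(i)}$ for vertices $v^{(1)},\dots,v^{(s')}$ of $Q$ with $s' \le s-d$, all $p_i > 0$, and $\sum_i p_i = 1$.

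Given such a decomposition I would simply set $T_i := \supp(v^{(i)})$, $t_i := |T_i|$, and let $w_i$ be the restriction of $v^{(i)}$ to $T_i$, which lies in $\simplex{t_i-1}$. Properties (i) and (ii) are then immediate from the two facts above. Property (iii) holds because $v^{(i)}\in Q$ gives $\sum_{j} w_{i,j} T_{i,j} = \sum_{k} v^{(i)}_k P_k = b = \sum_k u_k P_k$. Property (iv) is nothing more than the $k$-th coordinate of the vector identity $\sum_i p_i v^{(i)} = u$: for every $k$, $\sum_{i \in I_k} p_i w_{i,j(i,k)} = \sum_{i=1}^{s'} p_i v^{(i)}_k = u_k$.

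To obtain the algorithm and its running time I would make the decomposition constructive via the standard iterative procedure behind Carath\'eodory's theorem: compute $b$; find a vertex $v^{(1)}$ of $Q$ by solving a linear program; extend the segment from $v^{(1)}$ through the current point $u$ until it meets the boundary of $Q$ at a point $u_1$, so that $u$ is a convex combination of $v^{(1)}$ and $u_1$; and recurse on $u_1$, which lies in a face of $Q$ of strictly smaller dimension. The recursion has depth at most $s-d$, each level contributing one vertex, and each step is a feasibility / min-ratio computation for a linear system in $s$ variables with $d+1$ equality constraints --- exactly the cost claimed.

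The lemma is not deep, so I do not expect a genuine obstacle; the one place requiring care is the dimension bookkeeping --- in particular, counting the normalization constraint $\sum_k v_k = 1$ among the $d+1$ equalities, so that vertices of $Q$ have support at most $d+1$ and Carath\'eodory returns at most $s-d$ (rather than $s-d+1$) terms. The affinely-dependent case is handled by the passage to the affine hull noted above and causes no real difficulty.
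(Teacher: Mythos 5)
Your proof is correct and coincides with the classical Carath\'eodory argument underlying Matousek's Lemma~8: the polytope $Q=\{v\in\simplex{s-1}:\sum_k v_kP_k=\sum_k u_kP_k\}$ has vertices with support of size at most $d+1$ (basic feasible solutions of a system with $d+1$ affine constraints), and $u$ decomposes as a convex combination of at most $\dim Q+1\le s-d$ such vertices, which yields the $T_i$, $w_i$, $p_i$ with properties (i)--(iv) exactly as you verify; the iterative vertex-stripping procedure gives the claimed algorithm and running time.

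One small slip is in your degenerate-case remark. If the $P_i$ only affinely span a $d'$-dimensional flat with $d'<d$, passing to the affine hull gives vertices of $Q$ with support at most $d'+1\le d+1$, so property~(ii) still holds; but then $\dim Q=s-1-d'$ is \emph{larger} than $s-1-d$, so Carath\'eodory only gives $s'\le s-d'$, which can exceed the stated $s-d$. The bound $s'\le s-d$ therefore genuinely requires the affine-spanning hypothesis and does not ``go through with the smaller dimension'' as you assert. This is harmless here, since the paper's proof of Lemma~\ref{lem:coreset} uses only property~(ii), but it is the one place in your argument where the bookkeeping does not quite close.
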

We now are ready to prove our Lemma~\ref{lem:coreset}.
\begin{proof}[Proof of Lemma~\ref{lem:coreset}]
Since $\sum_i u_i = 1$, there are at least $\frac{1}{2}N$ points with weight $w_i \le \frac{2}{N}$. Applying Lemma~\ref{lem:partition} to these points with $s = 2(d^p + 1)$ (recall that $d$ and $p$ are both constants), we obtain a collection of disjoint $s$-points subsets $P_1, ... ,P_t$, each of which has diameter $O(N^{\frac{1}{d - 1}})$. Furthermore, for every $x \in \bS^{d - 1}$, the corresponding hyperplane $\{y: \inner{x, y} = 0\}$ intersects at most $O(N^{1 - \frac{1}{d - 1}})$ subsets. For the remaining points, we keep them with the same weights. For each $P_i$, we sample at most half of the points for each group as below. 

For each $x \in \R^d$, let $T(x) = x^{\otimes p}$ denote its $p$-th tensor product, viewed as a $d^p$-dimensional vector. For each subset $P_i$, let $T(P_i) = \{T(x): x \in P_i\}$ and $w(P_i) = \sum_{y \in P_i} w_y$. We then apply Lemma~\ref{lem:weights} to $T(P)$ with the normalized weights $w'_y = w_y / w(P_i)$, obtaining a group of the subsets $T_1, \dots, T_{s'}$ with weights $v_1, \dots, v_{s'}$ and probabilities $p_1 \dots ,p_s'$. We choose a random index $j \in \{1, 2, \dots, s'\}$ according to probabilities $p_1, \dots, p_s'$ and take the subset $T_j$ to be our sample set of the points. The guarantee of Lemma~\ref{lem:weights} implies that $T_j$ contains at most $d^p + 1$ points, which is at most half of the size of $P_i$. Repeating this procedure for each $P_i$, we finally obtain a subset of $P$ containing at most $\frac{7}{8}N$ points.

Now we analyze correctness of our algorithm. Fix $x \in \bS^{d - 1}$. For each $P_i$, let $Q_i$ denote the subsets of $P_i$ with the points we sampled in the procedure above with the associated weight $v_i$. It suffices to show that
\[
\left|\sum_i \sum_{y \in Q_i} v_y |\inner{x, y}|^p - \sum_i \sum_{y \in P_i} w_y |\inner{x, y}|^p\right| = O(N^{-\frac{d +2p}{2(d - 1)}}\sqrt{\log N}) 
\]
holds with high probability.

Let $I$ denote the set of indices in $\{1, 2, \ldots ,t\}$ for which the hyperplane $H = \{y: \inner{x, y} = 0\}$ intersects $P_i$. Then for each $i \notin I$, from the condition 
\[
\sum_{y \in Q_i }v_y y^{\otimes p} = \sum_{y \in P_i} w_y y^{\otimes p}
\]
we have that
\[
\inner{x^{\otimes p}, \sum_{y \in Q_i} v_y y^{\otimes p}} = \sum_{y \in Q_i} v_y \inner{x, y}^p =  \sum_{y \in P_i} w_y \inner{x, y}^p = \inner{x^{\otimes p}, \sum_{y \in P_i} w_y y^{\otimes p}} \;.
\]
Note that in this case $\inner{x, y}$ has the same sign for all $y \in P_i$. Hence
\[
\sum_{y \in Q_i} v_y |\inner{x, y}|^p =  \sum_{y \in P_i} w_y |\inner{x, y}|^p
\]
Now we focus on the case $i \in I$. Recall that $|I| = O(N^{1 - \frac{1}{d - 1}})$. Define a random variable $X_i$ as
\[
X_i = \sum_{y \in Q_i} v_y |\inner{x, y}|^p -  \sum_{y \in P_i} w_y |\inner{x, y}|^p \;,
\]
where the randomness is taken over the choice of the subsets $Q_i$. Lemma \ref{lem:weights}(iii) implies that 
\[
\E\left[\sum_{y \in Q_i} v_y |\inner{x, y}|^p\right] = \sum_{y \in P_i} w_y |\inner{x, y}|^p \;.
\]
Hence $\E X_i = 0$. Since $H$ intersects $P_i$ and the diameter of $P_i$ is $O(N^{-\frac{1}{d-1}})$, it holds that $|\inner{x, y}|^p = O(N^{-\frac{p}{d - 1}})$ for all $y \in Q_i$. Recalling our definitions of $v_y$ and $s$, we have $\sum_{y\in Q_y} {v_y} = w(P_i) = O(s/N)$ and thus
$|X_i| = O(N^{-1 - \frac{p}{d - 1}})$. Let $U$ be this upper bound for $|X_i|$. It then follows from Bernstein's inequality that
\[
\Pr\left[\left|\sum_{i \in I}X_i\right| > \lambda U \sqrt{|I|}\right] < 2e^{-\lambda^2 / 2},
\]
where $U\sqrt{|I|} = O(N^{-\frac{d +2p}{2(d - 1)}})$. Taking $\lambda \sim \sqrt{\log N}$, we obtain that for a fixed $x$,
\begin{equation}\label{eqn:fixed_x_bound}
\left|\sum_i \sum_{y \in Q_i} v_y |\inner{x, y}|^p - \sum_i \sum_{y \in P_i} w_y |\inner{x, y}|^p\right| = O(N^{-\frac{d +2p}{2(d - 1)}}\sqrt{\log N})
\end{equation}
with probability at least $1 - 1/\poly(N)$.

Next we do a net argument. Choose a constant $D > \frac{d+2p}{2(d-1)}$ and let $\gamma = N^{-D}$. We can take a $\gamma$-net $\mathcal{N}$ on $\bS^{d - 1}$ with $O(\gamma^{-(d - 1)})$ points. By a union bound, we have that with probability at least $1 - 1/N$, the bound~\eqref{eqn:fixed_x_bound} holds for all $x\in\mathcal{N}$ simultaneously. Note that for two $x, x' \in \bS^{d - 1}$, if $\norm{x - x'}_2 \le \gamma$, then $\abs{|\inner{x, y}|^p - |\inner{x', y}|^p} = O(\gamma)$. It follows that with probability at least $1 - 1/N$, the target bound \eqref{eqn:fixed_x_bound} holds for all $x \in \bS^{d - 1}$ simultaneously, which is what we need.

We now analyze the time complexity of our algorithm. The first step is to compute the subsets $P_1, \dots, P_t$, which can be done in $O(N^{d-\frac{1}{d-1}})$ time by Lemma~\ref{lem:partition}. Then, for each subset $P_i$, by Lemma~\ref{lem:weights}, the subset $T$ and the associated weights $w$ can be computed in $\tilde{O}(d^{\frac{3}{2}}s) = \tilde{O}(d^{\frac{3}{2}+p})$ time (for instance, using the LP algorithm in \cite{LS15}). Therefore, the total runtime of the algorithm is $O(d^{3/2}N^{d})$, as claimed.
\end{proof}

\paragraph{Achieving $(1 \pm \eps)$-Multiplicative Error.} Consider a general matrix $A \in \R^{n \times d}$. Without loss of generality, assume that $A$ has full column rank. 

Consider the John ellipsoid of $Z(A) := \{x\in \R^d: \norm{Ax}_p \leq 1\}$. There exists an invertible linear transformation $T:\R^d\to \R^d$ such that $B_{\ell_2}^d \subseteq Z(AT)\subseteq \sqrt{d}B_{\ell_2}^d$. Let $A'=AT$. Then $1/\sqrt{d} \leq \norm{A'x}_p \leq 1$ when $x\in \bS^{d-1}$. Let $A''_i = A'_i/\norm{A'_i}_2$. Then $A''_i \in \bS^{d-1}$. Note that
\[
1\geq \norm{A'x}_p^p = \sum_i \norm{A'_i}_2^p \abs{\inner{A''_i, x}}^p,\quad \forall x\in \bS^{d-1}.
\]
Integrate both sides over $\bS^{d-1}$ w.r.t.\ $x$. Observe that $\int_{\bS^{d-1}} \abs{\inner{u,x}}^p d\sigma_{d-1}(x)$ is a constant, independent of $u$, whenever $u\in \bS^{d-1}$. Denote this constant by $\beta_{d,p}$. It follows that
\[
1 \geq \sum_i \norm{A'_i}_2^p \cdot \beta_{d,p},
\]
and thus $\norm{A'x}_p^p \geq 1/d^{p/2}\geq  (\beta_{d,p}/d^{p/2}) \sum_i \norm{A'_i}_2^p$ for all $x\in \bS^{d-1}$.

Next, define weights $w_i' = w_i \norm{A_i'}_2^p$. Then
\[
\sum_i w_i \abs{\inner{A'_i,x}}^p = \sum_i w_i' \abs{\inner{A''_i,x}}^p.
\]
Suppose that $w_i = 1/n$ for all $i$. 
We apply Lemma~\ref{lem:coreset} to $A''$ with weights $w''_i = w_i'/\sum_j w_j'$ and 
obtain a subset of points $W$ with weights $v_i$, where $|W| = O(\eps^{-\frac{2(d - 1)}{d + 2p}})$, such that with high probability,  
\begin{equation}\label{eqn:intermediate_bound_normalized}
\left|\sum_i v_i |\inner{W_i, x}|^p - \frac{1}{n \sum_j w_j'} \norm{A'x}_p^p \right| = O(\eps), \quad \forall x \in \bS^{d - 1}.
\end{equation}
Recall that we showed above that  $\norm{A'x}_p^p \geq c(d,p)\cdot n\sum_j w_j' = c(d,p)\sum_i \norm{A'_i}_2^p$ for all $x\in \bS^{d-1}$, where $c(d,p) > 0$ is some constant depending only on $d$ and $p$. 

Let $v_i' = n v_i \sum_j w_j'$. It follows from~\eqref{eqn:intermediate_bound_normalized} that 
\[
\left|\sum_i v_i' |\inner{W_i, x}|^p - \norm{A' x}_p^p \right| = O(\eps)\norm{A' x}_p^p, \quad \forall x \in \bS^{d - 1}.
\]
Therefore, 
\[
\left|\sum_i v_i' |\inner{(T^{-1})^\top W_i, Tx}|^p - \norm{ATx}_p^p \right| = O(\eps)\cdot \norm{ATx}_p^p, \quad \forall x\in\R^d.
\]
This implies that the rows of $W T^{-1}$ (which form a row-subsampled matrix of $A$) with weights $v_i'$ are exactly what we need. Our final theorem is now immediate.

\begin{theorem}
\label{thm:coreset_multiplicative}
Suppose that $A \in \R^{n \times d}$ and $p$ is a positive integer constant. There is a polynomial-time algorithm which computes a subset $B$ of $m = \tilde{O}(\eps^{-\frac{2(d - 1)}{d + 2p}})$ rows of $A$ and an associated weight vector $w\in\R^m$, such that with high probability for every $x \in \bS^{d - 1}$,
\[
\left|\sum_{i=1}^m w_i |\inner{B_i, x}|^p - \norm{Ax}_p^p \right| = O(\eps) \cdot \norm{Ax}_p^p \; .
\] 
\end{theorem}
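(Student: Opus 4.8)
The plan is to bootstrap the \emph{additive}-error coreset of Theorem~\ref{thm:coreset_additive} (itself obtained by iterating Lemma~\ref{lem:coreset}) into a \emph{multiplicative}-error one by first applying a linear change of coordinates that makes the instance well-conditioned. Assume without loss of generality that $A$ has full column rank, so that $Z(A)=\{x\in\R^d:\norm{Ax}_p\le 1\}$ is an origin-symmetric convex body, and let $T$ be an invertible linear map taking the John ellipsoid of $Z(A)$ to $\B_{\ell_2}^d$, so that $\B_{\ell_2}^d\subseteq Z(AT)\subseteq\sqrt{d}\,\B_{\ell_2}^d$. Set $A'=AT$; then $d^{-1/2}\le\norm{A'x}_p\le 1$ for every $x\in\bS^{d-1}$. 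I would also write $A''_i=A'_i/\norm{A'_i}_2\in\bS^{d-1}$, so that $\norm{A'x}_p^p=\sum_i\norm{A'_i}_2^p\Abs{\inner{A''_i,x}}^p$.

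The first real step is the uniform two-sided estimate $\norm{A'x}_p^p=\Theta\!\big(\sum_i\norm{A'_i}_2^p\big)$ for all $x\in\bS^{d-1}$. The upper direction is immediate since $\norm{A'x}_p\le 1$. For the lower direction I would integrate the identity for $\norm{A'x}_p^p$ over $x\in\bS^{d-1}$ and use that $\int_{\bS^{d-1}}\Abs{\inner{u,x}}^p\,d\sigma_{d-1}(x)=\beta_{d,p}$ is a constant independent of $u\in\bS^{d-1}$; this yields $\beta_{d,p}\sum_i\norm{A'_i}_2^p\le 1$, and combining with $\norm{A'x}_p^p\ge d^{-p/2}$ gives $\norm{A'x}_p^p\ge c(d,p)\sum_i\norm{A'_i}_2^p$ with $c(d,p)=\beta_{d,p}\,d^{-p/2}$.

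Next I would run the additive-error machinery on the normalized instance. Since each $A''_i$ is a unit vector, iterating Lemma~\ref{lem:coreset} (exactly as in the derivation of Theorem~\ref{thm:coreset_additive}), applied to $A''$ with the weight vector $w''_i=\norm{A'_i}_2^p/\sum_j\norm{A'_j}_2^p\in\simplex{n-1}$, produces a subset $W$ of the rows of $A''$ of size $\tilde{O}(\eps^{-\frac{2(d-1)}{d+2p}})$ and weights $v\in\simplex{|W|-1}$ with $\Abs{\sum_i v_i\Abs{\inner{W_i,x}}^p-\sum_j w''_j\Abs{\inner{A''_j,x}}^p}=O(\eps)$ for all $x\in\bS^{d-1}$. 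Since $\sum_j w''_j\Abs{\inner{A''_j,x}}^p=\norm{A'x}_p^p/\sum_j\norm{A'_j}_2^p$, multiplying through by $\sum_j\norm{A'_j}_2^p$ and invoking the lower bound from the previous paragraph converts the $O(\eps)$ additive error into an $O(\eps)\norm{A'x}_p^p$ multiplicative one. Finally I would undo the change of coordinates: substituting $x\mapsto Tx/\norm{Tx}_2$ and using that both sides are homogeneous of degree $p$ (so the relative error is scale-invariant), the rows $WT^{-1}$---which are scalar multiples of genuine rows of $A$---together with the weights rescaled by $\norm{A'_i}_2^{-p}\sum_j\norm{A'_j}_2^p$ give a $(1\pm O(\eps))$-multiplicative approximation to $\norm{Ax}_p^p$ on $\bS^{d-1}$, hence on all of $\R^d$. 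Rescaling $\eps$ by a constant gives the stated bound, and polynomial running time follows from the (constant-factor) John ellipsoid computation, the LP-based routine of Lemma~\ref{lem:weights}, and the $O(d^{3/2}n^d)$ partition step inside Lemma~\ref{lem:coreset}.

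The bookkeeping of coordinate changes and weights is routine; \textbf{the crux} is the uniform lower bound $\norm{A'x}_p^p=\Omega(\sum_i\norm{A'_i}_2^p)$, since this is precisely what makes an $O(\eps)$ \emph{additive} guarantee on the normalized instance pull back to an $O(\eps)$ \emph{relative} guarantee on $A$; without it, a row contributing little to $\norm{Ax}_p^p$ in some direction could suffer an additive error that is not a small relative error. Its proof, however, is short once one views $\norm{A'x}_p^p$ as a weighted sum of $\Abs{\inner{A''_i,x}}^p$ and exploits rotational invariance through the constant $\beta_{d,p}$, together with the fact that the John ellipsoid forces $\norm{A'x}_p$ to be bounded away from $0$ on $\bS^{d-1}$.
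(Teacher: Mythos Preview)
Your proposal is correct and follows essentially the same approach as the paper: the John-ellipsoid normalization $A'=AT$, the key lower bound $\norm{A'x}_p^p\ge(\beta_{d,p}/d^{p/2})\sum_i\norm{A'_i}_2^p$ via spherical integration, the application of the additive coreset to the unit-norm rows $A''_i$ with weights proportional to $\norm{A'_i}_2^p$, and the pullback through $T^{-1}$ are exactly the paper's steps. The only point the paper elaborates further is the polynomial-time computation of a constant-factor John ellipsoid (via the shallow-cut ellipsoid method on $Z(U)$ after an orthonormal decomposition $A=UT$), which you correctly flag as sufficient but do not detail.
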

\begin{proof}
First, we show that a constant-factor approximation of the John ellipsoid of $Z(A)$ can be found in polynomial time. We can compute in polynomial time a decomposition $A=UT$, where $U\in \R^{n\times d}$ has orthonormal columns and $T\in\R^{d\times d}$ is an invertible matrix. It then suffices to find the John ellipsoid of $Z(U)$. It is clear that $B(0,r)\subseteq Z(U)\subseteq B(0,R)$ for $r=1/n$ and $R=n^{\max\{1/2-1/p,0\}}$ and that a separation oracle for $Z(U)$ can be implemented in polynomial time. Then an ellipsoid $E$ satisfying $E\subseteq Z(U)\subseteq \sqrt{d(d+1)}E$ can be found in polynomial time via the shallow-cut ellipsoid method~\cite[Theorem 4.6.3]{GLS}. Therefore,  $T^{-1}E \subseteq Z(A)\subseteq \sqrt{d(d+1)} T^{-1}E$, that is, $T^{-1}E$ is a constant-factor approximation of the John ellipsoid of $Z(A)$.

Correctness then follows from the discussion preceding the theorem statement (which goes through with a constant-factor approximation of John's ellipsoid) and Theorem~\ref{thm:coreset_additive}, which also implies that the remaining procedure can be completed in polynomial time.
\end{proof}

\section{$\ell_p$-Subspace Sketch in a One-Pass Stream} 
\label{sec:stream}
In this section, we implement our algorithm in the previous section in a one-pass stream, where each row arrives one at a time. We assume that each entry of matrix $A$ can be saved in $\log(n)$ bits of space. We first show that a coreset can be constructed in linear space when the number of rows is not large, which will be used as a subroutine in our full algorithm. Then we present the full algorithm, which is based on the standard merge-and-reduce paradigm and uses an additional factor of $\polylog(n)$ space. Finally, we show that the $\polylog(n)$ factor can be reduced to $\poly(\log\log n)$. At the end of this section, we shall give another $O(\eps)$-additive error streaming algorithm with $O(1)$ update time, with a slightly worse space complexity $\tilde{O}(\eps^{-\frac{2(d - 1)}{d + 2p - 1}})$ for general $d = O(1)$ and the near-tight bound $\tilde{O}(\eps^{-\frac{2(d - 1)}{d + 2p}})$ for $d \le 2p + 2$.

\subsection{Basic Step: Coreset}
Our algorithm is based on the following lemma.

\begin{lemma}
\label{lem:coreset_space}
Suppose that $p$ is a positive integer constant, $A \in \R^{n \times d}$ and $w\in \simplex{n-1}$ is the associated weight vector. There is an algorithm \textsc{Coreset($A, w, \eps$)} which computes a subset $B$ of $m = \tilde{O}(\eps^{-\frac{2(d - 1)}{d + 2p}})$ rows of $A$ and a weight vector $v\in \simplex{m-1}$, such that with high probability, it holds for every $x \in \bS^{d - 1}$,
\[
\left|\sum_i v_i |\inner{B_i, x}|^p - \sum_i w_i |\inner{A_i, x}|^p \right| = O(\eps) \cdot \left(\sum_i w_i |\inner{A_i, x}|^p \right) \; .
\] 
Furthermore, the algorithm can be implemented in $O(n)$ space.
\end{lemma}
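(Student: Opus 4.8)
The plan is to run the algorithm of Theorem~\ref{thm:coreset_multiplicative} essentially verbatim, so that the multiplicative approximation guarantee is inherited for free, and to supply the one genuinely new ingredient: a space accounting showing every step fits in $O(n)$ words. Since $d = O(1)$, the matrix $A$ occupies only $O(nd) = O(n)$ words, so I would keep it in memory. To reduce to the unweighted setting of Theorem~\ref{thm:coreset_multiplicative}, I would replace the $i$-th row $A_i$ by $A_i w_i^{1/p}$ (well-defined since $p$ is a positive integer and $w_i \ge 0$); the resulting matrix $\tilde A$ satisfies $\norm{\tilde A x}_p^p = \sum_i w_i \abs{\inner{A_i, x}}^p$, a coreset of rows of $\tilde A$ is a coreset of scaled rows of $A$, and this yields exactly the form demanded by the lemma (with the returned weights nonnegative, renormalized to lie in $\simplex{m-1}$ if one insists). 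It then remains to bound the space of the three phases of the algorithm behind Theorem~\ref{thm:coreset_multiplicative}: (a) computing the linear map that normalizes the John ellipsoid of $Z(\tilde A)$; (b) the partition of Lemma~\ref{lem:partition}; and (c) the iterated application of Lemma~\ref{lem:coreset}.

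For phase (a), following the proof of Theorem~\ref{thm:coreset_multiplicative} I would first compute a factorization $\tilde A = UT$ with $U$ having orthonormal columns and $T \in \R^{d \times d}$ invertible; this comes from the Gram matrix $\tilde A^{\top}\tilde A$, which can be accumulated in one pass over the stored rows using $O(d^2) = O(1)$ extra space, after which $T, T^{-1}$ are computed in $O(1)$ space. The key point is that $U$ is never materialized: the shallow-cut ellipsoid method applied to $Z(U) = \{x : \norm{Ux}_p \le 1\}$ needs only the associated oracle, and since $Ux = \tilde A(T^{-1}x)$, both $\norm{Ux}_p^p = \sum_i \abs{\inner{\tilde A_i, T^{-1}x}}^p$ and a subgradient of $x \mapsto \norm{Ux}_p$ are evaluated by a single pass over the stored rows using $O(1)$ extra space. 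The ellipsoid method itself maintains a single ellipsoid in $\R^d$, i.e.\ $O(d^2) = O(1)$ numbers, and with the bounds $r = 1/n$, $R = n^{O(1)}$ from the proof of Theorem~\ref{thm:coreset_multiplicative} it terminates in $\poly(d)\log(R/r) = \poly(\log n)$ iterations, so phase (a) runs in $O(n)$ space and outputs the normalizing map in $O(1)$ space. I would then overwrite the stored matrix by $A' := \tilde A T'$ and by the row-normalized matrix $A''$, together with the scalars $\norm{A'_i}_2^p$, still in $O(n)$ space.

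For phases (b) and (c), consider one application of Lemma~\ref{lem:coreset} to an $N$-point set. Picking out the $\ge N/2$ points of small weight is done in $O(N)$ space; the deterministic partition of Lemma~\ref{lem:partition} produces $O(N)$ blocks of total size $O(N)$ and, for constant $d$, can be computed in $O(N) = O(n)$ space; each block has $O(1)$ points, so forming the $d^p = O(1)$-dimensional tensors and invoking Lemma~\ref{lem:weights}---a linear program with $O(1)$ variables and $d+1$ constraints---costs $O(1)$ space per block, while the random choice of one of the returned subsets uses $O(1)$ random bits. The net argument in the proof of Lemma~\ref{lem:coreset} lives entirely in the analysis and costs no space. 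Finally, as in Theorem~\ref{thm:coreset_additive}, I would apply Lemma~\ref{lem:coreset} repeatedly to point sets of geometrically shrinking sizes $N, \frac{7}{8}N, (\frac{7}{8})^2 N, \dots$ down to $m = \tilde{O}(\eps^{-\frac{2(d-1)}{d+2p}})$, storing only the current weighted point set; the space stays $O(n)$ throughout, and a union bound over the $O(\log n)$ rounds (with a mild boost of each round's success probability) preserves the high-probability guarantee.

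The hard part, I expect, is the two places where one cannot simply cite an earlier result: (i) setting up the shallow-cut ellipsoid method so that $Z(U)$ is touched only through its oracle, ensuring the $n \times d$ matrix $U$ is never stored and each oracle call uses $O(1)$ space beyond the input; and (ii) checking that the deterministic partition underlying Lemma~\ref{lem:partition} admits an $O(N)$-space (not merely polynomial-space) implementation when $d$ is constant. Everything else is a routine space audit of the procedure already analyzed for Theorem~\ref{thm:coreset_multiplicative}.
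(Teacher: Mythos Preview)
Your proposal is correct and follows essentially the same route as the paper's proof: a space audit of the algorithm behind Theorem~\ref{thm:coreset_multiplicative}, splitting into the John-ellipsoid normalization (handled via the shallow-cut ellipsoid method with an $O(n)$-space separation oracle) and the iterated invocations of Lemma~\ref{lem:coreset}, each shown to run in $O(N)$ space on an $N$-point input. The paper's proof is terser---it dispatches the partition step with the phrase ``can be found in $O(N)$ space by enumeration'' rather than asserting that Matou\v{s}ek's deterministic algorithm has linear space, and it carries the input weights $w$ directly through Lemma~\ref{lem:coreset} rather than absorbing them into the rows via $\tilde A_i = w_i^{1/p} A_i$---but these are presentational, not substantive, differences.
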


\begin{proof}
The ellipsoid method employed in the proof of Theorem~\ref{thm:coreset_multiplicative} operates in $\R^d$ with $\poly(d)$ space, except for the separation oracle which can be clearly implemented in $O(n)$ space. Therefore, the linear transformation that normalizes the John ellipsoid of $Z(A)$ can be computed in $O(n)$ space. Recall that we only need a constant-factor approximation. Hence,  we can assume that after the linear transformation, each entry can still be stored in $O(\log n)$ bits of space.

Since we only need to invoke Lemma~\ref{lem:coreset} a total of $\log n$ times, it suffices to show that $O(N)$ space is enough in each invocation, where $N$ is the number of rows of the input. From the proof of Lemma~\ref{lem:coreset}, there exist desired groups $P_1,\dots,P_t$, which can be found in $O(N)$ space by enumeration. Each subset $P_i$ corresponds to only $O(s)$ rows, and thus the weights can be computed in $\poly(s)$ space. Thus, each run of Lemma~\ref{lem:coreset} can be implemented in $O(N)$ space. This finishes the proof.
\end{proof}

\subsection{Merge and Reduce}
Given the coreset procedure (Lemma~\ref{lem:coreset_space}), the general coreset framework in \cite{BDM+20} can be readily applied, leading to an algorithm similar to Algorithm 6 therein. For completeness, a full description of our algorithm is presented in Algorithm~\ref{alg:1}. 
We maintain a number of blocks $\B_0,\B_1,\dots$, each of size $\ms$ (which is determined by the coreset size). The most recent rows are stored in $\B_0$; whenever $\B_0$ is full, the successive non-empty blocks $\B_0,\ldots,\B_i$ are merged and reduced to a new coreset, which will be stored in $\B_{i+1}$. Since there are $n$ data points in the stream, the next lemma guarantees that maintaining $(\log n + 1)$ blocks $\B_0,\dots,\B_{\log n}$ suffices. 

\begin{algorithm}[h]
    \caption{Merge-and-reduce framework for the $\ell_p$ subspace sketch for constant $d$.}
    \label{alg:1}
    \textbf{Input:} A stream of rows $a_1, a_2, \ldots, a_n \in \R^{1 \times d}$ with weights $u_1, \ldots, u_n$, and approximation factor $\eps$\;
    \textbf{Output:} A coreset $B$ with associated weights $w$ \;
    \SetAlgoLined
    Initialize blocks $\B_0,\B_1,\dots,\B_{\log n}\gets\emptyset$, $\ms \gets \tilde{O}(\gamma^{-\frac{2(d - 1)}{d + 2p}})$ where $\gamma = \frac{\eps}{\log n}$\;
    \ForEach{row $a_t$ and weight $u_t$}{
    \uIf{$\B_0$ does not contain $\ms$ rows}{
        $\B_0\gets a_t\circ\B_0$\;
    }
\Else{
Let $i>0$ be the minimal index such that $\B_i=\emptyset$\;

$\B_i, w_i \gets\textsc{Coreset}\left(\textbf{M}, w,\frac{\eps}{\log n}\right)$, where $\textbf{M}=\B_0\circ\dots\circ\B_{i-1}$ and $w=w_0\circ \dots \circ w_{i-1}$\;

\For{$j=0$ \KwTo $j=i-1$}{
$\B_j\gets\emptyset$ \;
}
$\B_0 \gets a_t$, $w_0 \gets u_t $\;
}
}
$\B^\ast, w^\ast \gets \textsc{Coreset}(\B_{\log n}\circ\dots\circ\B_0, w_{\log n} \circ \dots \circ w_{0}, \eps)$\;
\KwRet $\B^\ast$ and $w^\ast$
\end{algorithm}

\begin{lemma}[\cite{BDM+20}]
\label{lem:merge_and_reduce}
Suppose that $\B_0,\dots,\B_{i-1}$ are all empty while $\B_i$ is non-empty. Then $\B_i$ with the associated weight vector $w_i$ is a $(1 + \frac{\eps}{\log n})^i$-coreset for the last $2^{i - 1} \ms$ rows. 
\end{lemma}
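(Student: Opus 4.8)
The plan is to prove the stated invariant by induction on $i$, following the standard ``binary counter'' analysis of merge-and-reduce \cite{BDM+20}, with Lemma~\ref{lem:coreset_space} supplying the reduce step. Throughout, for a weighted row-set $S$ I write $\Phi_x(S)=\sum_{a_\ell\in S}u_\ell\abs{\inner{a_\ell,x}}^p$ and call a weighted set $S'$ an \emph{$\alpha$-coreset} for $S$ if $\Phi_x(S')\in[\alpha^{-1},\alpha]\cdot\Phi_x(S)$ for every $x\in\bS^{d-1}$, where $\alpha$ will always have the form $(1+\gamma)^j$ with $\gamma=\eps/\log n$. Two elementary facts will drive the argument: (i) $\alpha$-coresets compose additively, i.e.\ if $S=S_1\sqcup\cdots\sqcup S_k$ is a disjoint union and each $S_j'$ is an $\alpha_j$-coreset for $S_j$, then $S_1'\circ\cdots\circ S_k'$ is a $(\max_j\alpha_j)$-coreset for $S$, because $\Phi_x$ is additive over disjoint sets and all summands are nonnegative; and (ii) by Lemma~\ref{lem:coreset_space}, running \textsc{Coreset} with error parameter $\gamma$ on an $\alpha$-coreset produces, with high probability, a $(1+\gamma)\alpha$-coreset, since the multiplicative errors multiply.

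First I would record the global invariant maintained by Algorithm~\ref{alg:1}: at every point during the stream, the nonempty blocks partition the rows seen so far into \emph{consecutive} chunks, ordered newest-to-oldest as $\B_0,\B_1,\B_2,\dots$, with $\B_0$ holding at most $\ms$ raw rows and, for $j\ge1$, with $\B_j$ holding a weighted set whose chunk consists of exactly $2^{j-1}\ms$ consecutive rows. This I would prove by a routine induction over the processed rows: inserting a row only appends to $\B_0$; and a merge replaces the newest blocks $\B_0,\dots,\B_{i-1}$, whose chunks are consecutive and total $\ms+\sum_{j=1}^{i-1}2^{j-1}\ms=2^{i-1}\ms$ rows, by a single block $\B_i$, and since $\B_i$ was empty it previously lay above the strictly older blocks $\B_{i+1},\dots$, so the new $\B_i$ (with chunk size $2^{i-1}\ms$) slots in correctly and the ordering persists.

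With the structural invariant in hand, I would run the main induction on $i$. For $i=1$, $\B_1$ is obtained by running \textsc{Coreset} with parameter $\gamma$ on a full $\B_0$ of exactly $\ms=2^0\ms$ raw rows, so it is a $(1+\gamma)$-coreset for them. For $i>1$, consider the moment $\B_i$ is formed by merging and reducing $\B_0\circ\cdots\circ\B_{i-1}$ with parameter $\gamma$: the structural invariant says $\B_0$ is an exact coreset for its $\ms$ rows, and each $\B_j$ with $1\le j\le i-1$ has been untouched since it was last created -- a moment at which $\B_0,\dots,\B_{j-1}$ were empty -- so the induction hypothesis makes $\B_j$ a $(1+\gamma)^j$-coreset for its $2^{j-1}\ms$-row chunk. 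By fact (i), $\B_0\circ\cdots\circ\B_{i-1}$ is then a $(1+\gamma)^{i-1}$-coreset for the most recent $2^{i-1}\ms$ rows, and by fact (ii) the reduce step turns it into a $(1+\gamma)^i$-coreset for those rows, which is the claim. For the overall correctness of Algorithm~\ref{alg:1}, I would finally note that $\B_i$ can become nonempty only after at least $2^{i-1}\ms$ rows have arrived, so $i\le\log_2(n/\ms)+1=O(\log n)$ and the stream triggers $O(n/\ms)$ calls to \textsc{Coreset}; a union bound over these (each succeeding with high probability) makes all invoked guarantees hold simultaneously, and $(1+\gamma)^i\le(1+\eps/\log n)^{\log n}\le e^{\eps}=1+O(\eps)$.

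I expect the main obstacle to be entirely the bookkeeping in the structural invariant: carefully verifying that the blocks always tile a \emph{suffix} of the stream in the stated order with the stated chunk sizes, and that every $\B_j$ feeding a merge is still precisely the coreset the induction hypothesis describes, having not been modified since its creation. Once that is nailed down, the rest is the mechanical composition of multiplicative errors along a reduce chain of depth $O(\log n)$.
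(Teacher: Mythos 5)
Your proof is correct and is the standard binary-counter analysis of merge-and-reduce; the paper does not prove this lemma itself but cites \cite{BDM+20}, and your argument is essentially the one that source (and the Bentley--Saxe / Agarwal--Har-Peled--Varadarajan tradition it rests on) gives. Two small presentational remarks: the induction is more cleanly phrased as strong induction on the number of \textsc{Coreset} calls performed so far (or on time), since you invoke the ``hypothesis for $j<i$'' not in the current configuration but at the earlier moment $\B_j$ was written; and the paper's Lemma~\ref{lem:coreset_space} is stated with a two-sided $1\pm O(\eps)$ error rather than your $[\alpha^{-1},\alpha]$ normalization, but after the usual constant rescaling of $\eps$ the two conventions compose identically, so this does not affect the argument. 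Everything else --- the structural tiling invariant, the additivity of $\Phi_x$ over disjoint chunks giving $\max_j\alpha_j$, and the multiplicativity of errors across the reduce step --- is exactly right.
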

\begin{theorem}
\label{thm:merge_and_reduce}
Let $A = a_1 \circ \cdots \circ a_n$ be a stream of $n$ rows, where $a_i \in \R^{1 \times d}$. There is an algorithm which computes a subset $B$ of $m = \tilde{O}(\eps^{-\frac{2(d - 1)}{d + 2p}})$ rows of $A$ and an associated weight vector $w\in \R^m$, such that with high probability, for every $x \in \bS^{d - 1}$,
\[
\left|\sum_i w_i |\inner{B_i, x}|^p - \norm{Ax}_p^p \right| = O(\eps) \cdot  \norm{Ax}_p^p \; .
\] 
Moreover, the algorithm can be implemented in $\tilde{O}(\eps^{-\frac{2(d - 1)}{d + 2p}}\cdot \polylog(n))$ words of space.
\end{theorem}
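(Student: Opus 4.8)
The plan is to instantiate the merge-and-reduce paradigm of Algorithm~\ref{alg:1} with the coreset subroutine of Lemma~\ref{lem:coreset_space}, and to bound separately the accumulated approximation error and the working space. The correctness argument rests on two elementary facts about multiplicative coresets for an objective that is additive over rows: (i) they compose, so that a $(1\pm\epsilon_2)$-coreset of a $(1\pm\epsilon_1)$-coreset of $(A,w)$ is a $(1\pm\epsilon_1)(1\pm\epsilon_2)$-coreset of $(A,w)$; and (ii) they merge, so that the weighted union of coresets for disjoint blocks of rows is a coreset for the union of those blocks, with error the worst of the pieces. Granting Lemma~\ref{lem:coreset_space} and Lemma~\ref{lem:merge_and_reduce}, the proof is then a bookkeeping exercise.

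First I would fix $\gamma=\eps/\log n$, set the block size $\ms=\tilde{O}(\gamma^{-\frac{2(d-1)}{d+2p}})$, and run Algorithm~\ref{alg:1}, each reduce step calling \textsc{Coreset} with accuracy $\gamma$. Since the stream has $n$ rows and block $\B_i$ holds $2^{i-1}\ms$ of them, no block of index exceeding $\log n$ is ever created, so the merge tree has depth at most $\log n$; by Lemma~\ref{lem:merge_and_reduce}, every block is a $(1+\gamma)^{\log n}=1+O(\eps)$-coreset of the disjoint chunk of rows it summarizes. At the end of the stream the surviving blocks $\B_{\log n},\dots,\B_0$ partition all $n$ rows, so by fact (ii) their weighted union is a $(1+O(\eps))$-coreset of $\norm{Ax}_p^p$; the final \textsc{Coreset} call with accuracy $\eps$ compresses this union to $m=\tilde{O}(\eps^{-\frac{2(d-1)}{d+2p}})$ rows while, by fact (i), multiplying the error by one more $(1\pm\eps)$ factor, which remains $1+O(\eps)$. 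Rescaling $\eps$ by a constant yields the stated guarantee. Because the run makes $O(n/\ms)=O(n)$ calls to \textsc{Coreset}, I would boost each call's success probability to $1-1/\poly(n)$ (enlarging the net exponent in the proof of Lemma~\ref{lem:coreset}, or by independent repetition) and union bound, so that all coreset guarantees hold simultaneously with high probability.

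For the space bound, the algorithm stores $\log n+1$ blocks, each of $\ms=\tilde{O}(\gamma^{-\frac{2(d-1)}{d+2p}})=\tilde{O}(\eps^{-\frac{2(d-1)}{d+2p}}\polylog(n))$ rows, and a reduce step merges at most $\log n$ of them before invoking \textsc{Coreset} on a set of at most $O(\ms\log n)$ rows. By Lemma~\ref{lem:coreset_space}, \textsc{Coreset} runs in space linear in the number of input rows, so its temporary space is also $O(\ms\log n)$ words; hence the total working space is $\tilde{O}(\eps^{-\frac{2(d-1)}{d+2p}}\polylog(n))$ words, and the returned coreset has $m=\tilde{O}(\eps^{-\frac{2(d-1)}{d+2p}})$ rows.

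The main obstacle is not depth but care in the error analysis across the $O(\log n)$ levels of the merge tree: one must pick $\gamma=\eps/\log n$ exactly so that the per-reduce multiplicative errors telescope to $(1+\gamma)^{O(\log n)}=1+O(\eps)$, and one must be sure the ``merge'' of disjoint coresets genuinely preserves the multiplicative guarantee (immediate here, since $\norm{Ax}_p^p$ is a sum over rows). The only other point needing attention is that the $O(n)$-space implementation of \textsc{Coreset} from Lemma~\ref{lem:coreset_space} keeps the intermediate space of every reduce step within the claimed bound. All the genuinely hard work was already done in establishing Lemma~\ref{lem:coreset_space}; what remains here is a faithful application of the merge-and-reduce framework of~\cite{BDM+20}.
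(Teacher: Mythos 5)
Your proposal follows the paper's proof essentially verbatim: set $\gamma=\eps/\log n$, invoke Lemma~\ref{lem:merge_and_reduce} to bound the accumulated error by $(1+\gamma)^{\log n}=1+O(\eps)$, append one final \textsc{Coreset} call, union bound over all invocations, and bound space by $O(\ms\log n)$ words using Lemma~\ref{lem:coreset_space}. Your explicit note about boosting each call's success probability to $1-1/\poly(n)$ before union bounding over $O(n/\ms)$ calls is a sensible refinement of a detail the paper leaves implicit, but it does not change the structure of the argument.
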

\begin{proof}
We assume that each call to the subroutine $\textsc{Coreset}(\textbf{M}, w, \frac{\eps}{\log n})$ is successful, which holds with high probability after taking a union bound. It then follows from Lemma~\ref{lem:merge_and_reduce}  that $\B_{\log n} \circ \cdots \circ \B_0$ is a $(1 + \frac{\eps}{\log n})^{\log n} = (1 + O(\eps))$-coreset of the rows of $A$ and, after a further coreset operation, $\B^\ast$ is a $(1+O(\eps))(1+\eps) = (1+O(\eps))$-coreset of $A$, as desired.

Now we analyze the space complexity of our algorithm. Let $\gamma = \frac{\eps}{\log n}$. The algorithm stores $O(\log n)$ blocks $\B_i$, each taking at most $O(\ms\cdot d) = \tilde{O}(\gamma^{-\frac{2(d - 1)}{d + 2p}})$ words of space. Lemma~\ref{lem:coreset_space} implies that each call to the subroutine $\textsc{Coreset}(\textbf{M}, w, \gamma)$ takes $O(\ms \cdot \log n)$ words of space, since the number of total rows in the input does not exceed $O(\ms \cdot \log n)$. Therefore, the total space of our algorithm is $O(\ms\cdot \log n) = O(\eps^{-\frac{2(d - 1)}{d + 2p}}\cdot \polylog(n/\eps))$.
\end{proof}

\subsection{Reducing Space Complexity} 
In this section, we show that the multiplicative $\poly(\log n)$ factor in the space complexity can be reduced to $\poly(\log\log n)$ at the cost of an extra additive $\poly(\log n)$ term. The basic idea is that if we can sample a few of the rows which form a good approximation to the matrix $A$, then we can then treat the sampled rows as a new stream and the new input to our Algorithm~\ref{alg:1}. To this end, we consider the following $\ell_p$ sensitivities of the rows of $A$.

\begin{definition}
For a matrix $A = a_1 \circ \cdots \circ a_n \in \R^{n \times d}$, the $\ell_p$-sensitivity of $a_i$, denoted by $s_i(A)$, is defined to be $s_i(A) = \sup_{x \in \R^d\setminus\{0\}} \frac{|\inner{a_i, x}|}{\|Ax\|_p^p}$.
\end{definition}

The following lemma shows that sampling according to the $\ell_p$-sensitivities of $A$ gives a subspace embedding of $A$. This is a generalization of the $\ell_1$-sensitivity sampling~\cite[Lemma 4.4]{BDM+20} and the proof follows the same approach.

\begin{lemma}
\label{lem:sensitivity}
Let $A \in \R^{n \times d}$ and $1 \le p < \infty$. The matrix $B$ is a submatrix of $A$ such that the rescaled $i$-th row $p_i^{-1/p}a_i$ is included in $B$ with probability $p_i \ge \min(\beta s_i(A), 1)$. Then, there is a constant $c$ such that when $\beta \ge c \eps^{-2} d \log(1/\eps)$,  the matrix $B$ is a $(1 \pm \eps)$-subspace embedding of $A$ with probability at least $9/10$.
\end{lemma}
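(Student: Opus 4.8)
The plan is to follow the standard recipe for sensitivity sampling: a Bernstein bound for each fixed query vector, then an $\eps$-net argument over the low-dimensional query space, where the key quantitative point is that the total sensitivity $\sum_i s_i(A) = O(d^p)$ keeps the net small (this is exactly where the $\log(1/\eps)$, rather than $\log n$, comes from). First I would reparametrize: writing $y = Ax$, which ranges over the subspace $V = \colspan(A) \subseteq \R^n$ with $\dim V \le d$, we have $\norm{Ax}_p = \norm{y}_p$, $\inner{a_i,x} = y_i$, and $\norm{Bx}_p^p = \sum_i p_i^{-1}\xi_i |y_i|^p$ where $\xi_i\in\{0,1\}$ indicates that row $i$ was sampled. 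By the rescaling the estimator is unbiased, $\E \norm{Bx}_p^p = \norm{y}_p^p$, and by homogeneity it suffices to prove $\bigl|\sum_i p_i^{-1}\xi_i|y_i|^p - 1\bigr| \le \eps$ for every $y$ in $S_V := \{y\in V:\norm{y}_p = 1\}$. For such $y$, the definition of the $\ell_p$-sensitivity (understood with a $p$-th power in the numerator, $s_i(A) = \sup_{z\in V\setminus\{0\}} |z_i|^p/\norm{z}_p^p$) gives the pointwise bound $|y_i|^p \le s_i(A)$.

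Fixing $y\in S_V$, I would apply Bernstein's inequality to $\sum_{i:\,p_i<1} p_i^{-1}\xi_i|y_i|^p$; the rows with $p_i = 1$ are deterministic and equal to their mean $\sum_{i:\,p_i=1}|y_i|^p$, so they are set aside. When $p_i < 1$ we have $p_i \ge \beta s_i(A) \ge \beta |y_i|^p$, so each summand is at most $p_i^{-1}|y_i|^p \le 1/\beta$, and the total variance is at most $\sum_i p_i^{-1}|y_i|^{2p} \le \bigl(\max_i p_i^{-1}|y_i|^p\bigr)\sum_i |y_i|^p \le 1/\beta$. Bernstein then yields $\Pr\bigl[\,|\norm{Bx}_p^p - 1| > \eps\,\bigr] \le 2\exp(-c_0\eps^2\beta)$ for an absolute constant $c_0$, using $\eps<1$.

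To upgrade this to hold for all $y\in S_V$ I would first record an $n$-\emph{independent} a priori bound. An Auerbach (well-conditioned) basis of $V$ shows $\sum_i s_i(A) \le d^p$; hence $\E|B| \le d^p + \beta\sum_i s_i(A) = O(\beta)$, so by a Chernoff bound $|B| = O(\beta)$ with probability $\ge 19/20$, while $\sum_{i:\,p_i=1}|y_i|^p \le \sum_{i:\,p_i=1} s_i(A) \le d^p$ deterministically. On this event $\norm{Bx}_p^p \le |B|/\beta + d^p = O(1)$ for \emph{every} $y\in S_V$, i.e.\ $\norm{B\cdot}_p^p \le M\norm{\cdot}_p^p$ on $V$ for an absolute $M = O(1)$. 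Now take a $\gamma$-net $\cN$ of $\{y\in V:\norm{y}_p\le 1\}$ in $\norm{\cdot}_p$ with $\gamma = \Theta(\eps/M^{1/p}) = \Theta(\eps)$; since $\dim V \le d$, $|\cN| \le (3/\gamma)^d = \eps^{-O(d)}$. A union bound over $\cN$, together with the a priori event, shows that with probability $\ge 9/10$ we have $\norm{Bx}_p^p = (1\pm\eps)\norm{y}_p^p$ for all $y\in\cN$ and $M = O(1)$, provided $2\,\eps^{-O(d)}\exp(-c_0\eps^2\beta) \le 1/20$, i.e.\ provided $\beta \ge c\,\eps^{-2}d\log(1/\eps)$ for a large enough absolute constant $c$ (in the constant-$d$ regime of interest). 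Finally, for arbitrary $y\in S_V$ I pick $y'\in\cN$ with $\norm{y-y'}_p \le \gamma$ and use subadditivity of $\norm{\cdot}_p$, the bound $\norm{B(y-y')}_p \le M^{1/p}\gamma$, and the elementary estimate $(a+b)^p - a^p = O_p\bigl((a+b)^{p-1}b\bigr)$ to transfer the estimate from $y'$ to $y$ at the cost of an additive $O_p(M^{1/p}\gamma) = O_p(\eps)$; rescaling $\eps$ by a constant finishes the proof.

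I expect the main obstacle to be the a priori bound and its interplay with the net cardinality, not the concentration step. A naive global bound such as $\norm{Bx}_p^p \le n\norm{y}_p^p$ (from $p_i^{-1}\le n$) would force $\gamma \approx \eps/n^{1/p}$ and hence inject a spurious $\log n$ into the required $\beta$; it is precisely the total-sensitivity estimate $\sum_i s_i(A) = O(d^p)$ — equivalently, that sensitivity sampling draws only $\poly(d/\eps)$ rows — that collapses $M$ to a constant and delivers the stated $\beta = O(\eps^{-2}d\log(1/\eps))$. The Bernstein step and the net-to-everywhere extension are otherwise routine and go through uniformly for all $p\ge 1$; the one minor care point there is separating the deterministically-sampled rows ($p_i = 1$) so that they contribute to neither the variance nor the maximal-summand term.
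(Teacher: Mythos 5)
Your concentration step (Bernstein for a fixed $y\in S_V$, using $p_i\ge\beta s_i(A)$ to bound both the maximal summand and the variance by $\norm{y}_p^p/\beta$, giving failure probability $\exp(-\Omega(\eps^2\beta))=\eps^{\Omega(d)}$) is the same as the paper's and is correct. The divergence is in the net-to-everywhere extension, and this is where your argument has a gap while the paper's does not.

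The paper's extension is a successive-approximation (geometric-series) argument: given $y\in\mathcal{S}$, it writes $y=\sum_{i\ge 0}\alpha_i y_i$ with each $y_i$ a unit vector in the $\eps$-net and $|\alpha_i|\le\eps^i$, and then uses nothing more than subadditivity of $\norm{B\cdot}_p$ plus the net estimate $\norm{By_i}_p\le 1+\eps$ to get $\norm{Bx}_p\le\sum_i\eps^i(1+\eps)=1+O(\eps)$, and similarly a lower bound. This never requires any a priori bound on $\norm{B\cdot}_p$ off the net. Your extension instead uses a Lipschitz transfer from the net, which forces you to first establish an a priori bound $\norm{By}_p^p\le M\norm{y}_p^p$ with $M=O(1)$ for \emph{all} $y$, and you derive $M$ from $|B|=O(\beta)$ via a Chernoff bound. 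That step does not follow from the lemma's hypothesis: the lemma only assumes $p_i\ge\min(\beta s_i(A),1)$, not an upper bound, so nothing prevents $p_i$ from being large (even $p_i\equiv 1$, so $|B|=n$), in which case $\E|B|$ is not $O(\beta)$ and your $M=O(1)$ fails. (One could observe that enlarging $p_i$ can only improve the Bernstein tail, so concentration is monotone in $p_i$; but $|B|$ is monotone the \emph{wrong} way, so that observation does not rescue your a priori bound as written.) The paper sidesteps this entirely because its geometric-series extension only uses the triangle inequality of the norm, not a global operator bound. In short: your proof is morally on track and can be repaired (either by restricting to $p_i=\Theta(\min(\beta s_i,1))$, which is how the lemma is actually applied in the paper, or by switching to the paper's chaining-style extension), but as written the a priori bound on $|B|$ is unsupported, and this is exactly the ``main obstacle'' you anticipated. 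Two smaller notes: the net radius needed for your Lipschitz transfer is $\gamma=\Theta(\eps/M)$ rather than $\Theta(\eps/M^{1/p})$ (harmless since $M=O(1)$), and the total-sensitivity bound you invoke plays no role in the paper's proof of this lemma (it is used elsewhere, in Lemma~\ref{lem:bound_of_sensitivity}, to control the number of sampled rows for the space bound).
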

\begin{proof}
Since the row space of $B$ is contained in that of $A$, in order to show that $B$ is a $(1\pm\eps)$-subspace embedding of $A$, it suffices to show that 
$\norm{Bx}_p = 1\pm\eps$ for all $x$ such that $\norm{Ax}_p = 1$.

Fix an $x$ and let $y = Ax$. Define the random variable $Z_i$ to be $|y_i|^p/p_i$ with probability $p_i$ and $0$ otherwise, so $\E Z_i = |y_i|^p$. Let $Z = \norm{Bx}_p^p$.  Then $Z = \sum_i Z_i$ and
\[
\E[Z] = \sum_i \E[Z_i] = \norm{y}_p^p\;. 
\]
We also have
\[
\Var[Z_i] \le \E[Z_i^2] = \frac{|y_i|^{2p}}{p_i^2} \cdot p_i = \frac{|y_i|^{2p}}{p_i} 
\]
and
\[
Z_i \le \frac{|y_i|^p}{p_i} \le  \frac{1}{\beta}\frac{|y_i|^p \norm{y}_p^p}{|y_i|^p} = \frac{\norm{y}_p^p}{\beta}\;,
\]
so
\[
\Var[Z] = \sum_i \Var[Z_i] \le \sum_{i = 1}^n \frac{|y_i|^{2p}}{p_i} \le \frac{1}{\beta} \sum_{i = 1}^n |y_i|^p \frac{|y_i|^p \norm{y}_p^p}{|y_i|^p} = \frac{\norm{y}_p^{2p}}{\beta}\;.
\]
It follows from Bernstein's inequality that
\[
\Pr\left[ \Abs{Z - \E[Z]} \ge \eps \norm{y}_p^p \right] \le 2\exp\left(-\frac{\beta}{2}\frac{\eps^2 \norm{y}_p^{2p}}{\norm{y}_p^{2p} + \eps \norm{y}_p^{2p} / 3}\right) = \eps^{-\Omega(d)}\;.
\]
Rescaling $\eps$ by a constant factor (depending on $p$), we have that $\norm{Bx}_p = (1\pm\eps)\norm{Ax}_p$ with probability at least $1 - \eps^{\Omega(d)}$ for each fixed $x$.

We next need a net argument. Let $\mathcal{S} = \{Ax: x \in \R^d, \norm{Ax}_p = 1\}$ be the unit $\ell_p$ ball and $\mathcal{N}$ be a net of size $(3/\eps)^{d}$ of $\mathcal{S}$ under the $\ell_p$ distance. By a union bound, we have that $\norm{Bx}_p = (1 \pm \eps) \norm{Ax}_p$ for every $Ax \in \mathcal{N}$ simultaneously with probability at least $9/10$. Conditioned on this event, for each $y = Ax \in \mathcal{S}$, we choose a sequence of points $y_0,y_1,\dots\in \mathcal{S}$ as follows.
\begin{itemize}
    \item Choose $y_0 \in \mathcal{S}$ such that $\norm{y - y_0}_p \le \eps$ and let $\alpha_0 = 1$;
    \item After choosing $y_0,y_1,\dots,y_i$, we choose $y_{i+1}$ such that
    \[
    \norm{ \frac{y - \alpha_0 y_0 - \alpha_1 y_1 - \cdots - \alpha_i y_i}{\alpha_{i+1}} - y_{i+1} }_p \leq \eps,
    \]
    where $\alpha_{i+1} = \norm{y - \alpha_0 y_0 - \alpha_1 y_1 - \cdots - \alpha_i y_i}_p$. 
\end{itemize}
The choice of $y_{i+1}$ means that
\[
    \alpha_{i+2} = \norm{ y - \alpha_0 y_0 - \alpha_1 y_1 - \cdots - \alpha_i y_i - \alpha_{i+1}y_{i+1} }_p \leq \alpha_{i+1}\eps.
\]
A simple induction yields that $\alpha_i \leq \eps^i$. Hence
\[
    y = y_0 + \sum_{i \ge 1} \alpha_i y_i , \ \ \ |\alpha_i| \le \eps^{i} \;.
\]
Suppose that $y_i = Ax_i$, we have
\[
\norm{Bx}_p \le \norm{Bx_0}_p + \sum_{i \ge 1} \epsilon^i \norm{Bx_i}_p \le (1 + \epsilon) + \sum_{i \ge 1} \epsilon^i(1 + \epsilon) = 1 + O(\epsilon),
\]
and 
\[
\norm{Bx}_p \ge \norm{Bx_0}_p - \sum_{i \ge 1} \epsilon^i \norm{Bx_i}_p \ge (1 - \epsilon) - \sum_{i \ge 1} \epsilon^i(1 - \epsilon) = 1 - O(\epsilon).
\]
Rescaling $\eps$ again gives the result.
\end{proof}

Since the rows of $A$ are given in the  streaming model, we shall sample according to the online $\ell_p$ sensitivities of the rows, which are defined below.
\begin{definition} [Online $\ell_p$ Sensitivities \cite{BDM+20, WY22}]
    Let $A \in \R^{n \times d}$ and let $1 \le p < \infty$. Then, for each $i\in [n]$, the $i$-th online $\ell_p$ sensitivity is defined as
    \[
        s_i^{\OL}(A) \coloneqq \begin{cases}
            \min \left\{\sup_{x \in \rowspan(A) \setminus \{0\}} \frac{|\inner{a_i,x}|^p}{\norm{A_{i - 1}x}_p^p}, 1\right\} &  a_i \in \rowspan(A_{i - 1})\\
            1 & \text{otherwise},
        \end{cases}
    \]
    where $A_j\in\mathbb R^{j\times d}$ denotes the submatrix of $A$ formed by the first $j$ rows.

\end{definition}
It is clear from the definition that the online $\ell_p$ sensitivity is at least as large as the $\ell_p$-sensitivity of the same row and we can thus use the online $\ell_p$-sensitivities in an online algorithm to achieve the $\ell_p$-subspace embedding property. 
Our full algorithm is given in Algorithm~\ref{alg:2}.

\begin{algorithm}[h]
    \caption{Algorithm for $\ell_p$ subspace sketch for constant $d$.}
    \label{alg:2}
    \textbf{Input:} A streaming of rows $a_1, a_2, \ldots, a_n \in \R^{1 \times d}$ and approximation factor $\gamma$\;
    \SetAlgoLined
    $\mathsf{ALG_1}$ is an instance of Algorithm~\ref{alg:1} with $\eps = O(1)$\;
    $\mathsf{ALG_2}$ is an instance of Algorithm~\ref{alg:1} with $\eps = \gamma$\;
    $\beta \gets \poly(d) \log(1/\eps)/\eps^2$\;
    \ForEach{row $a_t$}{
    $M_t \gets$ a coreset of $A_{t - 1}$ from $\mathsf{ALG}_1$\;
    $\tau_t \gets$ a $\poly(d)$-approximation of $s_t^{\OL}(A)$ from $a_t$ and $M_t$\; 
    $p_t \gets \min(1, \beta\tau_t)$\;
    With probability $p_t$, feed $a_t / p_t^{1/p}$ to $\mathsf{ALG}_2$\;
    Feed $a_t$ to $\mathsf{ALG}_1$\;
}
\KwRet a coreset $M$ from $\mathsf{ALG}_2$.

\end{algorithm}

It follows from Theorem~\ref{thm:merge_and_reduce} and Lemma~\ref{lem:sensitivity} that the output $M$ is a $(1 \pm \eps)$-coreset of $A$ with high probability. To bound the space complexity of our algorithm, we need to bound the sum of the online $\ell_p$ sensitivities of the rows $a_i$, which is given below in Lemma~\ref{lem:bound_of_sensitivity}. It follows that the expected number of sampled rows is $O(\eps^{-2}\poly(d) \log(1/\eps)\log n) = O(\eps^{-2} \log(1/\eps)\log n)$.

\begin{lemma}[{Lemma~4.7 in~\cite{BDM+20}, Theorem~3.10 in~\cite{WY22}}]
\label{lem:bound_of_sensitivity}
Suppose that $A \in \R^{n \times d}$ and $p \in \{1\}\cup[2,\infty)$. Let $q = 1$ when $p = 1$ or $q = p/2$ when $p \ge 2$, and $\kappa$ be the condition number of $A$. Then, we have
\[
\sum_{i = 1}^n s_i^{\OL}(A) = O(d^{q}(\log n) \log^{q} \kappa).
\]
Moreover, if $A \in \Z^{n \times d}$ is the integer matrix with entries bounded by $\poly(n)$, we have 
\[
\sum_{i = 1}^n s_i^{\OL}(A) = O(d^{q}\log^{q + 1} n).
\]
\end{lemma}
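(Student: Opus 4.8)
The plan is to relate the sum $\sum_i s_i^{\OL}(A)$ to a sum of \emph{online $\ell_p$ Lewis weights} (the $\ell_p$ analogue of online leverage scores) and then to bound the latter by a determinant potential argument, following the online-sampling framework of \cite{BDM+20, WY22}. I would first dispose of the rows with $a_i\notin\rowspan(A_{i-1})$: each contributes $s_i^{\OL}(A)=1$, but each also strictly increases the rank of the running matrix, so there are at most $d$ of them and their total contribution is $O(d)$. From now on assume $a_i\in\rowspan(A_{i-1})$. For such a row, let $\sigma_i=\sup_x |\inner{a_i,x}|^p/\norm{A_ix}_p^p$ be the ordinary $\ell_p$ sensitivity of the last row of the prefix matrix $A_i=A_{i-1}\circ a_i$. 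Writing $\norm{A_ix}_p^p=\norm{A_{i-1}x}_p^p+|\inner{a_i,x}|^p$, a one-line manipulation gives $s_i^{\OL}(A)\le\min\{\sigma_i/(1-\sigma_i),\,1\}\le 2\min\{\sigma_i,1\}$. Then I would invoke the standard fact from Lewis-weight theory that the $\ell_p$ sensitivity of a row is dominated, up to a factor depending only on $p$ and $d$, by its $\ell_p$ Lewis weight $\bar w_i$ computed with respect to the prefix $A_i$; this reduces the task to bounding $\sum_i\bar w_i$, the sum of these online Lewis weights.

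The remaining and hardest step is to bound $\sum_i\bar w_i$. For $p\ge 2$ the idea is to track the matrix $M_i=\lambda I+\sum_{j\le i}\bar w_j^{\,1-2/p}\,a_ja_j^\top$ and the potential $\Phi_i=\log\det M_i$. Using the Lewis-weight fixed-point identity together with the matrix determinant lemma one shows $\Phi_i-\Phi_{i-1}=\Omega(\min\{\bar w_i,1\})$, so that $\sum_i\min\{\bar w_i,1\}=O(\Phi_n-\Phi_0)=O(\log(\det M_n/\det M_0))$. Bounding $\det M_n\le(\lambda+\poly(d)\,\sigma_{\max}(A)^2)^d$ and $\det M_0=\lambda^d$, and taking $\lambda$ of order $\sigma_{\min}(A)^2$, yields a bound of the form $O(d^{p/2}(\log n)\log^{p/2}\kappa)$; the extra powers of $\log\kappa$ (versus the single $\log\kappa$ one gets for plain leverage scores, i.e.\ $p=2$) come from the interplay between the weights $\bar w_j^{\,1-2/p}\le 1$ and the spread of the spectrum of $A$. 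The case $p=1$ is analogous, with $q=1$ and the matrix $M_i=\lambda I+\sum_{j\le i}\bar w_j^{-1}a_ja_j^\top$, the negative exponent requiring only a minor modification of the potential estimate. Finally, if $A\in\Z^{n\times d}$ has entries bounded by $\poly(n)$, then, since $A$ has full column rank, $\sigma_{\max}(A)\le\poly(n)$ and $\sigma_{\min}(A)\ge 1/\poly(n)$, hence $\kappa\le\poly(n)$ and the bound becomes $O(d^q\log^{q+1}n)$.

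The main obstacle is this last step. Online Lewis weights are defined through a nonlinear fixed point on each prefix, so one must first establish that they are well-defined and behave monotonically enough under row insertion, and then show that the determinant potential grows by at least a constant multiple of $\bar w_i$ at each step — this is precisely where the Lewis-weight structure (rather than ordinary leverage scores) is indispensable, and where the $\log^{q}\kappa$ factor originates. In contrast, the rank-counting in Step 1 and the sensitivity-to-Lewis-weight reduction in Step 2 are comparatively routine, modulo citing the standard domination of sensitivities by Lewis weights.
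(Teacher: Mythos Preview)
The paper does not prove this lemma at all; it is imported verbatim as a black box from \cite{BDM+20} (the case $p=1$) and \cite{WY22} (the case $p\ge 2$), so there is no in-paper argument to compare your proposal against. Your high-level plan---handle the at most $d$ rank-increasing rows separately, bound $s_i^{\OL}$ by twice the ordinary prefix sensitivity, pass to online $\ell_p$ Lewis weights, and control their sum by a $\log\det$ potential---is indeed the strategy of the cited references, and Steps~1--2 of your sketch are correct as written.

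The one place your sketch is genuinely incomplete (and you flag it yourself) is the potential step. The claimed increment $\Phi_i-\Phi_{i-1}=\Omega(\min\{\bar w_i,1\})$ does not drop out of the matrix determinant lemma plus the Lewis fixed-point identity in the way you suggest: the rank-one update to $M_i$ carries the weight $\bar w_i^{\,1-2/p}$ rather than $\bar w_i$, and $\bar w_i$ is itself defined through $M_i$ (not $M_{i-1}$), so a circularity has to be broken. In \cite{WY22} this is handled by a more careful argument that does not simply mimic the $p=2$ online leverage-score proof; in particular, the extra $\log^{q-1}\kappa$ factor over the leverage-score bound does not come from a single application of the potential bound but from a layered analysis across the spectrum of $A$. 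Your one-line explanation (``the interplay between the weights $\bar w_j^{\,1-2/p}\le 1$ and the spread of the spectrum'') is too vague to constitute a proof. The integer-entry corollary at the end is fine once the main bound is established.
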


The only thing remaining is to compute a $\poly(d)$-approximation of $s_i$ from $a_i$ and $M_i$. Since $\norm{A_{i-1}x}_p = (1\pm \eps)\norm{M_i x}_p$ for all $x$, it must hold that $(\rowspan(A_{i-1}))^\perp = (\rowspan(M_i))^\perp$ and so $\rowspan(A_{i-1}) = \rowspan(M_i)$. Hence, we can use $M_i$ to determine whether $a_i \in \rowspan(A_{i - 1})$. When $a_i \in \rowspan(A_{i - 1})$, we need an efficient algorithm to find $\tau_t$, for which we consider a well-conditioned basis of $A_{i - 1}$, defined below.

\begin{definition}[Well-conditioned basis, \cite{DDH+09}] 
Suppose that $A \in \R^{n \times d}$ has rank $r$ and $p\geq 1$. An $ n \times r$ matrix $U$ is an $(\alpha, \beta, p)$-well-conditioned basis for $A$ if (i) $\colspan(U) = \colspan(A)$,  (ii) $\norm{U}_p \le \alpha$ and (iii) $\norm{z}_q \le \beta \norm{U z}_p$ for all $z \in \R^d$, where $q = p/(p-1)$ is the conjugate index of $p$.
\end{definition}

\begin{theorem}[\cite{DDH+09}] \label{thm:well-conditioned-basis}
  Let $A$ be an $n \times d$ matrix of rank $r$, $p \in [1, \infty)$ and $q$ be the conjugate index of $p$. There exists an $(\alpha, \beta, p)$-well-conditioned basis $U$ for the column space of $A$ such that:
\begin{enumerate}[label=(\arabic*)]
\item if $p<2$ then $\alpha = r^{\frac{1}{p} + \frac{1}{2}}$ and $\beta = 1$,
\item if $p = 2$ then $\alpha = \sqrt{d}$ and $\beta = 1$, and
\item if $p > 2$ then $\alpha = r^{\frac{1}{p} + \frac{1}{2}}$ and $\beta = r^{\frac{1}{p} - \frac{1}{2}}$.
\end{enumerate}
Moreover, there is a deterministic procedure that computes a decomposition $A = UT$, where $U\in\R^{n\times r}$ is a well-conditioned basis as described above and $T\in \R^{r\times d}$ is of full row rank, in time $O(ndr + n d^5 \log n)$ for $p \ne 2$ and $O(ndr)$ if $p = 2$.
\end{theorem}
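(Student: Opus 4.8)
The plan is to follow the classical route: orthonormalize via a $QR$ factorization, apply John's theorem to the $\ell_p$ unit ball of the resulting $r$-dimensional subspace to round it to a Euclidean ball, and then read off $U$ from the rounding map, rescaling by a single scalar to balance conditions (ii) and (iii). \textbf{Reduction to an orthonormal basis.} First I would compute, deterministically in $O(ndr)$ time, a factorization $A = QR$ with $Q\in\R^{n\times r}$ having $\ell_2$-orthonormal columns and $R\in\R^{r\times d}$ of full row rank (e.g.\ rank-revealing Householder $QR$). For $p=2$ this already suffices: $\|Q\|_2=\|Q\|_F=\sqrt r\le\sqrt d$ and $\|z\|_2=\|Qz\|_2$ for all $z$, so $U=Q$, $T=R$ gives $(\alpha,\beta)=(\sqrt d,1)$ with no further work, which also explains the faster running time in that case. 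For $p\neq 2$, since $Q$ has orthonormal columns the crude bounds $n^{-1/2}\|z\|_2\le\|Qz\|_p\le n^{1/2}\|z\|_2$ hold, but they only give $n$-dependent conditioning, so a rounding step is needed.

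\textbf{Rounding and definition of $U$.} Consider the symmetric convex body $C=\{z\in\R^r:\|Qz\|_p\le 1\}$, which is bounded with nonempty interior because $Q$ has full column rank, so $\|Q\cdot\|_p$ is a norm on $\R^r$. By John's theorem for symmetric bodies there is a symmetric positive definite $M$ with $\mathcal{E}(M)\subseteq C\subseteq \sqrt r\,\mathcal{E}(M)$, where $\mathcal{E}(M)=\{z:z^\top Mz\le1\}$; equivalently $\|Qz\|_p\le\|M^{1/2}z\|_2\le\sqrt r\,\|Qz\|_p$ for all $z$. Set $U=\sqrt r\,QM^{-1/2}$ and $T=\tfrac1{\sqrt r}M^{1/2}R$, so that $A=UT$ with $T$ of full row rank and $\colspan(U)=\colspan(Q)=\colspan(A)$. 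Substituting $z=M^{1/2}w$ into the sandwich inequality yields $\|z\|_2\le\|Uz\|_p\le\sqrt r\,\|z\|_2$ for every $z\in\R^r$. Taking $z=e_j$ gives $\|Ue_j\|_p\le\sqrt r$, hence $\|U\|_p^p=\sum_{j=1}^r\|Ue_j\|_p^p\le r^{1+p/2}$, i.e.\ $\|U\|_p\le r^{1/p+1/2}$, which is condition (ii). For condition (iii) I would use $\|z\|_2\le\|Uz\|_p$ with the elementary $\ell_q$-vs-$\ell_2$ comparisons in $\R^r$: $\|z\|_q\le\|z\|_2$ when $q\ge2$ (i.e.\ $p\le2$), giving $\beta=1$; and $\|z\|_q\le r^{1/q-1/2}\|z\|_2=r^{1/2-1/p}\|z\|_2$ when $q\le2$ (i.e.\ $p\ge2$), giving $\beta=r^{1/2-1/p}$.

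\textbf{Making it algorithmic.} The body $C$ admits an efficient separation oracle: given $z$, compute $\|Qz\|_p$ in $O(nr)$ time, and if it exceeds $1$ return the (sub)gradient of $z\mapsto\|Qz\|_p^p$ at $z$ as a separating hyperplane. The crude bounds above pin the inscribed/circumscribed Euclidean ball ratio at $n^{O(1)}$, so $\log(R/r)=O(\log n)$. Feeding this oracle to the shallow-cut ellipsoid method (cf.\ \cite[Theorem 4.6.3]{GLS}), or to a polynomial-time routine for a near-optimal Löwner--John ellipsoid of a symmetric body, produces the matrix $M$ using $\poly(r)\cdot\log n$ oracle calls at $O(nr)$ each, i.e.\ $O(n\cdot\poly(r)\cdot\log n)\le O(nd^5\log n)$; forming $M^{1/2}$, $U$, and $T$ is lower order, so together with the $O(ndr)$ $QR$ step the running time is $O(ndr+nd^5\log n)$.

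\textbf{Main obstacle.} The delicate point is the quality of the rounding in the last step: the off-the-shelf shallow-cut ellipsoid method only yields $\mathcal{E}(M)\subseteq C\subseteq \gamma\,\mathcal{E}(M)$ with $\gamma=O(r)$ rather than $\gamma=\sqrt r$, which inflates the constant in (iii) by a $\sqrt r$ factor. Recovering exactly the stated constants requires invoking a polynomial-time algorithm that computes a $(1+o(1))$-approximate John ellipsoid of a symmetric body (so $\gamma=(1+o(1))\sqrt r$, hence $\beta=1+o(1)$ for $p<2$). In the regime in which this theorem is applied, namely $d=O(1)$, the distinction is immaterial, since then $\alpha$ and $\beta$ are $O(1)$ regardless, but stating it cleanly is precisely where the care is needed.
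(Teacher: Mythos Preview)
The paper does not prove this theorem---it is quoted verbatim from \cite{DDH+09}---and your argument ($QR$ followed by John/L\"owner rounding of the $\ell_p$ unit ball in the $r$-dimensional coefficient space, then a scalar rescale) is exactly the construction in that reference. Your self-identified obstacle about the algorithmic $\sqrt r$ versus $O(r)$ rounding factor is accurately diagnosed and, as you say, immaterial in the $d=O(1)$ regime in which the theorem is actually invoked here; incidentally, your $\beta=r^{1/2-1/p}$ for $p>2$ is the correct value, and the exponent $1/p-1/2$ in the statement as reproduced in this paper appears to be a transcription slip.
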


Suppose that $A_{i - 1}$ has a decomposition $A_{i-1} = UT$ as in Theorem~\ref{thm:well-conditioned-basis} and assume that $a_i \in \rowspan(A_{i-1}) = \rowspan(T)$. Since $T$ has full row rank, it has a right inverse $T^\dagger\in \R^{d\times r}$ such that $TT^\dagger = I$ and $\colspan(T^\dagger) = \rowspan(T)$. Now, the online $\ell_p$ sensitivity of $a_i$ becomes
\[
\sup_{x\in \R^r\setminus\{0\}} \frac{|\inner{a_i, T^\dagger x}|^p}{\norm{A_{i - 1} T^\dagger x}_p^p} =
\sup_{x\in \R^r\setminus\{0\}} \frac{|\inner{b, x}|^p}{\norm{Ux}_p^p} = \sup_{x \in \bS^{r-1}} \frac{|\inner{b, x}|^p}{\norm{Ux}_p^p}\;,
\]
where $b = (T^\dagger)^{\top} a_i$. The definition of the well-conditioned basis indicates that $1/\poly(d) \le \norm{Ux}_p^p \le \poly(d)$. Suppose that $x_0$ is the vector that attains the supremum. Then $|\inner{b, x_0}|^p \le |x_0|_2^p \cdot \norm{b}_2^p = \norm{b}_2^p$, implying that  $s_i^\OL(A) \le \poly(r) \cdot \norm{b}_2^p$. On the other hand, taking $x = b / \norm{b}_2$ leads to $s_i^\OL(A) \ge \norm{b}_2^p / \poly(r)$. 
Therefore, $\norm{b}_2^p$ is a $\poly(d)$-approximation to $s_i^\OL(A)$. Putting everything together, we obtain the following theorem. 

\begin{theorem}
\label{thm:streaming}
Let $A = a_1 \circ \cdots \circ a_n$ be a stream of $n$ rows, where $a_i \in \R^{1 \times d}$. There is an algorithm which computes a subset $B$ of $m = \tilde{O}(\eps^{-\frac{2(d - 1)}{d + 2p}})$ rows of $A$ and an associated weight vector $w\in \R^m$ such that with high probability for every $x \in \bS^{d - 1}$,
\[
\left|\sum_{i} w_i |\inner{B_i, x}|^p - \norm{Ax}_p^p \right| = O(\eps) \cdot  \norm{Ax}_p^p \; .
\] 
Moreover, the algorithm can be implemented in $\tilde{O}(\eps^{-\frac{2(d - 1)}{d + 2p}} + \log^{\frac{3d + 2p - 2}{d + 2p}} n)$ words of space. 
\end{theorem}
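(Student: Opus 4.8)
The plan is to analyze Algorithm~\ref{alg:2}, which runs two copies of the merge-and-reduce scheme of Algorithm~\ref{alg:1}: an ``auxiliary'' copy $\mathsf{ALG}_1$ at constant accuracy, used only to estimate online $\ell_p$ sensitivities on the fly, and a ``main'' copy $\mathsf{ALG}_2$ at accuracy $\gamma=\eps$, fed only the rows sampled according to those sensitivities. First I would fix the output of $\mathsf{ALG}_1$: at each step $t$ it maintains, by Theorem~\ref{thm:merge_and_reduce} (with $\eps_1=\Theta(1)$), a constant-factor coreset $M_t$ of the prefix $A_{t-1}$; since a subspace embedding preserves the orthogonal complement, $\rowspan(M_t)=\rowspan(A_{t-1})$, so $M_t$ suffices both to test whether $a_t\in\rowspan(A_{t-1})$ and, via a well-conditioned basis of $M_t$ (Theorem~\ref{thm:well-conditioned-basis}) and the computation displayed just before the theorem statement, to produce $\tau_t$ that is a $\poly(d)$-approximation of $s_t^{\OL}(A)$ (the constant distortion of $M_t$ only perturbs this by a constant). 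With $p_t=\min(1,\beta\tau_t)$ and $\beta=\poly(d)\log(1/\eps)/\eps^2$ (the $\poly(d)$ chosen large enough), the rescaled sampled rows form a matrix $\widetilde A$ satisfying the hypothesis of Lemma~\ref{lem:sensitivity}, so $\widetilde A$ is a $(1\pm\eps)$-subspace embedding of $A$. Feeding the sampled rows to $\mathsf{ALG}_2$ and invoking Theorem~\ref{thm:merge_and_reduce} again, the returned $B$ (a subset of rows of $A$, with the sampling factors folded into the weights $w$) is a $(1\pm O(\eps))$-coreset of $\widetilde A$, hence of $A$; a union bound over the constantly many high-probability events and a rescaling of $\eps$ complete correctness.

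For the space of $\mathsf{ALG}_1$: it processes the length-$n$ stream at constant accuracy, so from the proof of Theorem~\ref{thm:merge_and_reduce} its block size is $\ms=\tilde O\big((\eps_1/\log n)^{-\frac{2(d-1)}{d+2p}}\big)=\tilde O\big((\log n)^{\frac{2(d-1)}{d+2p}}\big)$, and storing $O(\log n)$ blocks together with the final $\textsc{Coreset}$ call on $O(\ms\log n)$ rows costs $O(\ms\log n)=\tilde O\big((\log n)^{1+\frac{2(d-1)}{d+2p}}\big)=\tilde O\big(\log^{\frac{3d+2p-2}{d+2p}} n\big)$ words. The per-step overhead -- computing a well-conditioned basis of the $\tilde O(1)$-row coreset $M_t$ and extracting $\tau_t$ -- adds only $\poly(d)=O(1)$ words. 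This is the source of the additive term in the theorem.

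For the space of $\mathsf{ALG}_2$: the crucial point is that it never sees more than $\polylog(n)/\eps^2$ rows. Since each entry of $A$ fits in $O(\log n)$ bits, Lemma~\ref{lem:bound_of_sensitivity} gives $\sum_i s_i^{\OL}(A)=O(d^{q}\log^{q+1}n)$ (with $q=1$ for $p=1$ and $q=p/2$ for $p\ge 2$), so $\E\sum_i p_i\le\beta\sum_i s_i^{\OL}(A)=\tilde O(\eps^{-2}\polylog n)$, and by a Chernoff bound the number $N'$ of sampled rows obeys $N'=\tilde O(\eps^{-2}\polylog n)$ with high probability (one can fix such an upper bound in advance to parameterize $\mathsf{ALG}_2$). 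Hence $\log N'=O(\log(1/\eps)+\log\log n)$, the merge-and-reduce tree of $\mathsf{ALG}_2$ has depth $O(\log N')$, and the proof of Theorem~\ref{thm:merge_and_reduce} bounds its space by $\tilde O\big(\eps^{-\frac{2(d-1)}{d+2p}}\cdot\poly(\log N')\big)=\tilde O\big(\eps^{-\frac{2(d-1)}{d+2p}}\cdot\poly(\log\log n)\big)$, absorbing $\poly(\log\log n)$ into $\tilde O$. Adding the two contributions gives the claimed bound $\tilde O\big(\eps^{-\frac{2(d-1)}{d+2p}}+\log^{\frac{3d+2p-2}{d+2p}} n\big)$.

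The step I expect to be the main obstacle is precisely this last space accounting: to beat the multiplicative $\polylog n$ of Theorem~\ref{thm:merge_and_reduce} and convert it into an additive term, one must ensure that $\mathsf{ALG}_2$'s merge-and-reduce depth drops from $\Theta(\log n)$ to $\Theta(\log\log n)$, which forces both (i) the online-sensitivity sum bound of Lemma~\ref{lem:bound_of_sensitivity} and (ii) the ability to estimate $s_t^{\OL}(A)$ to within a $\poly(d)$ factor in $O(1)$ space using nothing but the running constant-accuracy coreset $M_t$ -- this is why the reduction must route through a well-conditioned basis and the identity $\rowspan(M_t)=\rowspan(A_{t-1})$.
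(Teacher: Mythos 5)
Your proposal is correct and follows the same route as the paper: Algorithm~\ref{alg:2} runs a constant-accuracy copy of the merge-and-reduce scheme to maintain an online coreset $M_t$, uses $\rowspan(M_t)=\rowspan(A_{t-1})$ together with a well-conditioned basis of $M_t$ (Theorem~\ref{thm:well-conditioned-basis}) to get a $\poly(d)$-approximation $\tau_t$ of the online sensitivity, samples with $p_t=\min(1,\beta\tau_t)$ so that Lemma~\ref{lem:sensitivity} applies (via $s_i^{\OL}(A)\ge s_i(A)$), bounds the number of survivors by Lemma~\ref{lem:bound_of_sensitivity}, and feeds the short sampled stream to a second $\eps$-accuracy merge-and-reduce. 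Your space accounting, in particular the derivation $1+\frac{2(d-1)}{d+2p}=\frac{3d+2p-2}{d+2p}$ for the additive $\mathsf{ALG}_1$ cost and the $\poly(\log\log n)$ multiplicative factor from $\log N'=O(\log(1/\eps)+\log\log n)$, is exactly what the paper's ``putting everything together'' leaves implicit, and is correct.
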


\subsection{Faster Update Time}

In this section, we give a different one-pass streaming algorithm which achieves $O(\eps)$-additive error with $O(1)$ update time, assuming that $\norm{A_i}_2 = O(1)$. The basic idea is to partition the sphere, rather than the points, into a number of regions and maintain a sketch for each region individually. 
This idea of sphere partitioning was previously used by Bourgain and Lindenstrauss~\cite{BL88} to obtain suboptimal results for $\ell_1$-subspace embeddings for $d\geq 5$.
We remark that our new algorithm does not give a subspace embedding, unlike the previous ones. 

To begin with, we state a partition lemma for spheres.

\begin{lemma}
\label{lem:partition_sphere}
Suppose that $\eta\in (0,1/2)$. There exists a partition of $\bS^{d-1}$ with $c_1(1/\eta)^{d - 1}$ regions such that (i) the diameter of each region is at most $2\eta$ and (ii) for every $x \in \bS^{d - 1}$, the hyperplane $\{y: \inner{x, y} = 0\}$ only intersects at most $c_2(1/\eta)^{d - 2}$ regions. Here $c_1$ and $c_2$ are constants that only depend on $d$.%
\end{lemma}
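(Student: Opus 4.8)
The plan is to take a maximal $\eta$-separated subset $Z=\{z_1,\dots,z_M\}\subseteq\bS^{d-1}$ (so $\|z_i-z_j\|_2\geq\eta$ for $i\neq j$, and $Z$ is maximal with this property) and let the regions be a disjoint refinement of the Voronoi cells of $Z$ on $\bS^{d-1}$. Concretely, with $V_i=\{y\in\bS^{d-1}: \|y-z_i\|_2\leq\|y-z_j\|_2 \text{ for all }j\}$, one breaks ties on cell boundaries (e.g.\ assign each boundary point to the cell of least index) to obtain a genuine partition $R_1,\dots,R_M$ with $R_i\subseteq V_i$.

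The first step is to record two elementary facts. Since $Z$ is $\eta$-separated, the caps $C(z_i,\eta/2)$ are pairwise disjoint and each is contained in $V_i$ (if $\|y-z_i\|_2\leq\eta/2$ then $\|y-z_j\|_2\geq\|z_i-z_j\|_2-\|y-z_i\|_2\geq\eta/2\geq\|y-z_i\|_2$), and in fact the interior of $C(z_i,\eta/2)$ survives into $R_i$ since its points are strictly closest to $z_i$. Since $(\eta/2)^2<1/16<2(1-1/\sqrt{d+1})$, Lemma~\ref{lem:spherical cap} gives $\sigma_{d-1}(\scap(\eta/2))\geq c(d)\eta^{d-1}$, so disjointness forces $M\cdot c(d)\eta^{d-1}\leq 1$, i.e.\ $M=O_d((1/\eta)^{d-1})$, which is the count in (i). By maximality of $Z$, every point of $\bS^{d-1}$ is within distance $\eta$ of some $z_i$, so $V_i\subseteq C(z_i,\eta)$ and hence $\operatorname{diam}(R_i)\leq 2\eta$, which is property (i).

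For property (ii), fix $x\in\bS^{d-1}$ and write $H=\{y:\inner{x,y}=0\}$. The key claim is that any region intersected by $H$ (in the sense of Lemma~\ref{lem:partition}, i.e.\ containing points on both sides of $H$) is contained in the thin spherical slab $S_x:=\{y\in\bS^{d-1}:|\inner{x,y}|\leq 2\eta\}$: if $R_i$ contains $y$ with $\inner{x,y}>0$ and $y'$ with $\inner{x,y'}<0$, then for any $z\in R_i$ a one-line estimate using $\operatorname{diam}(R_i)\leq 2\eta$ (compare $z$ with $y'$ when $\inner{x,z}\geq 0$, and with $y$ otherwise) yields $|\inner{x,z}|<2\eta$. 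One then bounds the measure of the slab: by the standard formula $\sigma_{d-1}(S_x)=\frac{\Gamma(d/2)}{\sqrt\pi\,\Gamma((d-1)/2)}\int_{-2\eta}^{2\eta}(1-t^2)^{(d-3)/2}\,dt$, and since $(d-3)/2\geq -1/2$ for $d\geq 2$ we have $(1-t^2)^{(d-3)/2}\leq(1-t^2)^{-1/2}$ on $(-1,1)$, so $\sigma_{d-1}(S_x)\leq C(d)\arcsin(2\eta)\leq C'(d)\eta$. Since each region intersected by $H$ lies in $S_x$ and contains a disjoint cap of measure at least $c(d)\eta^{d-1}$, the number of such regions is at most $C'(d)\eta/(c(d)\eta^{d-1})=O_d((1/\eta)^{d-2})$, which is property (ii).

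The construction and the counting are both routine; the step I would be most careful about is the containment of the intersected regions in the slab $S_x$, since this is exactly what turns the diameter bound into the "few intersections" bound, and one must check that this argument uses only that each region has diameter at most $2\eta$ and meets both sides of $H$ — it does not require the Voronoi cells to be connected.
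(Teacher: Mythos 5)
Your proof takes essentially the same route as the paper's. Both start from a maximal $\eta$-separated set on $\bS^{d-1}$, form a disjointified Voronoi decomposition, use disjointness of the $\eta/2$-caps to bound the number of cells and to give a per-cell area lower bound $\Omega_d(\eta^{d-1})$, use maximality to get diameter $\leq 2\eta$, and for (ii) observe that any cell straddling $H$ must lie in an $O(\eta)$-wide slab of total measure $O_d(\eta)$, so dividing by the per-cell area gives $O_d(\eta^{-(d-2)})$ cells. The only cosmetic difference is that you spell out the slab containment and the slab measure estimate (via $\int(1-t^2)^{(d-3)/2}\,dt \leq \arcsin(2\eta)$), whereas the paper simply asserts these; your version is a correct and slightly more detailed rendering of the same argument.
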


\begin{proof}
Take a maximal set $\cN \subset \bS^{d-1}$ such that $d(x,y) > \eta$ for all distinct $x, y\in \cN$. It is a standard fact that $m := \abs{\cN}\leq c_1(d)/\eta^{d - 1}$. Suppose that $\cN = \{v_1,\dots,v_m\}$. For each $i$, define $Q_i = \{x\in \bS^{d-1}: d(x,v_i) = d(x,\cN)\}$ and $R_i = Q_i\setminus \bigcup_{j<i} R_j$. We can see the regions $R_1,\dots,R_m$ form a partition of $\bS^{d-1}$. By the construction of $\cN$, we can see that $\interior(B(v_i,\eta/2))\subseteq R_i\subseteq B(v_i,\eta)$, where $\interior(\cdot)$ denotes the relative interior of a set on $\bS^{d-1}$.

Now, given any $x \in \bS^{d - 1}$, we bound the number of the regions that intersect the equator $E_x = \{ y\in \bS^{d-1}: \langle x,y\rangle = 0\}$. On the one hand, if a region $R_i$ intersects the equator $E_x$, it is covered by the band $\{y\in \bS^{d-1}: d(y,E_x)\leq 2\eta\}$, which has area at most $c_2(d) \eta$. On the other hand, each region $R_i$ contains an open neighbourhood $B(v_i,\eta/2)$, thus having area at least $c_3(d)\eta^{d-1}$. Since the regions are disjoint, the number of regions that intersect $E_x$ must be at most $c_2(d)\eta / c_3(d)/\eta^{d-1} = c_4(d) / \eta^{d-2}$.
\end{proof}

\paragraph{Algorithm Description.} We now describe our algorithm. Given approximation parameter $\eps$, invoking Lemma~\ref{lem:partition_sphere} with $\eta = \eps^{\frac{2}{d + 2p - 1}}$, we obtain a partition of the sphere $R_1, \dots, R_t$ where $t = O(\eta^{d - 1}) = O(\eps^{-\frac{2(d - 1)}{d + 2p - 1}})$. 
For each region we maintain the following two things throughout the stream (we rescale the points which are not on the sphere to unit vectors when determining the region for them): (i) the sum of the $p$-th tensor products over the points in the region $q_i = \sum_{y \in R_i} y^{\otimes p}$. (ii) a sample point $z_i$ in the region, for which we employ reservoir sampling so that the sample point is uniformly chosen from all the points in the region. If the number of points in one region exceeds $\eta^{d - 1} n$ in the middle of the stream, we will treat this region as a new region and create a separate sketch for the new incoming points in it. Note that such an operation will create at most $O((1/\eta)^{d - 1})$ new regions.

\paragraph{Query Algorithm.} Given a query point $y \in \bS^{d - 1}$, we perform the following procedure: for each region $R_i$, similar to the analysis in Section~\ref{sec:upper_bound}, if the hyperplane $H = \{y: \inner{x, y} = 0\}$ does not intersect $R_i$, we have that 
\[
\inner{x^{\otimes p}, \sum_{y \in R_i}  y^{\otimes p}} = \sum_{y \in R_i} \inner{x, y}^p = \inner{x^{\otimes p}, q_i} \;.
\]
Hence we obtain $\sum_{y \in R_i} |\inner{x, y}|^p$ with zero error. Now we focus on the case that $H$ intersects $R_i$. In this case, since the value of $|\inner{x, y}|^p$ is small for all $y \in R_i$, we can use the sample point $z_i$ to estimate $\sum_{y \in R_i} |\inner{x, y}|^p$. Specifically, suppose that $R_i$ contains $c_i$ points. We define an estimator $Z_i = |\inner{x, z}|^p \cdot c_i$ and let $Z = \sum_{i \in I} Z_i$ be our final estimator, where $I$ is the set of indices of the regions which $H$ intersects.

It is easy to see that $Z$ is an unbiased estimator. To analyze the concentration, note that for each such area $R_i$, $\Var[Z_i] \le \sum_{v \in R_i} \frac{1}{c_i}\cdot(c_i \cdot \ell)^2 = c_i^2 \ell^2$, where $\ell = \max_{v \in R_i} |\inner{v, x}|^p \leq \eta^p$. Recall that $c_i \le \eta^{d - 1} n$ by our construction, and there are $O(\eta^{-(d - 1)})$ regions. We have that $\Var[Z_i] \le \eta^{2(d - 1)} n^2 \eta^{2p}$ and thus $\Var[Z] \le \eta^{d - 1 + 2p} n^2 = \eps^2 n^2$. It then follows from Chebyshev's inequality that the error $|Z - \E Z| = O(\eps n)$ with probability at least $9/10$. Rescaling by a normalization factor of $1/n$  yields the following theorem. 

\begin{theorem}[For-each version] \label{thm:for-each-general-d}
Let $A = a_1 \circ \cdots \circ a_n$ be a stream of $n$ rows, where $a_i \in \R^{1 \times d}$. There is an algorithm which maintains a data structure $Q$ of $O(\eps^{-\frac{2(d - 1)}{d + 2p - 1}})$ words of space such that for each $x \in \bS^{d - 1}$, with probability at least $9/10$, 
\begin{equation}\label{eqn:data-structure-guarantee}
\left|Q(x) - \frac{1}{n}\norm{Ax}_p^p \right| = O(\eps)  \; .
\end{equation}
Moreover, the algorithm updates $Q$ in $O(1)$ time and can be implemented in $O(\eps^{-\frac{2(d - 1)}{d + 2p -1}})$ words of space. 
\end{theorem}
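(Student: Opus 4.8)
The plan is to partition the \emph{sphere} once and for all, independently of the stream, maintain a constant-size sketch per cell, and answer a query by combining an \emph{exact} contribution from the cells the query equator misses with an \emph{unbiased sampled} contribution from the few cells it cuts. Concretely, I would invoke Lemma~\ref{lem:partition_sphere} with $\eta = \eps^{2/(d+2p-1)}$ to get a partition $R_1,\dots,R_t$ of $\bS^{d-1}$ into $t = O((1/\eta)^{d-1}) = O(\eps^{-2(d-1)/(d+2p-1)})$ cells, each of diameter at most $2\eta$, so that every hyperplane $\{y:\inner{x,y}=0\}$ meets at most $O((1/\eta)^{d-2})$ cells. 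When a row $a_t$ arrives it is routed to the cell containing $a_t/\norm{a_t}_2$, and in that cell the algorithm updates (i) the running sum of $p$-th tensor powers $q_i = \sum_{a\in R_i} a^{\otimes p}$ (an $O(1)$-word object, since $d,p$ are constants), (ii) a uniform sample $z_i$ of the points routed to the cell, maintained by reservoir sampling, and (iii) the count $c_i$. To keep the worst-case variance tame I would cap each cell at $\eta^{d-1}n$ points: once a cell would exceed this threshold it is frozen and a fresh sketch is opened for subsequent arrivals into it. Since a cell receiving $n_i$ points spawns $\lceil n_i/(\eta^{d-1}n)\rceil$ sketches and $\sum_i n_i = n$, the total number of (sub-)sketches stays $O((1/\eta)^{d-1})$.

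To answer a query $x\in\bS^{d-1}$, put $H = \{y:\inner{x,y}=0\}$ and let $I$ be the set of sketches whose parent cell $H$ intersects. For a sketch $R_i\notin I$ all points $a\in R_i$ have $\inner{a,x}$ of a single sign, so $\sum_{a\in R_i}\abs{\inner{a,x}}^p$ equals $\inner{x^{\otimes p},q_i}$ up to a global factor $(-1)^p$ determined by that sign (read off from any stored point of the cell), and is thus recovered with zero error. For $R_i\in I$ I would use the unbiased estimator $Z_i = c_i\abs{\inner{z_i,x}}^p$, and output $Q(x) = \tfrac1n\bigl(E_x + \sum_{i\in I}Z_i\bigr)$, where $E_x = \sum_{i\notin I}\sum_{a\in R_i}\abs{\inner{a,x}}^p$ is the exact part just described. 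Reservoir sampling gives $\E Z_i = \sum_{a\in R_i}\abs{\inner{a,x}}^p$, hence $Q(x)$ is unbiased for $\tfrac1n\norm{Ax}_p^p$. Because every sketch in $I$ has diameter $O(\eta)$ and meets $H$, each of its points has $\abs{\inner{a,x}}^p \le \eta^p$, so $\Var[Z_i] \le c_i^2\eta^{2p} \le \eta^{2(d-1)}n^2\eta^{2p}$; summing over all $O((1/\eta)^{d-1})$ sketches (the samples $z_i$ being independent across cells) yields $\Var[Z] \le O(\eta^{(d-1)+2p}n^2) = O(\eps^2 n^2)$, using $\eta = \eps^{2/(d+2p-1)}$ and $(d-1)+2p = d+2p-1$. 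Chebyshev's inequality then gives $\abs{Q(x) - \tfrac1n\norm{Ax}_p^p} = O(\eps)$ with probability at least $9/10$ for the fixed $x$, which is exactly~\eqref{eqn:data-structure-guarantee} after rescaling the constant.

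For the resources, the structure holds $O((1/\eta)^{d-1}) = O(\eps^{-2(d-1)/(d+2p-1)})$ sketches, each a $d^p$-dimensional tensor together with one sample point and a counter, i.e.\ $O(1)$ words, for a total of $O(\eps^{-2(d-1)/(d+2p-1)})$ words; the algorithm's working memory is this structure plus $O(1)$ scratch. Processing a row consists of rescaling it, locating its cell, a constant-size tensor update, a reservoir-sampling step, and a counter increment; since the partition is data-independent it can be preprocessed to support $O(1)$-time point location, so each update is $O(1)$ time.

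\textbf{Main obstacle.} I expect the crux to be the variance bound rather than the (routine) exactness and bookkeeping. Without the capping step, a single cell straddling the equator could absorb $\Theta(n)$ points, which would make $\Var[Z]$ as large as $\Theta(\eta^{2p}n^2) = \Theta(\eps^{4p/(d+2p-1)}n^2)$ --- too large, since $4p/(d+2p-1) < 2$ for $d\ge 2$. Capping at $\eta^{d-1}n$ forces $c_i^2 \le \eta^{2(d-1)}n^2$ in every term while, by the counting argument, inflating the number of sketches only by a constant factor; this is precisely what collapses $\Var[Z]$ to $O(\eta^{(d-1)+2p}n^2)$, and matching the variance exponent $(d-1)+2p$ against the space exponent $-(d-1)$ is what pins down $\eta = \eps^{2/(d+2p-1)}$ and hence the claimed space. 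A minor technical point is arranging $O(1)$ point location into a near-optimal spherical partition, but since the partition does not depend on the input this is handled in preprocessing and does not affect the per-update cost.
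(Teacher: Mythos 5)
Your proposal is correct and reproduces the paper's proof essentially verbatim: same invocation of Lemma~\ref{lem:partition_sphere} with $\eta=\eps^{2/(d+2p-1)}$, the same per-region sketch of a tensor sum $q_i$, a reservoir sample $z_i$, and a counter $c_i$, the same cap of $\eta^{d-1}n$ points per region with a spawned sub-sketch on overflow, and the identical split of the query into an exact part (cells missed by the equator, recovered from $q_i$) and an unbiased sampled part (cells cut by the equator), closed off by the same variance bound $\Var[Z]=O(\eta^{d-1+2p}n^2)=O(\eps^2 n^2)$ and Chebyshev. The one cosmetic slip is the ``global factor $(-1)^p$''; what you mean (and what the paper implicitly uses) is the sign $s=\sign\inner{a,x}\in\{\pm 1\}$ common to the cell, so that $\sum_{a\in R_i}\abs{\inner{a,x}}^p=s^p\,\inner{x^{\otimes p},q_i}$, with $s^p=s$ for odd $p$.
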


Applying the median trick and a net argument, we obtain the following for-all version of Theorem~\ref{thm:for-each-general-d}.

\begin{corollary}[For-all version]
Let $A = a_1 \circ \cdots \circ a_n$ be a stream of $n$ rows, where $a_i \in \R^{1 \times d}$. There is an algorithm which maintains a data structure $Q$ of $\tilde{O}(\eps^{-\frac{2(d - 1)}{d + 2p - 1}})$ words of space such that \eqref{eqn:data-structure-guarantee} holds for all $x\in \bS^{d-1}$ simultaneously with high probability. Moreover, the algorithm updates $Q$ in $O(1)$ time and  can be implemented in $\tilde{O}(\eps^{-\frac{2(d - 1)}{d + 2p - 1}})$ words of space.
\end{corollary}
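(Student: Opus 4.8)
The plan is to lift the for-each data structure of Theorem~\ref{thm:for-each-general-d} to a for-all guarantee by the usual two moves: amplify the per-query success probability by a median of independent copies, and then union bound over a net of $\bS^{d-1}$ and transfer the bound to every $x$ by a continuity argument. Concretely, I would run $L=\Theta(\log(1/\eps))$ mutually independent copies of the algorithm of Theorem~\ref{thm:for-each-general-d} (sharing the \emph{deterministic} sphere partition $R_1,\dots,R_t$, and independent only in their reservoir samples), and answer a query $x$ by the median $\hat Q(x)$ of the $L$ estimates. Since each estimate is within $O(\eps)$ of $\frac1n\norm{Ax}_p^p$ with probability $\ge 9/10$, a Chernoff bound over the $L$ copies makes $|\hat Q(x)-\frac1n\norm{Ax}_p^p|=O(\eps)$ for each fixed $x$ with probability $1-\eps^{c}$, for an arbitrarily large constant $c$ of our choosing. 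The space is $L$ times that of one copy, still $\tilde O(\eps^{-\frac{2(d-1)}{d+2p-1}})$ words. (If one insists on literally $O(1)$ update time rather than $O(\log(1/\eps))$, one keeps a single copy and replaces the Chebyshev step in the analysis of Theorem~\ref{thm:for-each-general-d} by Bernstein's inequality, which by itself gives per-query failure probability $\eps^{\Omega(d)}$ after shrinking $\eta$, hence the sketch size, by a $\polylog(1/\eps)$ factor absorbed into $\tilde O(\cdot)$.)

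Next, fix a $\gamma$-net $\cN$ of $\bS^{d-1}$ with $\gamma=\poly(\eps)$ small, so $|\cN|=\poly(1/\eps)$; taking $c$ large enough above, a union bound gives with probability $\ge 9/10$ that $|\hat Q(x_0)-\frac1n\norm{Ax_0}_p^p|=O(\eps)$ for all $x_0\in\cN$ simultaneously. For arbitrary $x\in\bS^{d-1}$, pick the nearest $x_0\in\cN$, so $\norm{x-x_0}_2\le\gamma$. Since $\norm{A_i}_2=O(1)$, the mean value inequality applied to $t\mapsto|t|^p$ yields $\big|\,|\inner{A_i,x}|^p-|\inner{A_i,x_0}|^p\big|=O(\gamma)$, hence $\big|\frac1n\norm{Ax}_p^p-\frac1n\norm{Ax_0}_p^p\big|=O(\gamma)=O(\eps)$ once $\gamma\le\eps$. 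So everything reduces to the bound $|\hat Q(x)-\hat Q(x_0)|=O(\eps)$.

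This last bound is the main obstacle, because $x\mapsto\hat Q(x)$ is \emph{not} globally Lipschitz: a region $R_i$ contributes its exact value $\sum_{y\in R_i}|\inner{x,y}|^p$, read off the stored tensor $q_i$, when the equator of $x$ misses $R_i$, but the sampled surrogate $|\inner{x,z_i}|^p c_i$ when the equator crosses $R_i$, and these disagree by up to $\Theta(c_i\eta^p/n)$ at the transition; summed over the $\Theta(\eta^{-(d-2)})$ regions near an equator this jump can exceed $\eps$ once $d\ge 4$. I would get around this by two observations. First, it is harmless to base the exact/sampled decision for $R_i$ on the test $|\inner{x,v_i}|\le C\eta$, where $v_i$ is the center of $R_i$ coming from the net in the proof of Lemma~\ref{lem:partition_sphere} (stored alongside $R_i$) and $C$ is a suitable constant: since $R_i\subseteq B(v_i,\eta)$, for $C$ large this still forces $\inner{x,\cdot}$ to have a single sign on $R_i$ in the exact branch and keeps every $|\inner{x,y}|=O(\eta)$ in the sampled branch, so the variance/Bernstein analysis of Theorem~\ref{thm:for-each-general-d} is unchanged up to constants. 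Second, with this convention the ``active pattern'' — which regions are in the sampled branch, and the sign of $\inner{x,\cdot}$ in each exact-branch region — is constant on the cells of an arrangement of the $O(t)=O(\eta^{-(d-1)})=\poly(1/\eps)$ slabs $\{x:|\inner{x,v_i}|\le C\eta\}$, so there are only $\poly(1/\eps)$ distinct patterns, and on each cell $\hat Q$ \emph{is} Lipschitz: the exact-branch regions give a polynomial in $x$ of Lipschitz constant $O(1)$ (as $\sum_i c_i=O(n)$ and $\norm{y}_2=O(1)$), while the sampled-branch regions together have Lipschitz constant $O(\eta^{p-1})\cdot\frac1n\sum_{i\in I(x)}c_i=O(\eta^{p-1})\cdot O(\eta)=O(\eta^p)$, using $|I(x)|=O(\eta^{-(d-2)})$ and $c_i\le\eta^{d-1}n$.

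Putting this together, I would take $\cN$ to be the union, over the $\poly(1/\eps)$ cells, of a $\gamma$-net of each cell; this keeps $|\cN|=\poly(1/\eps)$, so the union bound above still succeeds after enlarging $c$ by a constant, and now for any $x$ there is an $x_0\in\cN$ lying in the \emph{same} cell with $\norm{x-x_0}_2\le\gamma$, whence $|\hat Q(x)-\hat Q(x_0)|=O(\gamma)=O(\eps)$ by the in-cell Lipschitz bound. Combining the three $O(\eps)$ estimates establishes \eqref{eqn:data-structure-guarantee} for all $x\in\bS^{d-1}$ simultaneously with high probability, and the $O(1)$ update time (via the single-copy/Bernstein variant) and the $\tilde O(\eps^{-\frac{2(d-1)}{d+2p-1}})$ space bound are inherited from Theorem~\ref{thm:for-each-general-d} up to the $\poly(\log(1/\eps))$ factors hidden in $\tilde O(\cdot)$.
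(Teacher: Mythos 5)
Your proposal is correct, and it fills in a genuine gap that the paper's one-line justification (``applying the median trick and a net argument'') does not address. The estimator $Q(x)$ of Theorem~\ref{thm:for-each-general-d} is \emph{not} globally Lipschitz in $x$: as the equator of $x$ moves, the set of regions queried via the sampled surrogate $c_i\,|\inner{x,z_i}|^p$ rather than the exact tensor $\inner{x^{\otimes p},q_i}$ changes discontinuously, and the jump at a single flip is $O(c_i\eta^p/n)=O(\eta^{d+p-1})$. Summed over the $\Theta(\eta^{-(d-2)})$ near-equator regions, this is $O(\eta^{p+1})=O(\eps^{2(p+1)/(d+2p-1)})$, which exceeds $\eps$ once $d\ge 4$. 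So the naive ``union bound over a $\gamma$-net, then transfer by Lipschitzness'' argument does not close, and something like your slab-based classification ($|\inner{x,v_i}|\le C\eta$) combined with a net respecting the cell decomposition of the arrangement of the $O(\eta^{-(d-1)})=\poly(1/\eps)$ slabs is genuinely needed. Your bookkeeping is right: the exact branch contributes Lipschitz constant $O(1)$, the sampled branch $O(\eta^{p-1})\cdot O(\eta)=O(\eta^p)$ using $|I(x)|=O(\eta^{-(d-2)})$ and $c_i\le\eta^{d-1}n$; the number of cells is $\poly(1/\eps)$ and each admits a $\gamma$-net of size $O(\gamma^{-(d-1)})$ contained in it (e.g.\ a maximal $\gamma$-separated subset), so the cell-aware net has $\poly(1/\eps)$ points and the union bound goes through with the amplified per-query failure probability $\eps^{c}$. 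Both amplification routes you propose are sound, and the Bernstein-in-place-of-Chebyshev variant (shrink $\eta$ by a $\polylog(1/\eps)$ factor absorbed into $\tilde O(\cdot)$) is the right way to preserve $O(1)$ update time. In short, you have given a correct and more careful proof than the paper's sketch would yield if taken literally; the cell-aware net is the key observation that makes ``net argument'' actually work here for $d\ge 4$.
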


\paragraph{Tight Bound for $d \le 2p + 2$.} Below we show that when $d \le 2p + 2$, the algorithm above can achieve a tight space complexity with a small modification. Let $\eta = \eps^{\frac{2}{d + 2p}}$. When a region contains more than $\eta^{d - 1} n$ points, we split it into a number of new sub-regions with diameter at least half that of the larger region and we stop splitting when a region has diameter less than $O(\poly(\eps))$ because the error is negligible at this point. Specifically, for a region with diameter $O(\eta / 2^i)$, we take a new $O(\eta / 2^{i + 1})$-partition in Lemma~\ref{lem:partition_sphere} and partition the region using the new partitions. 

Now we turn to bound the variance. We split $Z$ as $Z = \sum_{i\geq 0} \sum_{j\in I_i} Z_j$, where $I_i$ is the set of indices of regions which $H$ intersects and has diameter in $(\eta/2^{i-1}, \eta/2^i]$. 
From Lemma~\ref{lem:partition_sphere} we have that $|I_i| = O(\eta^{-(d - 2)}\cdot 2^{i(d - 2)})$ and so $\sum_{j\in I_i} \Var[Z_j] \leq (\eta^{d - 1}n)^2 \cdot (\eta / 2^{i})^{2p} \cdot |I_i| = O(\eta^{d + 2p} n) = O(\eps^2 n^2)$. Summing over $i=0,1,\dots, O(\log(1/\eps))$, it follows that $\Var[Z] = O(\eps^2 n^2 \log(1/\eps))$. 
Using the same argument as before and after a rescaling of $\eps$, we conclude with the following theorem.

\begin{theorem} \label{thm:tight_bound_small_d}
Suppose that $d \le 2p + 2$. Let $A = a_1 \circ \cdots \circ a_n$ be a stream of $n$ rows, where $a_i \in \R^{1 \times d}$. There is an algorithm which maintains a data structure $Q$ of $\tilde{O}(\eps^{-\frac{2(d - 1)}{d + 2p}})$ words of space such that
\eqref{eqn:data-structure-guarantee} holds for all $x\in \bS^{d-1}$ simultaneously with high probability. Moreover, the algorithm updates $Q$ in $O(1)$ time and can be implemented using $\tilde{O}(\eps^{-\frac{2(d - 1)}{d + 2p}})$ words of space.
\end{theorem}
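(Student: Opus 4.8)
The plan is to reuse the sphere-partitioning data structure behind Theorem~\ref{thm:for-each-general-d} but to add a recursive refinement rule that prevents any region from ever holding too many points. First I would invoke Lemma~\ref{lem:partition_sphere} with the smaller scale $\eta = \eps^{2/(d+2p)}$ to obtain a partition $R_1,\dots,R_t$ of $\bS^{d-1}$ with $t = O(\eta^{-(d-1)})$ regions, and for each region maintain, as before, the tensor sum $q_i = \sum_{y\in R_i} y^{\otimes p}$ together with a reservoir sample $z_i$ and a count $c_i$ (scaling any point not on $\bS^{d-1}$ to a unit vector when deciding which region it falls in, which is legitimate since $\norm{A_i}_2=O(1)$). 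The new ingredient is: as soon as a region absorbs more than $\eta^{d-1}n$ points, re-partition it with a fresh partition of half its diameter (again via Lemma~\ref{lem:partition_sphere}), route subsequent points of that region into the finer cells, and recurse; stop refining once a region's diameter falls below a fixed $\poly(\eps)$ threshold, at which point its total contribution to the estimate is at most $O(\eps)$ and can be bounded trivially.

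The query procedure is the same dichotomy as in Section~\ref{sec:upper_bound}: for a query direction $x$ with equator $H = \{y:\inner{x,y}=0\}$, a region $R_i$ disjoint from $H$ contributes $\inner{x^{\otimes p}, q_i}$ with zero error, while for each region $R_j$ that $H$ intersects I use the unbiased estimator $Z_j = c_j\abs{\inner{x,z_j}}^p$ and output $Z = \sum_{j\in I} Z_j$, where $I$ indexes the intersected regions. The crux is the variance bound, and this is where the hypothesis $d\le 2p+2$ enters. Grouping the intersected regions by diameter scale, let $I_i$ collect those of diameter in $(\eta/2^{i-1},\eta/2^i]$; Lemma~\ref{lem:partition_sphere} gives $\abs{I_i} = O(\eta^{-(d-2)}2^{i(d-2)})$, each such region holds at most $\eta^{d-1}n$ points, and $\abs{\inner{x,z_j}}^p \le (\eta/2^i)^{p}$ there, so $\sum_{j\in I_i}\Var[Z_j] \le (\eta^{d-1}n)^2(\eta/2^i)^{2p}\abs{I_i} = O\!\big(\eta^{d+2p}n^2\,2^{i(d-2-2p)}\big)$. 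When $d\le 2p+2$ the exponent $d-2-2p$ is nonpositive, so each scale contributes $O(\eta^{d+2p}n^2) = O(\eps^2 n^2)$, and summing over the $O(\log(1/\eps))$ scales yields $\Var[Z] = O(\eps^2 n^2\log(1/\eps))$. Chebyshev then gives $\abs{Z-\E Z} = O(\eps n\sqrt{\log(1/\eps)})$ with constant probability for a fixed $x$, and rescaling $\eps$ down by a $\sqrt{\log(1/\eps)}$ factor (which only inflates the space by a $\polylog$ factor) restores the $O(\eps n)$ additive error; dividing by $n$ gives \eqref{eqn:data-structure-guarantee} for fixed $x$.

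To upgrade to the ``for all $x$'' statement I would run $O(\log(1/\eps))$ independent copies and take the median, then union bound over a $\poly(\eps)$-net of $\bS^{d-1}$ of size $\poly(1/\eps)$ and use that $x\mapsto\abs{\inner{x,y}}^p$ is Lipschitz to pass from the net to all of $\bS^{d-1}$ at the cost of an extra $O(\eps)$. For the space bound I would argue the total number of regions ever created is $\tilde O(\eta^{-(d-1)})$: the initial partition has $O(\eta^{-(d-1)})$ regions, a region refines only after absorbing $\eta^{d-1}n$ points so the number of refinement events is $O(\eta^{-(d-1)}\log(1/\eps))$ (the refinement depth being $O(\log(1/\eps))$), and each refinement of a diameter-$\delta$ region meets only $O(1)$ cells of a $\delta/2$-partition, hence produces $O(1)$ children; each region stores $O(d^p)=O(1)$ words and each incoming point touches $O(\log(1/\eps))$ levels, so update time is $O(1)$ and total space is $\tilde O(\eps^{-2(d-1)/(d+2p)})$.

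The hard part will be the bookkeeping around the recursive splitting rather than the variance estimate: I need to verify that the refinement really does keep the region count at $\tilde O(\eta^{-(d-1)})$ (so that the space does not blow up geometrically with the depth) and that the $\poly(\eps)$-diameter cutoff, below which possibly many tiny regions exist, still contributes only $O(\eps)$ to the total error. The variance computation itself is routine once $d\le 2p+2$ collapses the geometric sum.
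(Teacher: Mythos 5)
Your proposal is correct and follows essentially the same route as the paper: same choice $\eta = \eps^{2/(d+2p)}$, same recursive halving-of-diameter refinement with a $\poly(\eps)$ cutoff, same decomposition $Z = \sum_i \sum_{j\in I_i} Z_j$ by dyadic diameter scale with $|I_i| = O(\eta^{-(d-2)}2^{i(d-2)})$, and the same Chebyshev-plus-median-plus-net conclusion. If anything you are slightly more careful than the paper: you make explicit the factor $2^{i(d-2-2p)}$ whose nonpositivity under $d\le 2p+2$ is the whole point (the paper silently drops it, and in fact has a typo writing $O(\eta^{d+2p}n)$ rather than $O(\eta^{d+2p}n^2)$), and you actually carry out the bookkeeping that keeps the region count at $\tilde O(\eta^{-(d-1)})$ — namely that a refinement event requires $\eta^{d-1}n$ fresh points so there can be at most $\eta^{-(d-1)}$ of them and each spawns only $O(1)$ children — which the paper leaves to the reader.
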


\section{Affine $\ell_p$-Subspace Sketch}
\label{sec:affine_subspace_sketch}

In this section, we consider the following special version of the affine $\ell_p$-subspace sketch problem, which can be seen as a generalization of the $\ell_p$ subspace sketch problem. Given a matrix $A \in \R^{n \times d}$ and for any given $x \in \R^d$ and $b \in \R$, we want to determine a $(1 \pm \eps)$-approximation to $\sum_i |\inner{A_i, x} - b|^p$.

\paragraph{Upper Bound.} We first consider the upper bound. The crucial observation is that, given matrix $A \in \R^{n \times d}$, let $B \in \R^{n \times (d + 1)}$ be the matrix for which the $i$-th row $B_i = \begin{pmatrix}A_i & -1\end{pmatrix}$. Suppose that $x \in \R^d$ and $b \in \R$ are the query vector and value, and let $y  = \begin{pmatrix}x^\top & b\end{pmatrix}^\top \in \R^{d + 1} $. Then
\[
\sum_i |\inner{A_i, x} - b|^p = \sum_i |\inner{B_i, y}|^p = \norm{By}_p^p.
\]
Hence, any data structure that solves the $\ell_p$ subspace sketch problem in $(d + 1)$ dimensions solves the affine $\ell_p$ subspace sketch problem for $d$ dimensions. The following theorem follows immediately from Theorem~\ref{thm:coreset_multiplicative}.

\begin{theorem}
Suppose that $A \in \R^{n \times d}$ and $p$ is a positive integer constant. Then there is a polynomial-time algorithm that returns a data structure with $\tilde{O}(\eps^{-\frac{2d}{d + 2p + 1}})$ bits of space, which solves the affine $\ell_p$ subspace sketch problem. 
\end{theorem}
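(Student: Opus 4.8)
The plan is to observe that this statement is an immediate corollary of the reduction spelled out in the paragraph just above, combined with Theorem~\ref{thm:coreset_multiplicative} applied in dimension $d+1$. Concretely, given $A\in\R^{n\times d}$, I would form the matrix $B\in\R^{n\times(d+1)}$ whose $i$-th row is $B_i=\begin{pmatrix}A_i & -1\end{pmatrix}$, and for a query pair $(x,b)\in\R^d\times\R$ set $y=\begin{pmatrix}x^\top & b\end{pmatrix}^\top\in\R^{d+1}$. Then $\sum_i|\inner{A_i,x}-b|^p=\norm{By}_p^p$ as an exact algebraic identity, so any approximation to $\norm{By}_p^p$ that is good for all $y\in\R^{d+1}$ is precisely an approximation to the affine objective that is good for all $(x,b)$.

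First I would run the algorithm of Theorem~\ref{thm:coreset_multiplicative} on the matrix $B$ with dimension parameter $d+1$. This returns, in polynomial time, a subset of $m=\tilde O\big(\eps^{-\frac{2((d+1)-1)}{(d+1)+2p}}\big)=\tilde O\big(\eps^{-\frac{2d}{d+2p+1}}\big)$ rows of $B$ together with a weight vector $w\in\R^m$ such that, with high probability, $\big|\sum_{j}w_j|\inner{B_j,y}|^p-\norm{By}_p^p\big|=O(\eps)\,\norm{By}_p^p$ for every $y\in\bS^{d}$. Since both sides are $p$-homogeneous in $y$, this relative-error guarantee extends from $\bS^{d}$ to all of $\R^{d+1}$. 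The data structure is simply the $m$ stored rows with their weights, and on query $(x,b)$ it outputs $\sum_j w_j|\inner{B_j,y}|^p$ with $y=\begin{pmatrix}x^\top & b\end{pmatrix}^\top$, which by the identity above is a $(1\pm\eps)$-approximation of $\sum_i|\inner{A_i,x}-b|^p$.

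Next I would do the space bookkeeping, which is the only place any care is needed. Each stored row of $B$ has $d+1=O(1)$ coordinates, of which $d$ are entries of $A$ (each $O(\log(nd))$ bits) and one is the constant $-1$; each weight can be rounded to $O(\log(nd/\eps))$ bits, which perturbs the estimate by at most a $(1\pm\eps)$ factor since the estimate is a nonnegative combination of the quantities $|\inner{B_j,y}|^p$. Hence the whole structure occupies $m\cdot O(\log(nd/\eps))=\tilde O\big(\eps^{-\frac{2d}{d+2p+1}}\big)$ bits (equivalently, that many words, up to the $\polylog$ factors already hidden by $\tilde O(\cdot)$), and the construction runs in polynomial time because it is dominated by the single call to the algorithm of Theorem~\ref{thm:coreset_multiplicative}.

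I do not expect any genuine obstacle: the mathematical content is entirely contained in Theorem~\ref{thm:coreset_multiplicative}, and the reduction to it is an exact identity rather than an approximation. The only mild points to verify are (i) that the ``for all $y\in\bS^{d}$'' guarantee upgrades to ``for all $y\in\R^{d+1}$'', which is immediate from $p$-homogeneity of $\norm{\cdot}_p^p$, and (ii) that the word bound converts to the claimed bit bound at the cost of only a $\polylog$ factor, which is absorbed into $\tilde O(\cdot)$.
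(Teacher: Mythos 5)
Your proposal is correct and follows exactly the paper's own argument: append a $-1$ column to form $B\in\R^{n\times(d+1)}$, observe $\sum_i|\inner{A_i,x}-b|^p=\norm{By}_p^p$, and invoke Theorem~\ref{thm:coreset_multiplicative} in dimension $d+1$ so that the exponent becomes $\frac{2((d+1)-1)}{(d+1)+2p}=\frac{2d}{d+2p+1}$. The extra remarks about homogeneity and word-to-bit conversion are harmless elaborations of what the paper leaves implicit.
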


\paragraph{Lower Bound.} We now turn to the lower bound. We will show that interestingly, we obtain a lower bound with the same space complexity. To achieve this, we need the following lemma, which can be seen as a stronger version of Lemma~\ref{lem:min_error}.

\begin{lemma}
\label{lem:min_error_weight}
Suppose that $p\in [1,\infty)\setminus 2\Z$ is a constant.
Let $A$ and $B$ be sets of $n\leq N$ points in a spherical shell $\{x\in \R^d: \alpha \leq \norm{x}_2 \leq \beta\}$, where $\alpha,\beta>0$ are constants such that $\alpha<\beta<(\frac{1+\sqrt{3}}{2})^{\frac{1}{p}}\alpha$. Suppose that $A$ and $B$ are symmetric around the origin and $\norm{\frac{A_i}{\norm{A_i}_2} - \frac{B_j}{\norm{B_j}_2}}_2 \geq \eta = C_1 N^{-\frac{1}{d - 1}} $ for all $i\neq j$. Then we have
\[
\delta \equiv \sup_{x \in \bS^{d - 1}} \frac{1}{n}\Abs{ \norm{Ax}_p^p - \norm{Bx}_p^p } \ge c_2 N^{-\frac{d + 2p}{2(d - 1)}}.
\]
 In the statement above, $C_1 > 0$ is a constant that depends only on $d$ and $c_2 > 0$ is a constant that depends on $d,p,\alpha,\beta$ only.
\end{lemma}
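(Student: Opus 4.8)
The plan is to mimic the proof of Lemma~\ref{lem:min_error} almost verbatim; the one genuinely new ingredient is the control of a constant term that now survives because $A$ and $B$ need not have the same cardinality (more precisely, the same total ``$p$-th power mass''), and it is exactly for this step that the thinness of the shell is needed. The reduction is as follows: write $A_i=\rho_i\hat A_i$ with $\rho_i=\norm{A_i}_2\in[\alpha,\beta]$ and $\hat A_i=A_i/\norm{A_i}_2\in\bS^{d-1}$, and similarly $B_j=\rho_j'\hat B_j$; set $w_i=\rho_i^p\in[\alpha^p,\beta^p]$ and $w_j'=(\rho_j')^p$. Then $h_A(x):=\frac1n\norm{Ax}_p^p=\frac1n\sum_i w_i\Abs{\inner{\hat A_i,x}}^p$, and likewise $h_B$; the families $\{\hat A_i\}$ and $\{\hat B_j\}$ are symmetric about the origin, antipodal points carry equal weights, and $\norm{\hat A_i-\hat B_j}_2\ge\eta$ whenever $\hat A_i\ne\hat B_j$. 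The only difference from Lemma~\ref{lem:min_error} is that $\frac1n\sum_i w_i$ and $\frac1n\sum_j w_j'$ need not agree.

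\emph{The harmonic estimate.} Put $g=h_A-h_B$. Since $\sigma_{d-1}$ is a probability measure, $\delta\ge\norm{g}_{L_2(\sigma_{d-1})}\ge\big(\sum_{\text{even }k\ge2}\sum_j\inner{g,Y_{k,j}}^2\big)^{1/2}$ (odd harmonics vanish by Lemma~\ref{lem:lambda_k}, and the $k=0$ term is dropped). Running the argument of Lemma~\ref{lem:min_error} — the Funk-Hecke formula \eqref{eq:funk-hecke} and the Poisson expansion \eqref{eq:4} — while carrying the weights, one gets
\[
G_A(u):=\frac1n\sum_i w_i\frac{1-r^2}{(1+r^2-2r\inner{u,\hat A_i})^{d/2}}=\Big(\frac1n\sum_i w_i\Big)+\sum_{\text{even }k\ge2}r^k\lambda_k^{-1}\sum_j\inner{h_A,Y_{k,j}}Y_{k,j}(u),
\]
and the analogue for $G_B$, so that $G_A-G_B=c_0+\sum_{\text{even }k\ge2}r^k\lambda_k^{-1}\sum_j\inner{g,Y_{k,j}}Y_{k,j}$, where $c_0=\frac1n\big(\sum_i w_i-\sum_j w_j'\big)$. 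Parseval gives $\norm{G_A-G_B}_{L_2}^2=c_0^2+\sum_{\text{even }k\ge2}r^{2k}\lambda_k^{-2}\sum_j\inner{g,Y_{k,j}}^2$, and hence
\[
\delta^2\ \ge\ \frac{\norm{G_A-G_B}_{L_2}^2-c_0^2}{\max_{\text{even }k\ge2}\big(r^{2k}\lambda_k^{-2}\big)}.
\]

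\emph{Bounding the numerator — the main obstacle.} Evaluating $\norm{G_A-G_B}_{L_2}^2$ via \eqref{eq:4} (exactly as the left-hand side of \eqref{eq:6} is evaluated) and then pairing each point with its antipode as in the passage to \eqref{eq:f}, it becomes $\frac2{n^2}$ times the same quadratic form in the kernel $f$ of Lemma~\ref{lem:f-value} as in \eqref{eq:f}, now with the weight factors $w_iw_{i'}$, $w_j'w_{j'}'$, $w_iw_j'$ inserted. (A direction common to $A$ and $B$ can occur only at a common index; its contributions to a diagonal term and to the cross sum combine into $f(1)(w_i-w_i')^2\ge0$ and may be set aside — in the intended application $A$ and $B$ are disjoint, so this does not arise.) Set $(1-r^2)^{-(d-1)}=N$. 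The diagonal terms contribute at least $\alpha^{2p}nN$, since $f(1)\ge(1-r^2)^{-(d-1)}=N$ and $w_i^2\ge\alpha^{2p}$ (the remaining $A$- and $B$-terms being nonnegative); the cross sum is controlled by the \emph{same} level-set estimate as in Lemma~\ref{lem:min_error} (using Lemmas~\ref{lem:spherical cap} and~\ref{lem:f-value}), which under $\norm{\hat A_i-\hat B_j}_2\ge\eta=C_1(d)N^{-1/(d-1)}$ gives $\sum_{j\in J}f(\inner{\hat A_i,\hat B_j})\le\frac14N$ for each $i$, hence $2\sum_{i\in I,j\in J}w_iw_j'f(\inner{\hat A_i,\hat B_j})\le\tfrac14\beta^{2p}nN$; finally $\Abs{c_0}\le\beta^p-\alpha^p$. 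With $N/n\ge2$ these estimates yield
\[
\norm{G_A-G_B}_{L_2}^2-c_0^2\ \ge\ \frac{2N}{n}\Big(\alpha^{2p}-\tfrac14\beta^{2p}\Big)-(\beta^p-\alpha^p)^2,
\]
and the hypothesis $\beta<\big(\tfrac{1+\sqrt3}{2}\big)^{1/p}\alpha$ is precisely what guarantees — through an elementary quadratic inequality in $s:=(\beta/\alpha)^p$, with a slightly more careful accounting of the cross term than the crude version above — that the right-hand side is bounded below by a positive constant $c(d,p,\alpha,\beta)$.

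\emph{Conclusion.} By Lemma~\ref{lem:lambda_k}, $\max_{\text{even }k\ge2}(r^{2k}\lambda_k^{-2})\sim_{d,p}\max_k(r^{2k}k^{d+2p})$, which for $1-r^2=N^{-1/(d-1)}$ is attained at $k\asymp N^{1/(d-1)}$ with value $\sim_{d,p}N^{(d+2p)/(d-1)}$, exactly as in Lemma~\ref{lem:min_error}. Substituting into the displayed bound for $\delta^2$ gives $\delta\ge c_2N^{-\frac{d+2p}{2(d-1)}}$ with $c_2=c_2(d,p,\alpha,\beta)>0$, as claimed. The crux — the only place the proof is more than a mechanical adaptation of Lemma~\ref{lem:min_error} — is ensuring $\norm{G_A-G_B}_{L_2}^2>c_0^2$; this is where the shell must be thin, so that the $\Theta(\alpha^{2p})$ ``diagonal signal'' outweighs the $\Theta(\beta^{2p})$ cross contribution together with $c_0^2\le(\beta^p-\alpha^p)^2$.
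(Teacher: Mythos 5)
Your proof follows the paper's argument step for step: normalize to the sphere carrying weights, run the Funk--Hecke/Poisson expansion, use Parseval to isolate a constant term $c_0^2 = (c_A - c_B)^2$, bound the weighted quadratic form from below by the diagonal minus the cross term (via the same level-set estimate as in Lemma~\ref{lem:min_error}), and finally invoke the thin-shell hypothesis to keep $\norm{G_A-G_B}_{L_2}^2 - c_0^2$ bounded away from $0$. This is exactly what the paper does, and the argument is sound.

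A few small remarks on where your details diverge from the paper, none of which affect correctness. You normalize with $w_i=\rho_i^p\in[\alpha^p,\beta^p]$ (with a visible $1/n$ prefactor on $h_A$), which is the correct bookkeeping; the paper writes the weight range as $[\frac{1}{\beta^p n},\frac{1}{\alpha^p n}]$, which appears to be a reciprocal slip. With the correct normalization the final quadratic inequality in $s = (\beta/\alpha)^p$ comes out even a bit looser than $1 + 2s - 2s^2 > 0$ (the paper's inequality), so the stated threshold $\beta < (\frac{1+\sqrt3}{2})^{1/p}\alpha$ is more than enough; your parenthetical ``a slightly more careful accounting'' is therefore unnecessary, and indeed $N/n\ge 1$ (which the hypotheses do give) already suffices, so you do not need the unsupported assumption $N/n\ge 2$. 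Finally, your explicit handling of the case $\hat A_i = \hat B_j$ is a nice precaution the paper leaves implicit; the lemma's ``for all $i\ne j$'' is most naturally read as ``for all $i,j$'' given $A$ and $B$ are independent index sets, and you correctly note the degenerate case does not arise in the intended application.
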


\begin{proof}
Normalizing the points in $A$ and $B$, we consider the equivalent form of this problem: $A$ and $B$ are still subsets of $\bS^{d - 1}$, while the target error $\delta$ becomes
\[
\delta \equiv \sup_{x \in \bS^{d - 1}} \left( \sum_i w_{A_i}|\inner{A_i, x}|^p - \sum_i w_{B_i} |\inner{B_i, x}|^p\right),
\]
where $w_i$'s are weights such that $w_i \in [\frac{1}{\beta^p n},\frac{1}{\alpha^p n}]$. Define the function $h_A = \sum_i w_{A_i}|\inner{A_i, x}|^p$ and $h_B = \sum_i w_{B_i} |\inner{B_i, x}|^p$. Following similar steps as in the proof of Lemma~\ref{lem:min_error}, we obtain that
\[
   \sum_{i = 1}^{n} \frac{w_{A_i}(1 - r^2)}{(1 + r^2 - 2r\inner{u, A_i})^{d/2}} = c_A + \sum_{\text{even }k\geq 2} r^k \lambda_k^{-1} \sum_j \inner{h_A, Y_{k,j}} Y_{k, j}(u) \;
\]
and
\[
    \sum_{i = 1}^{n} \frac{w_{B_i}(1 - r^2)}{(1 + r^2 - 2r\inner{u, B_i})^{d/2}} = c_B + \sum_{\text{even }k\geq 2} r^k \lambda_k^{-1} \sum_j \inner{h_B, Y_{k,j}} Y_{k, j}(u) ,
\]
where $c_A = \sum_i w_{A_i}$ and $c_B = \sum_i w_{B_i} $ are both contained in $[\frac{1}{\beta},\frac{1}{\alpha}]$.
Hence
\begin{multline*}
    \sum_{i = 1}^{n} \frac{w_{A_i}(1 - r^2)}{(1 + r^2 - 2r\inner{u, A_i})^{d/2}} - \sum_{i = 1}^{n} \frac{w_{B_i}(1 - r^2)}{(1 + r^2 - 2r\inner{u, B_i})^{d/2}} \\ 
    = c_A - c_B + \sum_{\text{even }k\geq 2}  r^k \lambda_k^{-1} \sum_j \inner{h_A - h_B, Y_{k,j}} Y_{k, j}(u) .
\end{multline*}
Integrating with respect to $u$ on $\bS^{d-1}$, we have that
\begin{equation}\label{eq:7}
\begin{aligned}
    &\quad\ \left\|\sum_{i = 1}^{n} \frac{w_{A_i}(1 - r^2)}{(1 + r^2 - 2r\inner{u, A_i})^{d/2}} - \sum_{i = 1}^{n} \frac{w_{B_i}(1 - r^2)}{(1 + r^2 - 2r\inner{u, B_i})^{d/2}}\right\|_{L_2(\sigma_{d-1})}^2 \\
    &= \sum_{\text{even }k\geq 2} r^{2k} \lambda_k^{-2} \sum_j \inner{h_A - h_B, Y_{k,j}}^2 + (c_A - c_B)^2\\
    &\le \delta^2 \max_{\text{even }k\geq 2}(r^{2k} \lambda_k^{-2}) + (c_A - c_B)^2 ,
\end{aligned}
\end{equation}
Following the same steps as in Lemma~\ref{lem:min_error}, we obtain that the left-hand side of~\eqref{eq:7} is at least $\frac{2}{\beta^{2p}}-\frac{1}{\alpha^{2p}} > 0$. Note that we have $c_A$ and $c_B$ are both in $[\frac{1}{\beta^p},\frac{1}{\alpha^p}]$ so $(c_A - c_B)^2 \geq (\frac{1}{\alpha^p}-\frac{1}{\beta^p})^2$. Hence 
\[
\delta^2 \max_{\text{even }k}(r^{2k} \lambda_k^{-2}) \ge \frac{2}{\beta^{2p}}-\frac{1}{\alpha^{2p}} - (\frac{1}{\alpha^p}-\frac{1}{\beta^p})^2 = \frac{1}{\beta^{2p}} + \frac{2}{\alpha^p\beta^p} - \frac{2}{\alpha^{2p}}  > 0
\]
by our assumptions on $\alpha$ and $\beta$. It follows that
\[
\delta \ge c(d,p,\alpha,\beta) N^{-\frac{d + 2p}{2(d - 1)}}  . \qedhere
\]
\end{proof}
Consider the same $S = \{p_1, \ldots, p_n\}$ as in Section~\ref{sec:lower_bound} for $(d + 1)$ dimensions. Consider the spherical cap
\[
C = \left\{x \in \bS^{d}: x_{d + 1} \ge \left(\frac{3}{4}\right)^{\frac{1}{p}}\right\} . 
\]
and let $T = S \cap C$. From Lemma~\ref{lem:spherical cap} we have that the area of $C$ is at least $c(d,p)$, a constant depending only on $d$ and $p$. Hence, $|T| = c(d,p) \cdot \Omega(\eps^{-\frac{2d}{d + 2p + 1}})$. Let $T'\subset \R^{d+1}$ be the set of points obtained by scaling the $(d + 1)$-st coordinate of each point in $T$ to $1$. Note that every point in $T'$ has the same last coordinate and $\ell_2$ norm in $[1,(\frac{4}{3})^{1/p}]$. Following the same steps in Section~\ref{sec:lower_bound} and combining with Lemma~\ref{lem:min_error_weight}, we have that any affine $\ell_p$ subspace sketch data structure solves the same subset identification problem. Our theorem follows immediately.

\begin{theorem}
\label{thm:lower_bound_affine}
Suppose that $p\in [1,\infty)\setminus 2\Z$ and $d$ are constants. Any data structure that solves the affine $\ell_p$-subspace sketch problem for dimension $d$ and accurary parameter $\eps$ requires $\Omega(\eps^{-\frac{2d}{d + 2p + 1}})$ bits of space.
\end{theorem}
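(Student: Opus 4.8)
The plan is to lift the $d$-dimensional affine problem to a $(d{+}1)$-dimensional $\ell_p$-subspace sketch instance and then re-run the communication argument of Section~\ref{sec:lower_bound}, but with the hard point set confined to a spherical cap of $\bS^{d}$ that stays away from the equator $\{x_{d+1}=0\}$. Recall the reduction already established above: with $B_i = (A_i\mid -1)$ and $y = (x^\top, b)^\top$ one has $\sum_i |\inner{A_i,x}-b|^p = \norm{By}_p^p$, so a data structure for the affine problem in dimension $d$ is precisely a subspace sketch data structure for the $(d{+}1)$-dimensional matrix $B$, all of whose rows have last coordinate $-1$. First I would fix $N = \Theta(\eps^{-\frac{2d}{d+2p+1}})$ so that, with $\eta = C_1 N^{-1/d}$ as in Lemma~\ref{lem:min_error_weight} applied in ambient dimension $d+1$, the guaranteed gap $c_2 N^{-\frac{d+2p+1}{2d}}$ is at least $12\eps$.

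Next I would build the packing on the cap $C = \{x\in\bS^{d} : x_{d+1}\geq (3/4)^{1/p}\}$. By Lemma~\ref{lem:spherical cap} the measure of $C$ is bounded below by a constant depending only on $d$ and $p$, so $C$ contains $n = \Theta(\eta^{-d}) = \Theta(N)$ points $p_1,\dots,p_n$ that are pairwise $\eta$-separated; since each $p_i$ has $(d{+}1)$-st coordinate at least $(3/4)^{1/p}>0$, the antipodal separation $\norm{p_i + p_j}_2 \geq 2(3/4)^{1/p} \geq \eta$ holds automatically once $N$ is large. Scaling the last coordinate of each $p_i$ to $\pm1$ produces a point $p_i'$ with $\norm{p_i'}_2 \in [1,(4/3)^{1/p}]$ and $p_i'/\norm{p_i'}_2 = \pm p_i$; hence the $p_i'$ lie in the spherical shell with $\alpha = 1$ and $\beta = (4/3)^{1/p}$, their radial projections are still $\eta$-separated, and the cap threshold $(3/4)^{1/p}$ is tuned exactly so that $\beta < \big(\tfrac{1+\sqrt 3}{2}\big)^{1/p}\alpha$, which is the hypothesis of Lemma~\ref{lem:min_error_weight}. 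Applying Lemma~\ref{lem:intersecting_family} to $S = \{p_1',\dots,p_n'\}$ gives $m = 2^{\Omega(n)}$ subsets $S_1,\dots,S_m$ with $|S_i| = n/2$ and $|S_i\cap S_j| \leq n/4$ for $i\neq j$.

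The communication step is then a transcription of Section~\ref{sec:lower_bound}: Alice, holding $S_t$, sends an affine-$\ell_p$ subspace sketch $Q$ of the point set $S_t$ (so $Q(x,b) = (1\pm\eps)\norm{S_t\, y}_p^p$ with $y = (x^\top, b)^\top$, identifying $S_t$ with the matrix whose rows are its points written with last coordinate $-1$). For $S_i = S_t$ this gives $\Abs{Q(y) - \norm{S_i y}_p^p} \leq n\eps$ for every query on $\bS^{d}$ (using $\norm{p_i'}_2 = O(1)$). For $S_i\neq S_t$ the common points cancel, $\norm{S_i y}_p^p - \norm{S_t y}_p^p = \norm{(S_i\setminus S_t) y}_p^p - \norm{(S_t\setminus S_i) y}_p^p$, and applying Lemma~\ref{lem:min_error_weight} to the symmetrized sets $(S_i\setminus S_t)\cup-(S_i\setminus S_t)$ and $(S_t\setminus S_i)\cup-(S_t\setminus S_i)$ — which are admissible since $|S_i\setminus S_t| = |S_t\setminus S_i| \geq n/4$, their radial projections are $\eta$-separated, and they lie in the shell above — produces a $y\in\bS^{d}$ with $\Abs{\norm{S_i y}_p^p - \norm{S_t y}_p^p} \geq \tfrac14 n c_2 N^{-\frac{d+2p+1}{2d}} \geq 3n\eps$. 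Querying $Q$ on a sufficiently fine net of $\bS^{d}$ therefore lets Bob recover $t$, so $Q$ must use $\Omega(\log m) = \Omega(n) = \Omega(N) = \Omega(\eps^{-\frac{2d}{d+2p+1}})$ bits, which is the claim.

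All the analytic work is done by Lemma~\ref{lem:min_error_weight}, so I do not expect a genuinely new obstacle; the points needing care are (i) checking that confining the hard set to a fixed-width cap of $\bS^{d}$ costs nothing in the packing count (it does not, since the cap has constant measure, so we still get $\Theta(N)$ points), and (ii) the constant-chasing that makes the shell $[1,(4/3)^{1/p}]$ admissible — the threshold $(3/4)^{1/p}$ is chosen so that $(4/3)^{1/p} < \big(\tfrac{1+\sqrt 3}{2}\big)^{1/p}$ with essentially no slack, and one should also verify the antipodal separation survives, which it does because all last coordinates share a sign and have magnitude bounded below by $(3/4)^{1/p}$. The one bookkeeping step that must not be skipped is cancelling the common points of $S_i$ and $S_t$ before invoking Lemma~\ref{lem:min_error_weight}, so that its pairwise-separation hypothesis is actually satisfied on disjoint sets.
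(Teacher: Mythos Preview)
Your proposal is correct and follows essentially the same route as the paper: restrict the $(d{+}1)$-dimensional hard instance to the cap $C=\{x\in\bS^d:x_{d+1}\ge (3/4)^{1/p}\}$, normalize the last coordinate to $1$ so the points land in the shell $[1,(4/3)^{1/p}]$ where Lemma~\ref{lem:min_error_weight} applies, and then rerun the subset-identification argument of Section~\ref{sec:lower_bound}. If anything, you have spelled out more carefully than the paper the verification that $(4/3)^{1/p}<\big(\tfrac{1+\sqrt3}{2}\big)^{1/p}$, the antipodal separation on the cap, and the need to cancel $S_i\cap S_t$ before invoking the lemma on disjoint symmetrized sets.
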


\section{Point Estimation for SVMs} 
As an application of our results, we obtain tight space bounds for point estimation for the streaming SVM problem. In this section, we consider the following (regularized) SVM objective function:
\begin{equation}\label{eqn:F_lambda}
  F_\lambda(\theta, b) := 
\frac{\lambda}{2}\|(\theta, b)\|_2^2 + \frac{1}{n}\sum_{i=1}^n \max\{0,1-y_i(\theta^\top x_i + b)\},
\end{equation}
where $n$ data points $(x_i,y_i)\in
\R^d\times \{-1, +1\}$, with $\normtwo{x_i} = O(1)$ and $(\theta, b)\in
 \R^{d} \times \R $ are the unknown model parameters, and $\lambda$ is the regularization parameter. 
In this section, we consider the point estimation problem, that is, given $(\theta, b)$, we want to output an approximation to $F_{\lambda}(\theta, b)$. 
As mentioned in~\cite{ABL+20}, when $d \ge 2$, it is impossible to obtain a $(1 \pm \eps)$-approximation in $o(n)$ space and so we consider  $O(\eps)$-additive error.
Throughout this section, we assume that $\lambda = O(1)$, which is a common setting for this problem.

\begin{definition}[SVM Point Estimation]
    Given $(\theta, b)$ with $\norm{(\theta, b)}_2 = O(1)$, compute a value $Z$ such that $\abs{Z - F_\lambda(\theta, b)}\leq \eps$, where $F_\lambda(\theta,b)$ is as defined in~\eqref{eqn:F_lambda} with $\norm{x_i}_2 = O(1)$ for all $i$, $\lambda = O(1)$ and $y_i\in \{-1,+1\}$ for all $i$.
\end{definition}

First, we demonstrate that it suffices to consider a simplified SVM objective, as mentioned in~\cite{ABL+20}. 
We can assume that $\lambda = 0$ because we can compute the regularization term exactly. Next, recall that $y_i = \pm 1$, and so we can estimate the contribution from the positive labels and negative labels separately and so we can assume, without loss of generality, that $y_i = 1$ for all $i$. With a further replacement of $b$ with $1 - b$, the objective is changed to
\[
F(\theta, b) := \frac{1}{n}\sum_{i=1}^n \max\{0, b-\theta^\top x_i\} \;.
\]

\paragraph{Upper Bound.} The observation is similar to that in Section~\ref{sec:affine_subspace_sketch}. Given points $x_i \in \R^d$ with positive labels, let $y_i = \begin{pmatrix}-x_i^\top & 1\end{pmatrix}^\top \in \R^{d + 1}$. Suppose that $\theta \in \R^d$ and $b \in \R$ are the query vector and value, and let $\theta'  = \begin{pmatrix}\theta^\top & b\end{pmatrix}^\top \in \R^{d + 1} $. Then
\[
F(\theta, b) = \frac{1}{n}\sum_i \max\{0, b - \theta^\top x_i\} = \frac{1}{n}\sum_i \max\{0, \theta'^\top y_i\}.
\]
It is now clear that the problem can be seen as a variant of the $\ell_1$-subspace sketch problem in dimension $d + 1$, where $|\theta^\top x_i|$ is replaced with $\max\{0, \theta^\top x_i\}$. Our earlier algorithms also work for the $\max\{x, 0\}$-loss for the subspace sketch problem. Recall the proof of Lemma~\ref{lem:coreset}: we partition the data points into groups $P_1,\dots,P_t$. Given a query point $\theta$, for a group $P_i$ for which the hyperplane $H = \{x: \inner{x,\theta} = 0\}$ does not intersect, we can still obtain the exact value of   
\[
\sum_{x \in P_i} \max\{0, \inner{\theta, x}\} =  \sum_{x \in P_i} \inner{\theta, x} \text{  or  } \sum_{x \in P_i} \max\{0, \inner{\theta, x}\} = 0 \;.
\]
For the groups which $H$ intersects, since $ \max\{0, \inner{\theta, x}\} \le |\inner{\theta, x}|$, our analysis of the concentration for the sampling method still holds. Therefore, an analogous version of Lemma~\ref{lem:coreset_space} holds, that is, we can find a coreset of size $\tilde{O}(\eps^{-\frac{2d}{d+3}})$, which approximates $F(\theta, b)$ up to an additive error of $\eps$ in $O(n)$ space. To reduce the space usage, the key observation is that uniformly sampling $O(1/\eps^2)$ points, by Hoeffding's inequality, is sufficient for an $O(\eps)$ additive error. Therefore, we have effectively reduced $n$ to $O(1/\eps^2)$ as it suffices to find a coreset for $O(1/\eps^2)$ uniformly sampled points, for which we use Algorithm~\ref{alg:1}. Thus, we arrive at the following theorem. 

\begin{theorem}
\label{thm:streaming_svm}
Let $x_1, \dots, x_n$ be a stream of $n$ points, where $x_i \in \R^{d}$. There is an algorithm which computes a subset $y_i$'s of $m = \tilde{O}(\eps^{-\frac{2d }{d + 3}})$ points of $a_i$'s and weights $w\in \R^m$ such that with high probability, for every $(\theta, b) \in \R^d \times \R$ with $\norm{(\theta, b)}_2 = O(1)$,
\[
\left|\frac{1}{n}\sum_i w_i \cdot \max\{0, b - \theta^\top y_i\} - \frac{1}{n} \sum_i  \max\{0, b - \theta^\top x_i\}\right| = O(\eps) \; .
\] 
Moreover, the algorithm can be implemented in $\tilde{O}(\eps^{-\frac{2d}{d + 3}})$ words of space. \end{theorem}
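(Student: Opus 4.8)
The plan is to reduce the SVM point estimation problem to a one-sided (hinge) variant of the $\ell_1$-subspace sketch problem in dimension $d+1$, and then reuse the coreset and merge-and-reduce machinery from Sections~\ref{sec:upper_bound} and~\ref{sec:stream}. First I would carry out the reductions recalled above: the regularization term $\frac{\lambda}{2}\norm{(\theta,b)}_2^2$ is computed exactly from $(\theta,b)$ and maintained separately; the positive- and negative-labelled points are handled independently; and after the substitution $b\mapsto 1-b$ it suffices to estimate $F(\theta,b)=\frac1n\sum_i \max\{0,\,b-\theta^\top x_i\}$. Lifting each point to $y_i=(-x_i^\top,\ 1)^\top\in\R^{d+1}$ and writing $\theta'=(\theta^\top,b)^\top$, we get $F(\theta,b)=\frac1n\sum_i\max\{0,\inner{\theta',y_i}\}$, where $\norm{y_i}_2=O(1)$ since $\norm{x_i}_2=O(1)$, and $\norm{\theta'}_2=O(1)$ by hypothesis. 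Since $t\mapsto\max\{0,t\}$ is positively homogeneous of degree $1$, it is enough to obtain an additive-$\eps$ guarantee uniformly over $\theta'\in\bS^{d}$.

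Next I would establish the hinge-loss analogue of Lemma~\ref{lem:coreset_space} (equivalently of Lemma~\ref{lem:coreset} and Theorem~\ref{thm:coreset_additive}) in dimension $d+1$ with exponent $p=1$. The construction is verbatim: partition the (normalized) points via Lemma~\ref{lem:partition} into groups $P_1,\dots,P_t$ of diameter $O(N^{-1/d})$, and within each group select a weighted subsample via Lemma~\ref{lem:weights} (no tensoring is needed as $p=1$). Only two properties of the analysis are used, and both survive replacing $|\inner{\theta',y}|$ by $\max\{0,\inner{\theta',y}\}$: (i) if the hyperplane $\{y:\inner{\theta',y}=0\}$ does not split a group $P_i$, then all $\inner{\theta',y}$, $y\in P_i$, share a sign, so $\sum_{y\in P_i}\max\{0,\inner{\theta',y}\}$ equals $\inner{\theta',\sum_{y\in P_i}y}$ or $0$, both reproduced exactly by the subsample by the barycenter property; and (ii) if $P_i$ is split, then $0\le\max\{0,\inner{\theta',y}\}\le|\inner{\theta',y}|=O(N^{-1/d})$, so the same Bernstein bound over the $O(N^{1-1/d})$ split groups yields an $O(N^{-\frac{d+2}{2d}}\sqrt{\log N})$ additive error for a fixed $\theta'$; a $\poly(1/N)$-net on $\bS^{d}$ and Lipschitzness of $\theta'\mapsto\max\{0,\inner{\theta',y}\}$ then upgrade this to all $\theta'$ simultaneously. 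Iterating the reduction exactly as in the passage before Theorem~\ref{thm:coreset_additive} produces, in $O(n)$ space, a weighted coreset of size $\tilde O(\eps^{-\frac{2d}{d+3}})$ --- this being $\eps^{-\frac{2((d+1)-1)}{(d+1)+2}}$ --- with additive error $O(\eps)$.

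The final step brings the space down to $\tilde O(\eps^{-\frac{2d}{d+3}})$ words in a single pass. Since every summand $\max\{0,\,b-\theta^\top x_i\}$ lies in $[0,O(1)]$, Hoeffding's inequality shows that a uniform sample of $\tilde O(\eps^{-2})$ indices estimates $F(\theta,b)$ to additive error $\eps$ for any fixed $(\theta,b)$; a net over the $O(1)$-ball in $\R^{d+1}$ (of size $\eps^{-O(d)}$) together with a union bound makes this hold for all $(\theta,b)$ at once. Reservoir-sampling $\tilde O(\eps^{-2})$ rows in one pass thus reduces the effective stream length to $N_0=\tilde O(\eps^{-2})$, and we feed those rows to the merge-and-reduce scheme of Algorithm~\ref{alg:1} with the hinge-loss coreset subroutine above: it keeps only $O(\log N_0)=O(\log(1/\eps))$ blocks, each of size $\tilde O(\eps^{-\frac{2d}{d+3}})$, and each coreset call runs on $\tilde O(\eps^{-\frac{2d}{d+3}}\log(1/\eps))$ rows, so the total space is $\tilde O(\eps^{-\frac{2d}{d+3}})$ words. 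Composing the two additive-$\eps$ guarantees and rescaling $\eps$ completes the proof.

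I expect the main obstacle to be point (ii): checking that the Bernstein argument for split groups goes through unchanged for the one-sided loss --- namely that passing from $|\inner{\theta',y}|$ to $\max\{0,\inner{\theta',y}\}$ only shrinks the per-group range and variance --- and that the weighted subsample of Lemma~\ref{lem:weights}, whose weights are tuned so the \emph{signed} barycenters agree, still gives an unbiased estimator of $\sum_{y\in P_i}\max\{0,\inner{\theta',y}\}$. The latter holds because conditions (iii)--(iv) of Lemma~\ref{lem:weights} force $\mathbb{E}\sum_{y\in T} w_y\,g(y)=\sum_{y\in P_i} w_y\,g(y)$ for \emph{every} function $g$, in particular for $g(y)=\max\{0,\inner{\theta',y}\}$; everything else parallels the unsigned $p=1$ case already treated in the paper.
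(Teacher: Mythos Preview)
Your proposal is correct and follows essentially the same approach as the paper: lift to $\R^{d+1}$, verify that the coreset construction of Lemma~\ref{lem:coreset} (and hence Lemma~\ref{lem:coreset_space}) goes through for the hinge loss because non-split groups are handled exactly and split groups satisfy $\max\{0,\inner{\theta',y}\}\le|\inner{\theta',y}|$, then reduce the effective stream length to $\tilde O(\eps^{-2})$ via uniform sampling (Hoeffding) before invoking Algorithm~\ref{alg:1}. Your explicit justification of unbiasedness from condition~(iv) of Lemma~\ref{lem:weights} is a nice addition that the paper leaves implicit.
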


\paragraph{Lower Bound.} We now turn to lower bounds for the point estimation problem. Suppose that $X = \{x_i\}$ is the point set given by the data stream. Let $-X = \{-x: x\in X\}$ and observe that
\[
\underset{X}{F(\theta, 0)} + \underset{-X}{F(\theta, 0)} = \frac{1}{n}\sum_i \left(\max\{0, \theta^\top x_i\} + \max\{0, -\theta^\top x_i\}\right) = \frac{1}{n} \sum_i \left|\theta^\top x_i\right|.
\]
Thus, if we can solve the $d$-dimensional point estimation problem for SVM, we can solve the $d$-dimensional $\ell_1$-subspace sketch problem, whence a lower bound of $\Omega(\eps^{-\frac{2(d - 1)}{d + 2}})$ bits follows from Theorem~\ref{thm:lower_bound}. To obtain a tight bound, consider the affine $\ell_1$-subspace sketch problem in Section~\ref{sec:affine_subspace_sketch}. Specifically, we have
\[
\underset{X}{F(\theta, b)} + \underset{-X}{F(\theta, -b)} = \frac{1}{n}\sum_i \left(\max\{0, b - \theta^\top x_i\} + \max\{0, -b +\theta^\top x_i\}\right) = \frac{1}{n} \sum_i \left|b - \theta^\top x_i\right| \;,
\]
which implies that if we can solve the $d$-dimensional point estimation problem for SVM, we would be able to solve the $d$-dimensional affine $\ell_1$-subspace sketch problem. Our theorem follows immediately from Theorem~\ref{thm:lower_bound_affine}.

\begin{theorem}
\label{thm:lower_bound_svm}
Suppose that $d$ is constant. Any data structure that solves the $d$-dimensional point estimation problem for SVM requires $\Omega(\eps^{-\frac{2d}{d + 3}})$ bits of space.
\end{theorem}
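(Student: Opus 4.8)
The plan is to obtain the lower bound by a black-box reduction from the $d$-dimensional affine $\ell_1$-subspace sketch problem, whose $\Omega(\eps^{-\frac{2d}{d+3}})$-bit lower bound is exactly Theorem~\ref{thm:lower_bound_affine} instantiated at $p=1$ (where $\frac{2d}{d+2p+1}=\frac{2d}{d+3}$). The engine of the reduction is the elementary pointwise identity
\[
\max\{0, c-t\} + \max\{0, t-c\} = |c-t|,
\]
applied with $t = \theta^\top x_i$, which yields $\frac{1}{n}\sum_i \max\{0, c-\theta^\top x_i\} + \frac{1}{n}\sum_i \max\{0, \theta^\top x_i - c\} = \frac{1}{n}\sum_i |\inner{x_i,\theta} - c|$. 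Thus if we can evaluate the (simplified) SVM objective at a couple of well-chosen points, we can read off $\frac{1}{n}\sum_i|\inner{x_i,\theta}-c|$, i.e.\ solve the affine $\ell_1$-subspace sketch problem.

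In detail, I would start from the hard instance underlying Theorem~\ref{thm:lower_bound_affine} with $p=1$: a point set $\{x_i\}_{i=1}^n$ with $\norm{x_i}_2 = O(1)$ for which distinguishing the communication alternatives forces $\eps$-additive accuracy of $\frac{1}{n}\sum_i |\inner{x_i,\theta}-c|$, simultaneously over the relevant queries $(\theta,c)$, all of which have $\norm{(\theta,c)}_2 = O(1)$. I would then feed this point set, with all labels $y_i = +1$, to a purported SVM point-estimation data structure, and separately feed the reflected point set $\{-x_i\}_{i=1}^n$, again with all labels $+1$; both are legitimate SVM instances, so the total space is twice that of a single structure. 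As recalled before the theorem statement, the regularization term $\frac{\lambda}{2}\norm{(\theta,b)}_2^2$ depends only on the query and can be subtracted exactly, and substituting $b\mapsto 1-b$ turns the hinge loss (with $y_i=+1$) into $\frac{1}{n}\sum_i\max\{0, b-\theta^\top x_i\}$. Hence querying the first structure at $(\theta, 1-c)$ and the second at $(\theta, 1+c)$ — both of norm $O(1)$ whenever $\norm{(\theta,c)}_2 = O(1)$ and $\lambda = O(1)$, $\norm{x_i}_2=O(1)$ — and subtracting the two known regularization terms gives $\eps$-additive estimates of $\frac{1}{n}\sum_i \max\{0, c-\theta^\top x_i\}$ and of $\frac{1}{n}\sum_i \max\{0,\theta^\top x_i - c\}$ respectively. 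Adding them and applying the identity above produces a $2\eps$-additive estimate of $\frac1n\sum_i|\inner{x_i,\theta}-c|$, for all queries $(\theta,c)$ in the required range.

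It then remains to note that this is precisely a data structure for the normalized, additive-error affine $\ell_1$-subspace sketch problem, and that the lower bound of Theorem~\ref{thm:lower_bound_affine} applies to exactly this notion: inspecting its proof, which runs through the communication argument of Section~\ref{sec:lower_bound}, the hard instance has $\norm{A_i}_2, \norm{(x,b)}_2 = \Theta(1)$, so $\sum_i |\inner{A_i,x}-b| \le n$ on all relevant queries and the $(1\pm\eps)$-multiplicative guarantee is invoked there only in the weaker form of an additive-$(n\eps)$ guarantee. Consequently $\Omega(\eps^{-\frac{2d}{d+3}})$ bits are needed for the additive problem, and hence (after rescaling $\eps$ by the constant $2$ to absorb the factor-$2$ loss and dividing the twice-counted space by $2$) for SVM point estimation, giving Theorem~\ref{thm:lower_bound_svm}. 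The only part that needs care is the bookkeeping: tracking the affine-vs-$(d{+}1)$-dimensional correspondence from Section~\ref{sec:affine_subspace_sketch} and checking that the shifts $b\mapsto 1-b$ and the query $(\theta, 1+c)$ on the reflected instance keep all SVM parameters within the $O(1)$-norm regime demanded by the problem definition; everything else is routine.
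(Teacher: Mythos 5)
Your proposal is correct and follows essentially the same route as the paper: reduce from the affine $\ell_1$-subspace sketch lower bound (Theorem~\ref{thm:lower_bound_affine} with $p=1$) by instantiating one SVM instance on $X$ and one on $-X$, then exploiting $\max\{0,c-t\}+\max\{0,t-c\}=|c-t|$ to recover $\frac{1}{n}\sum_i|\theta^\top x_i - c|$. You spell out a few bookkeeping steps the paper leaves implicit (subtracting the regularizer, the $b\mapsto 1-b$ shift, doubling the data structures, and the observation that the communication argument underlying Theorem~\ref{thm:lower_bound_affine} only requires additive accuracy $O(n\eps)$), but the core reduction is the same.
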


We remark that our analysis shows tight space complexity via a black box reduction to the $\ell_1$ subspace sketch problem, which is much simpler than the analysis in previous work~\cite{ABL+20}.

\bibliography{reference}
\bibliographystyle{alpha}

\appendix
\section{$\ell_p$-Subspace Sketch Upper Bound for Non-integer $p$}
\label{sec:non-integer}

In this section, we consider the $\ell_p$-subspace sketch problem for non-integers $p$. We shall show that for every constant $d$ and non-integer constant $p$, it is still possible to obtain a sketch of size $o(\eps^{-2})$.

\subsection{Subspace Embedding for $d\geq 5$}\label{sec:d>=5}

In Section~\ref{sec:upper_bound}, we adapted a proof of Matousek~\cite{Mat96}, giving a bound that contains logarithmic factors. Matousek's work contains a second result, which shows a tight $O(\eps^{-2(d-1)/(d+2)})$ bound without logarithmic factors for $d\geq 5$ and $p=1$. In fact, the same bound also holds for $p>1$, which we shall demonstrate below.

The starting point and the main change is the following generalization of a proposition in~\cite{Mat96} to $p > 1$.
\begin{lemma}[Generalization of {\cite[Proposition 9]{Mat96}}]
Let $d\geq 3$ and $P\subseteq \bS^{d-1}$ be a point set of size $N$, where $N\geq N_0$ for some large constant $N_0$. There exist a subset $P^\ast\subseteq P$ of $N^\ast \geq N/8$ points with $N^\ast$ even, and a subset $Q\subseteq P^\ast$ of size $N^\ast/2$ such that for all $x\in \bS^{d-1}$
\[
\abs{ \sum_{y\in P^\ast} \abs{\inner{x,y}}^p - 2 \sum_{y\in P^\ast} \abs{\inner{x,y}}^p } = O(N^{\frac{1}{2} - \frac{3}{2(d-1)}})
\]
\end{lemma}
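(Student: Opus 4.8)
The plan is to run Matou\v{s}ek's second, $\log$-free construction (the one behind \cite[Proposition~9]{Mat96}), replacing the single barycenter he preserves inside each group by a higher-order moment tensor so that the argument survives the non-linearity of $|t|^p$. Concretely: apply the partition lemma (Lemma~\ref{lem:partition}) to $P$ with a large enough constant group size $s=s(d,p)$, obtaining disjoint groups $P_1,\dots,P_t$ of diameter $\eta=O(N^{-1/(d-1)})$ covering at least half of $P$; let $P^\ast=\bigcup_i P_i$ (drop one point to make $N^\ast=|P^\ast|$ even). Fix a constant $m=m(d,p)>p$, lift each $y$ to $\Psi(y)=(y,y^{\otimes 2},\dots,y^{\otimes m})$, a vector in a space of constant dimension $D$, and inside each group apply Lemma~\ref{lem:weights} to $\{\Psi(y):y\in P_i\}$ to obtain a random weighted sub-multiset $Q_i\subseteq P_i$ with $|Q_i|\le s/2$ whose weights $w'$ satisfy $\sum_{y\in Q_i}w'_y\Psi(y)=\sum_{y\in P_i}w_y\Psi(y)$ (exact preservation of all moments of order $\le m$) and $\mathbb{E}[\sum_{y\in Q_i}w'_y|\inner{x,y}|^p]=\sum_{y\in P_i}w_y|\inner{x,y}|^p$ for every $x$. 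Output $Q=\bigcup_i Q_i$ with the weights $w'$ (all equal to $2$ in the idealized, unweighted formulation); the size claims $N^\ast\ge N/8$ and $|Q|\le N^\ast/2$ then fall out of the two lemmas.

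For a fixed $x\in\bS^{d-1}$ write $\Delta(x)=\sum_i X_i(x)$ with $X_i(x)=\sum_{y\in Q_i}w'_y|\inner{x,y}|^p-\sum_{y\in P_i}w_y|\inner{x,y}|^p$, a sum of independent mean-zero variables, and classify group $i$ by its scale $\tau_i=\max_{y\in P_i}|\inner{x,y}|$. If $\tau_i\ge\delta_0$ (a fixed constant), all $\inner{x,y}$ with $y\in P_i$ share a sign and lie in an interval of length $\eta$ bounded away from $0$; Taylor-expanding $t\mapsto|t|^p$ to order $m$ and using the exact moment preservation, the polynomial part of $X_i(x)$ cancels and the Lagrange remainder leaves $|X_i(x)|=O(\tau_i^{p-m-1}\eta^{m+1})=O(\eta^{m+1})$, so the signed sum over the $\le N/s$ such groups is $O(\sqrt N\,\eta^{m+1})=O(N^{1/2-(m+1)/(d-1)})\ll N^{1/2-3/(2(d-1))}$. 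For groups with $\tau_i\in[2^{-j}\delta_0,2^{1-j}\delta_0)$, $1\le j\le j_{\max}=O(\log N)$ (where $2^{-j}\delta_0\asymp\eta$), use that — by the construction of the partition — only $n_j=O(2^{-j}\delta_0 N)$ groups meet the slab $\{|\inner{x,\cdot}|\le 2^{1-j}\delta_0\}$ (the groups lie in $O(N)$ disjoint pieces of measure $\Theta(1/N)$, and the slab has measure $\Theta(2^{-j}\delta_0)$; cf.\ Lemma~\ref{lem:spherical cap}); on such a group $|X_i(x)|=O((2^{-j}\delta_0)^{p-m-1}\eta^{m+1})$, or the crude $O(\eta^p)$ at the finest level, which already contains the $O(N^{1-1/(d-1)})$ genuinely cut groups of Lemma~\ref{lem:partition}(i). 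Hence the level-$j$ signed sum has standard deviation $\sqrt{n_j}\,(2^{-j}\delta_0)^{p-m-1}\eta^{m+1}$, which equals $2^{\,j(m+1/2-p)}$ times $O(N^{1/2-(1+2p)/(2(d-1))})$ at $j=j_{\max}$; since $m>p-\tfrac12$ this grows geometrically in $j$, so the sum over $j$ is $O(N^{1/2-(1+2p)/(2(d-1))})\le O(N^{1/2-3/(2(d-1))})$ for $p\ge1$, and Bernstein gives $|\Delta(x)|=O(N^{1/2-3/(2(d-1))})$ with probability $1-N^{-\Omega(1)}$.

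The remaining, and main, difficulty is upgrading this to a bound that holds simultaneously for all $x$ without paying the $\sqrt{\log N}$ that a single net would cost — that loss is exactly the gap between this statement and Lemma~\ref{lem:coreset}. Following Matou\v{s}ek's $\log$-free analysis, I would discretize scale by scale: the level-$j$ error map $x\mapsto\sum_{j\text{-level }i}\sigma_i X_i(x)$ is supported on a union of slabs of width $O(2^{-j}\delta_0)$ and is $O(1)$-Lipschitz there, so it needs only a net of resolution $\asymp 2^{-j}\delta_0$, i.e.\ $(2^j)^{O(d)}$ points, and the resulting $\sqrt j$ factor at level $j$ is reabsorbed either by the geometric gap in the variances or, more cleanly, by invoking Banaszczyk's discrepancy theorem on the unit-normalized level-$j$ error vectors, yielding a $\log$-free $O(N^{1/2-3/(2(d-1))})$ overall. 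Making this multiscale net argument genuinely $\log$-free — and checking that the induced net sizes and level sums behave, which is where the dimension condition $d\ge 3$ (and, once the lemma is iterated, $d\ge 5$) enters — is the crux; everything before it is a routine re-run of the Lemma~\ref{lem:coreset} argument with higher moments in place of the barycenter, together with the elementary fact that the order-$m$ Taylor remainder of $|t|^p$ is $O(|t|^{p-m-1})$ uniformly for $|t|$ bounded below.
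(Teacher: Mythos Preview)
Your proposal takes a genuinely different route from the paper, and the route has a real gap.

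\textbf{What the paper does.} The paper does not touch the partition lemma or higher moments at all. It reproduces Matou\v{s}ek's second construction verbatim: take dyadic nets $\mathcal N_1,\dots,\mathcal N_k$ on $\bS^{d-1}$ (with $2^k\sim N^2$) and write the telescoping decomposition
\[
|\langle x,y\rangle|^p=\sum_{i=1}^k \phi_{i,q_i}(y)+r_x(y),\qquad \phi_{i,q}(y)=|\langle q,y\rangle|^p-|\langle \pi_{i-1}(q),y\rangle|^p .
\]
The entire proof consists of checking, for general $p\ge 1$, that each $\phi_{i,q}$ is $O(2^{-i})$-bounded, is $O(2^{-i})$-Lipschitz on each of the two half-sphere regions $L_{i,q}^{\pm}$, and that the remainder $r_x$ is $O(N^{-2})$. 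All three follow from the elementary inequality $\bigl||a|^p-|b|^p\bigr|\le p|a-b|$ for $|a|,|b|\le 1$ (plus a short derivative computation for the Lipschitz claim). After that the paper says ``the rest of Matou\v{s}ek's original proof goes through,'' meaning Matou\v{s}ek's discrepancy-based halving and chaining are applied to the \emph{fixed} family $\{\phi_{i,q}\}$ exactly as in the $p=1$ case. Note that the resulting bound is $p$-independent; the paper explicitly remarks that this second method cannot exploit the extra smoothness of $|t|^p$.

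\textbf{Where your plan departs and where it breaks.} You are running Matou\v{s}ek's \emph{first} construction (the one behind Lemma~\ref{lem:coreset}) with higher-moment preservation, and then trying to remove the $\sqrt{\log N}$ by an ad hoc multiscale argument on the error. Two issues make this not go through as written.
\begin{enumerate}
\item Your level-$j$ classification $\tau_i(x)\in[2^{-j}\delta_0,2^{1-j}\delta_0)$ depends on the query $x$. Hence the ``level-$j$ error map'' $x\mapsto\sum_{j\text{-level }i}X_i(x)$ is not a fixed function of $x$: as $x$ moves, groups migrate between levels, and the map is not even continuous, let alone $O(2^{-j})$-Lipschitz. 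This is exactly what the paper's telescoping decomposition avoids: there the scale-$i$ piece is $\sum_{y}\phi_{i,\pi_i(x)}(y)$, a function drawn from a \emph{fixed} finite family indexed by $q\in\mathcal N_i$, and the Lipschitz structure lives in $\phi_{i,q}$ itself. Without that structure, neither a multiscale net nor Banaszczyk's vector-balancing theorem applies in the way you sketch; Banaszczyk needs a fixed finite collection of vectors to balance, not an $x$-dependent one.
\item Lemma~\ref{lem:weights} produces a \emph{weighted} sub-multiset $Q_i$; the weights are not all equal and there is no mechanism in your proposal for converting them to the unweighted subset $Q\subset P^\ast$ of size exactly $N^\ast/2$ that the statement asks for. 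Matou\v{s}ek's log-free construction uses a different halving step (a discrepancy-theoretic one, not Lemma~\ref{lem:weights}), and that step is what the paper imports unchanged.
\end{enumerate}
A smaller point: your justification of $n_j=O(2^{-j}\delta_0 N)$ via ``groups have measure $\Theta(1/N)$'' is not valid (the groups are finite point sets). The bound can be salvaged by slicing the slab into $O(2^{-j}\delta_0/\eta)$ parallel hyperplanes and applying property~(i) of Lemma~\ref{lem:partition} to each, but this requires the crossing bound for hyperplanes not through the origin, which is not what Lemma~\ref{lem:partition} states.

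In short, the decisive idea you are missing is that the log-free argument works by decomposing the \emph{function} $x\mapsto|\langle x,y\rangle|^p$ across dyadic scales of $x$, not by decomposing the point set or the error; once you make that switch, the generalization to $p>1$ is a two-line Lipschitz calculation.
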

\begin{proof}
We only highlight the changes from the original proof in~\cite{Mat96}. Take $k$ such that $2^k\sim N^2$ and for each $i=1,\dots, k$, let $\mathcal{N}_i$ be a $2^{-i}$-net on $\bS^{d-1}$ and $\pi_i(x):\bS^{d-1}\to \mathcal{N}_i$ be the induced projection map. Then define $\phi_{i,q}\in\bS^{d-1}\to \R$ for $q \in \mathcal{N}_i$ as
\[
\phi_{i,q}(y) = \begin{cases}
				\abs{\inner{q,y}}^p, & \text{if } i =1;\\
				\abs{\inner{q,y}}^p - \abs{\inner{\pi_{i-1}(q),y}}^p, & \text{if } i > 1.
			   \end{cases}
\]
The claim is that the functions $\phi_{i,q}$ satisfy the following three properties:
\begin{enumerate}[label=(\roman*)]
	\item $|\phi_{i,q}(y)| = O(2^{-i})$ for all $y\in \bS^{d-1}$;
	\item Define 
	\begin{gather*}
		L_{i,q}^+ = \{y\in \bS^{d-1}: \inner{q,y}\geq 0 \text{ and }\inner{\pi_{i-1}(q),y}\geq 0\}\\
		L_{i,q}^- = \{y\in \bS^{d-1}: \inner{q,y}\leq 0 \text{ and }\inner{\pi_{i-1}(q),y}\leq 0\}
	\end{gather*}
	Then $\phi_{i,q}$ is $O(2^{-i})$-Lipschitz on $L_{i,q}^+$ and $L_{i,q}^-$.
	\item The expansion
		\[
			|\langle x,y\rangle|^p = \sum_{i=1}^k \phi_{i,q_i}(y) + r_x(y),
		\]
		where $q_k=\pi_k(x)$ and $q_{i-1}=\pi_{i-1}(q_i)$, has the remainder term $|r_x(y)| = O(N^{-2})$ for all $y\in \bS^{d-1}$.
\end{enumerate}
The three properties are easy to verify when $p=1$. Now we shall verify them for a general $p>1$. 
For notational convenience, let $u = \pi_{i-1}(q) - q$. Then $\norm{u}_2 \leq 2^{-i}$.
Property (i) is easy to verify as $\abs{\phi_{1,q}(y)} \leq 1$ and 
\[
\abs{\phi_{i,q}(y)} = \abs{\abs{\inner{q,y}}^p - \abs{\inner{q,y} + \inner{u,y}}^p} \leq p \abs{\inner{u,y}} \leq p 2^{-i},
\]
where we used the fact that $1-x^p\leq p(1-x)$ when $x\in(0,1)$ and thus $\abs{|a|^p-|b|^p}\leq p\abs{a-b}$ when $\abs{a},\abs{b}\leq 1$. Property (iii) is also easy to verify, as 
\[
\abs{r_x(y)} = \abs{\abs{\inner{x,y}}^p - \abs{\inner{\pi_k(x), y}}^p} \leq p \abs{\inner{x-\pi_k(x),y}} \leq p \norm{x - \pi_k(x)}_2 \leq p 2^{-k} = O(N^{-2}).
\]
Next, we verify Property (ii). Suppose that $y,z\in L_{i,q}^+$. We first consider
\[
\sup_{\substack{y,z\in L_{i,q}^+\\ \inner{u,y-z}\neq 0}}\abs{\frac{\phi_{i,q}(y) - \phi_{i,q}(z)}{\inner{u,y}-\inner{u,z}}} = \sup_{\substack{y,z\in L_{i,q}^+\\ \inner{u,y-z}\neq 0}}\abs{ \frac{ \inner{q,y}^p - (\inner{q,y} + \inner{u,y})^p - \inner{q,z}^p + (\inner{q,z} + \inner{u,z})^p }{ \inner{u,y}-\inner{u,z} } }
\]
By relating the expression to the definition of derivatives, it is easy to see that this supremum is upper bounded by a constant $L$ (which depends on $p$ only), thus
\[
\abs{\phi_{i,q}(y) - \phi_{i,q}(z)} \leq L \abs{\inner{u,y-z}} \leq 2L \norm{u}_2 = O(2^{-i}).
\]
By continuity, this bound also holds for $\inner{u,y} = \inner{u,z}$. Therefore, we have verified that $\phi_{i,q}$ is $O(2^{-i})$-Lipschitz on $L_{i,q}^+$. A similar argument works for $L_{i,q}^-$ and thus we have verified (ii).

The rest of Matousek's original proof goes through, establishing the lemma.
\end{proof}

Next, we repeat Matousek's argument. Repeatedly applying the preceding lemma yields $P_1^\ast$ and $Q_1$ from $P$, $P_2^\ast$ and $Q_2$ from $P\setminus P_1^\ast$, $P_3^\ast$ and $Q_3$ from $P\setminus (P_1^\ast\cup P_2^\ast)$, and so forth. Let $Q$ be the union of these $Q_i$'s. This shows that for any set $P$ of $N$ points, one can find a subset $P'\subset P$ of size $N/2$ such that
\[
\abs{\frac{1}{N} \sum_{y\in P} \abs{\inner{x,y}}^p - \frac{2}{N}\sum_{y\in Q} \abs{\inner{x,y}}^p } = O(N^{-\frac{d+2}{2(d-1)}}),\quad \forall x\in\bS^{d-1}.
\]
An iterative application of the step above leads to the final bound of $O(\eps^{-2(d-1)/(d+2)})$, where we can repeat a point several times to accommodate different weights. Formally, we have

\begin{theorem}
\label{thm:d>=5}
Suppose that $p \geq 1$ and $d\geq 5$ are constants, $A \in \R^{n \times d}$ and $w\in \simplex{n-1}$ is the associated weight vector. There exists a polynomial time algorithm which outputs a subset $B$ of $m = O(\eps^{-\frac{2(d - 1)}{d + 2}})$ rows of $A$ and a weight vector $v\in \simplex{m-1}$ such that with high probability it holds for every $x \in \bS^{d - 1}$,
\[
\left|\sum_i v_i |\inner{B_i, x}|^p - \sum_i w_i |\inner{A_i, x}|^p \right| = O(\eps) \cdot \left(\sum_i w_i |\inner{A_i, x}|^p \right) .
\] 
\end{theorem}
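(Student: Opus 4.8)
The plan is to follow Matousek's $d\geq 5$ argument, using the generalization of \cite[Proposition~9]{Mat96} proved just above, and to wrap it in the same John-ellipsoid preprocessing as in Section~\ref{sec:upper_bound} so as to turn an additive guarantee for points on $\bS^{d-1}$ into a multiplicative one for an arbitrary weighted matrix. We may assume $n>m$, since otherwise we simply output $(A,w)$. \textbf{Step 1 (reduction to a weighted additive guarantee on $\bS^{d-1}$).} Write $\hat A=\diag(w^{1/p})A$, so $\sum_i w_i|\inner{A_i,x}|^p=\norm{\hat Ax}_p^p$, and apply the John-ellipsoid transformation of Section~\ref{sec:upper_bound} to $\hat A$: there is an invertible $T$ with $B_{\ell_2}^d\subseteq Z(\hat AT)\subseteq\sqrt d\,B_{\ell_2}^d$, and with $\hat A'=\hat AT$, $\hat{A}''_i=\hat A'_i/\norm{\hat A'_i}_2\in\bS^{d-1}$ and $\omega_i=\norm{\hat A'_i}_2^p$ one gets $\norm{\hat A'_i}_2=O(1)$ and $\norm{\hat A'x}_p^p=\Omega(\sum_i\omega_i)$ for all $x\in\bS^{d-1}$. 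Exactly as in that section, it then suffices to produce a $v$ supported on at most $m$ of the $\hat{A}''_i$ with $\bigl|\sum_i v_i|\inner{\hat{A}''_i,x}|^p-\sum_i\omega_i|\inner{\hat{A}''_i,x}|^p\bigr|=O(\eps)$ for all $x$ (after normalizing $\omega$ to sum to $1$); undoing $T$ and rescaling $v$ gives the multiplicative bound and exhibits $B$ as a row subset of $A$. Nothing here uses integrality of $p$.

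\textbf{Step 2 (weights via copies).} Take $N=\lceil n/\eps\rceil$ (so $N>m$) and round each $\omega_i$ to $c_i/N$ with $c_i\in\Z_{\geq0}$, $\sum_i c_i=N$ and $|c_i/N-\omega_i|\leq 1/N$ by largest-remainder rounding. Let $P$ be the multiset on $\bS^{d-1}$ with $c_i$ copies of $\hat{A}''_i$, each weighted $1/N$; then $\bigl|\tfrac1N\sum_{y\in P}|\inner{y,x}|^p-\sum_i\omega_i|\inner{\hat{A}''_i,x}|^p\bigr|\leq n/N\leq\eps$ for all $x$. It now suffices to pass from the uniformly weighted size-$N$ multiset $P$ to a uniformly weighted sub-multiset $P'\subseteq P$ of size $m$ with error $O(\eps)$; merging equal points of $P'$ and adding their multiplicities then produces the required $v\in\simplex{|B|-1}$, with $m$ determined independently of $N$.

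\textbf{Step 3 (the halving iteration).} Apply the generalization of \cite[Proposition~9]{Mat96} repeatedly, as in the paragraph preceding the theorem. One round on a size-$N$ multiset extracts $P_1^\ast$ (and $Q_1\subseteq P_1^\ast$) from $P$, then $P_2^\ast$ from $P\setminus P_1^\ast$, and so on; since each $P_j^\ast$ captures at least a $1/8$-fraction of what remains, $O(\log N)$ sub-applications exhaust $P$ and $Q:=\bigcup_j Q_j$ has size $N/2$. Summing the per-application errors $O\bigl((N_j^\ast)^{1/2-3/(2(d-1))}\bigr)$ over the geometrically shrinking $N_j^\ast$ and dividing by $N$ yields a normalized round-error $O(N^{-(d+2)/(2(d-1))})$; here $d\geq5$ enters, so that the exponent $1/2-3/(2(d-1))$ lies in $(0,\tfrac12)$ and that sum is a convergent geometric series (for $d=4$ it would be $\Theta(\log N)$ terms of size $\Theta(1)$, costing a log). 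Over $O(\log(N/m))$ rounds the size halves each time and the total error is a geometric series dominated by its last term, $O(m^{-(d+2)/(2(d-1))})$; taking $m=\Theta(\eps^{-2(d-1)/(d+2)})$ makes this $O(\eps)$, and we stop once the remaining size reaches $\max(m,N_0)$, which equals $m$ for small $\eps$. Each invocation of the proposition holds with high probability over its internal randomness, so a union bound over the $O(\log^2(n/\eps))$ invocations keeps the overall failure probability $o(1)$; the running time is $\poly(N)=\poly(n/\eps)$.

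\textbf{Main obstacle.} The only genuinely new ingredient — that this whole pipeline survives for non-integer $p$ — is already isolated in the generalization of \cite[Proposition~9]{Mat96}, whose proof replaces the tensor identity by the elementary bound $\bigl||a|^p-|b|^p\bigr|\leq p|a-b|$ for $|a|,|b|\leq1$ together with a Lipschitz estimate for $x\mapsto\langle q,x\rangle^p-(\langle q,x\rangle+\langle u,x\rangle)^p$ on each half-space where the two arguments share a sign. Given that lemma, what remains is bookkeeping: absorbing $w^{1/p}$ into $A$ so the John-ellipsoid reduction applies to a general weight vector, and checking that the ``copies'' discretization does not inflate the final size (it does not, since $m$ is pinned down by the error-accumulation estimate and not by $N$). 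The one place needing real care rather than routine algebra is the innermost error accounting — verifying that $\sum_j (N_j^\ast)^{1/2-3/(2(d-1))}$ carries no extra logarithmic factor exactly when $d\geq5$.
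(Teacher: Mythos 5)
Your proposal is correct and follows the same route as the paper: iterate the generalized version of Matousek's Proposition~9 to halve the point set, account for the geometrically decaying per-round errors (using $d\geq 5$ so that $1/2 - 3/(2(d-1))>0$), discretize the weights by copies, and use the John-ellipsoid preprocessing from Section~\ref{sec:upper_bound} to convert the additive guarantee on $\bS^{d-1}$ into a multiplicative one. You have merely spelled out the bookkeeping (the weight-rounding step and the geometric-series error accumulation) that the paper compresses into ``an iterative application of the step above ... where we can repeat a point several times to accommodate different weights.''
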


\subsection{Algorithm for all $d\geq 2$ and Non-Integer $p$} \label{sec:general_tensor_UB}
First, consider the case of $1 < p < 2$. It is known that $\ell_p^n$ $(1+\eps)$-embeds into $\ell_1^N$ for some $N = C(p) n/\eps$ (see, e.g.,~\cite{FG11}), which means for every matrix $A \in \R^{n \times d}$, there is a matrix $T \in \R^{N \times n}$ such that $\norm{TAx}_1 = (1 \pm \eps)\norm{Ax}_p$ for all $x \in \R^{d}$. Thus, we can apply our existing upper bound to $TA$ under the $\ell_1$ norm, yielding an  $\tilde{O}\big(\eps^{-\frac{2(d - 1)}{d + 2}}\big)$ upper bound. We remark that the matrix $T$ constructed in~\cite{FG11} is oblivious and has independent columns, so we can write $(TA)_{\cdot, j} = \sum_{i=1}^n T_{\cdot,i} A_{i,j}$ and compute $T_{\cdot,i} A_{i,j}$ for all $j=1,\dots,d$ sequentially for each $i=1,\dots,n$. In this manner, the entire algorithm takes $O(N) = O(n/\eps)$ words of space. We remark that such an embedding-based approach does not seem amenable to the streaming setting because it would require storing the entire $A$ to compute $(TA)_{i,\cdot} = T_{i,\cdot} A$.

Now consider $p > 2$. Let $q = \lceil p / 2 \rceil$, then for every $x \in \R^d$, 
\[
\abs{\inner{A_i, x}}^p = \left(\abs{\inner{A_i, x}}^{q}\right)^{p / q} = \abs{\inner{A_i^{\otimes q}, x^{\otimes q}}}^{p / q} .
\]
Hence, the problem can be reduced to the $d^q$-dimensional $\ell_r$-subspace sketch problem with $1 < r = p / q < 2$, which leads to an  $\tilde{O}\big(\eps^{-\frac{2(d^q - 1)}{d^q + 2}}\big)$ upper bound.
Again, the algorithm uses $O(n/\eps)$ words of space and is not amenable to the streaming setting. We remark that we cannot expect a better upper bound by reducing an $\ell_p$-subspace sketch problem to an $\ell_r$-subspace sketch problem for an integer $r > 2$ because $\ell_p^2$ does not $(1+\eps)$-embed into $\ell_r^N$ when $p>r>2$~\cite{Dor76}. The following is a short proof we include for completeness. Suppose that $\ell_p^2$ does $(1+\eps)$-embed into $\ell_r^N$. Then there exist $u,v\in \ell_r^N$ such that $\frac{1}{1+\eps}(\abs{a}^p+\abs{b}^p)^{1/p} \leq \norm{au+bv}_r \leq (1+\eps)(\abs{a}^p+\abs{b}^p)^{1/p}$ for all $a,b\in\R$. This means that $(1+\eps)^r 2^{r/p}\geq \frac{1}{2}(\norm{u+v}_r^r + \norm{u-v}_r^r) \geq \norm{u}_r^r + \norm{v}_r^r \geq \frac{2}{1+\eps}$, which is a contradiction for all $\eps$ small enough.

\subsection{Streaming Algorithm for $p > d-1$} \label{sec:p>d-1}
Next, we present a streaming algorithm when $p > d - 1$, using approximation theory on the unit sphere. Given a function $h:\bS^{d-1}\to \R$, consider its series expansion in spherical harmonics
 \[
 h(x) = \sum_{k=0}^\infty \sum_{j=1}^{M(d,k)} \inner{ h, Y_{k,j} } Y_{k,j}(x)
 \]
 and the truncation of this series of order at most $K$
 \[
 (Q_K h)(x) = \sum_{k=0}^K \sum_{j=1}^{M(d,k)} \inner{ h, Y_{k,j} } Y_{k,j}(x).
 \]
 It is a well-studied problem in approximation theory on the unit sphere that (see, e.g. \cite[Theorem 2.35]{atkinson12})
 \[
 \norm{h - Q_K h}_\infty \leq (1 + \norm{Q_K}_{C\to C}) E_{K,\infty}(h),
 \]
 where $\norm{Q_K}_{C\to C}$ is the operator norm of $Q_K$ when viewed as an operator from $C(\bS^{d-1})$ to $C(\bS^{d-1})$, and $E_{n,\infty}(f)$ is the minimum error in the $\ell_\infty$ norm of approximating $f$ by polynomials of total degree at most $K$ on $\bS^{d-1}$. It was shown in~\cite{ragozin72} that $\norm{Q_K}_{C\to C} \simeq K^{d/2-1}$ when $d\geq 3$ and it is a classical result in Fourier analysis that $\norm{Q_K}_{C\to C} \simeq \ln K$ when $d=2$.

In our problem, without loss of generality, consider $h(x) = \sum_i w_i \abs{\inner{a_i,x}}^p$ with $a_i \in \bS^{d-1}$ and $\sum_i w_i = O(1)$. Let us first consider the approximation to $f(x) = \abs{\inner{a_i,x}}^p$. It follows from the result of Ragozin~\cite{ragozin71} (see also \cite[Eq. (4.49)]{atkinson12}) that
\begin{equation}\label{eqn:approx_error_infty_norm}
	E_{K,\infty}(f) \leq \frac{C_p}{n^p},
\end{equation}
where $C_p > 0$ is a constant depending only on $p$. Thus,
\[
	E_{K,\infty}(h) \leq \sum_i w_i E_{K,\infty}(f) \leq \frac{C'_p}{n^p}
\]
and
\[
 	\norm{h - Q_K h}_\infty \lesssim_p \begin{cases}
     								\ln K / K^p  & d = 2 \\
								     1 / K^{p-d/2+1} & d\geq 3.
								 \end{cases}
\]
Therefore, we can take
\[
 K \simeq_p \begin{cases}
             (1/\eps)^{1/p} \log^{1/p}(1/\eps) & d = 2\\
             (1/\eps)^{1/(p-d/2+1)} & d \geq 3
         \end{cases}
\]
such that $\norm{h - Q_K h}_\infty\leq \eps$. For our $h(x)$, we have from Funk-Hecke Theorem that
\[
 (Q_K h)(x) = \sum_{k=0}^K \lambda_k \sum_{j=1}^{M(d,k)} Y_{k,j}(x) \left(\sum_i w_i Y_{k,j}(a_i)\right).
\]
Therefore the streaming algorithm needs only to maintain 
$\sum_i w_i Y_{k,j}(a_i)$ for each $k=0,\dots,K$ and $j=0,\dots,M(d,k)$ to output $(Q_K h)(x)$. This is clearly feasible in the data stream setting, as we can calculate for each new point $a_i$ the value of $Y_{k,j}(a_i)$ for all pairs $(k,j)$ with $k\leq K$. The number of such values to maintain is
\[
O\left(\sum_{k=0}^K M(d,k) \right) = O\left(\sum_{k=0}^K k^{d-2} \right) = O(K^{d-1}).
\]

However, this approach suffers from a precision problem when $d\geq 3$. Using the explicit expression of $Y_{k,j}$ in terms of the Gegenbauer polynomials (see, e.g.~\cite[Theorem 1.5.1]{DX13}) would cause the intermediate results to be as small as $1/K^{O(Kd)}$ or as large as $K^{O(Kd)}$, thus requiring $\tilde{O}(K)$ words of space to calculate and store the value of each $Y_{k,j}(x)$. This leads to an overall space of 
\[
	\tilde{O}(K^d) = \tilde{O}(\eps^{-2d/(2p-d+2)})
\]
words when $d\geq 3$, which is $o(\eps^{-2})$ when $p > d - 1$. This precision problem, however, does not exist when $d=2$, since the spherical harmonics degenerate to exactly sines and cosines and the intermediate values can fit in a word. Thus, the streaming algorithm uses $O(K) = \tilde{O}(\eps^{-1/p})$ words of space when $d=2$, which is close to the lower bound of $\Omega(\eps^{-1/(p+1)})$ bits.

\subsection{Discussion on Subspace Embeddings} 
In this subsection, we discuss the $\ell_p$-subspace embedding, a classical problem in the local theory of Banach spaces. Given a $d$-dimensional subspace $X\subset L_p(\bS^{d-1},\sigma_{d-1})$, let $N_p(X,\eps)$ denote the minimum number $N$ such that there is a $d$-dimensional subspace $Y\subset \ell_p^N$ with $d(X, Y) \leq 1 + \eps$. The $\ell_p$-subspace embedding problem asks to find $N_p(d,\eps) = \sup_X N_p(X,\eps)$, where the supremum is taken over all $d$-dimensional subspaces $X\subset L_p(\bS^{d-1})$. 
In this work, we focus on the dependence on $\eps$, assuming that $d$ is a constant and $p$ not an even integer. Note that when $p$ is an even integer, the dependence on $\eps$ does not exist since $X$ can always be embedded into $\ell_p^N$ isometrically. To the best of our knowledge, for constant $d$, all existing results are for $p = 1$, which has the geometric meaning in relation to approximating zonoids with zonotopes. Interested readers can refer to~\cite{handbook:21} for a survey. The case of $p=1$ and constant $d$ has been nearly perfectly resolved, as
\[
\eps^{-2(d-1)/(d+2)}\lesssim_d N_1(d,\eps)\lesssim_d \begin{cases}
                            (\eps^{-2}\log(\eps^{-1}))^{(d-1)/(d+2)} & \text{if }d = 3,4\\
                            \eps^{-2(d-1)/(d+2)} & \text{if }d = 2\text{ or }d \geq 5.
                         \end{cases}   
\]
The lower bound is due to Bourgain et al.~\cite{bourgain89}, the upper bound for $d=3,4$ to Bourgain and Lindenstrauss~\cite{BL88} (their bound for $d=4$ has a worse logarithmic term but is improved by Matousek in~\cite{Mat96}) and the tight upper bound for $d\geq 5$ to Matousek~\cite{Mat96}. The lower bound technique, which is based on spherical harmonics, can be generalized to $p > 1$, yielding that
\begin{equation}\label{eqn:N_p_lower_bound}
    N_p(d,\eps)\gtrsim_{d,p} \eps^{-2(d-1)/(d+2p)}, \quad p\in [1,\infty)\setminus 2\Z
\end{equation}
in a similar (but easier) fashion to what we did in Section~\ref{sec:lower_bound}. Our upper bound in Section~\ref{sec:upper_bound} shows that this bound is tight for odd integers $p$,
\[
    N_p(d,\eps)\lesssim_{d,p} \eps^{-2(d-1)/(d+2p)}, \quad p \text{ is odd integer}.
\]
As demonstrated in Section~\ref{sec:d>=5}, the upper bound for $N_1(d,\eps)$ extends to $p>1$ with the asymptotically identical bound, that is,
\begin{equation}\label{eqn:matousek_upper_bound}
	N_p(d,\eps)\lesssim_{d} \eps^{-2(d-1)/(d+2)}, \quad d \geq 5.
\end{equation}
The independence of $p$ is due to the fact that $|x|^p$ is almost linear around $x=1$. All other techniques for the upper bound critically rely on linearity and do not generalize to non-integers $p>1$. To the best of our knowledge, for $d=2,3,4$, no upper bounds of $o(\eps^{-2})$ have been recorded in the literature. A general upper bound of $\tilde{O}(d^{\max\{1,p/2\}}/\eps^2)$ holds for all values of $d$ and $\eps$, see, e.g.~\cite[Section 15.5]{LT91}.

A special subspace of interest in the embedding problem is $\ell_2^d$, which can be viewed as a subspace of $L_p(\bS^{d-1})$ as $\{x\mapsto \beta_d^{-1/p}\langle y,x\rangle: y\in \ell_2^d\}$ with $\beta_d = \int_{\bS^{d-1}} \abs{\inner{y,u}}^p d\sigma_{d-1}(u)$, a constant independent of $y$. The lower bound~\eqref{eqn:N_p_lower_bound} is in fact a lower bound for $N_p(\ell_2^d,\eps)$,  i.e.,
\begin{equation}\label{eqn:lower_bound_uniform_distribution}
    N_p(\ell_2^d,\eps) \gtrsim_{d,p} \eps^{-2(d-1)/(d+2p)}.
\end{equation}
In this case, the subspace embedding problem asks to find $N$ points $y_1,\dots,y_N\in\bS^{d-1}$ with weights $w_1,\dots,w_N$ such that
\begin{equation}\label{eqn:uniform_distribution_approx}
\abs{ \beta_d - \sum_{i=1}^N w_i \abs{\inner{y_i,x}}^p} \leq \eps, \quad \forall x\in \bS^{d-1},
\end{equation}

To the best of our knowledge, no upper bound of $N_p(\ell_2^d,\eps)$ is known in the literature, except for $p=1$ with the upper bound of $N_1(d,\eps)$. When $d=2$, one can take the points $y_1,\dots,y_N$ to be equidistantly distributed on $\bS^1$ and all weights $w_i = 1/N$. It is not difficult to show that \eqref{eqn:uniform_distribution_approx} holds when $N \gtrsim_p \eps^{-1/(p+1)}$. Combining with the lower bound~\eqref{eqn:lower_bound_uniform_distribution}, we have obtained a tight bound \[
  N_p(\ell_2^2,\eps)\simeq_p \eps^{-1/(p+1)}, \quad p\in [1,\infty)\setminus 2\Z.
\]
For $d\geq 3$, by considering the best polynomial approximation (not necessarily a truncation of the spherical harmonic series), it suffices to find $y_1,\dots,y_N\in \bS^{d-1}$ with weights $w_1,\dots,w_N\geq 0$ such that
\[
\abs{ \int_{\bS^{d-1}} (P_K h_x)(y) d\sigma_{d-1}(y) - \sum_{i=1}^N w_i (P_K h_x)(y_i) }\leq \eps, \quad \forall x\in \bS^{d-1},
\]
where $h_x(y) = \abs{\inner{ x, y }}^p$ is a function on $\bS^{d-1}$ and $P_K h$ is the best degree-$K$ polynomial approximation to $h$. It follows from spherical design theory that there exist such $y_1,\dots,y_N\in\bS^{d-1}$ for $N = O(K^{d-1})$ such that
\[
\int_{\bS^{d-1}} f(y) d\sigma_{d-1}(y) = \frac{1}{N}\sum_{i=1}^N f(y_i) 
\]
for all polynomials $f$ of total degree at most $K$~\cite{BRV13} and thus for all $P_K h_x$. Recalling the error estimate~\eqref{eqn:approx_error_infty_norm}, we can take $K\simeq \eps^{-1/p}$ such that
\[
    \norm{h_x - P_K h_x}_\infty \leq \eps.
\]
for all $x\in\bS^{d-1}$ and thus obtain an upper bound that \[
    N_p(\ell_2^d,\eps)\lesssim \eps^{-\frac{d-1}{p}},
\]
which is $o(\eps^{-2})$ when $p > (d-1)/2$ and is better than Matousek's upper bound \eqref{eqn:matousek_upper_bound} for $d\geq 5$ when $p > d/2 + 1$. 

Using a polynomial approximation seems to have an inherent issue which prevents obtaining a good upper bound for small $p$. The best degree-$K$ approximation error $K^{-p}$ (which is asymptotically tight) does not depend on $d$ while we need a $+d$ term in the denominator of the exponent. 
Also observe that this approach, for $d=2$, gives only an $\eps^{-1/p}$ bound instead of the tight $\eps^{1/(p+1)}$, which is another intrinsic issue with polynomial approximation, as the best approximation error is $K^{-p}$ instead of $K^{-(p+1)}$. The difficulty of approximating $|t|^p$ on $[-1,1]$ is around $t = 0$ (where $t$ corresponds to $\inner{x,y}$), and the proof of the tight upper bound kind of circumvents this issue by considering the regions around the equator $\{y: \inner{x,y} = 0\}$ separately. 

An alternative route towards $\eqref{eqn:uniform_distribution_approx}$ is via geometric discrepancy theory, employed by Linhart~\cite{linhart89}. First, rewrite \eqref{eqn:uniform_distribution_approx} as (letting $w_i=1/N$ for all $i$)
\begin{equation}\label{eqn:uniform_distribution_discrepancy}
    \abs{ \int_0^1 \abs{g^{-1}(z)}^p dz - \frac{1}{N}\sum_i \abs{g^{-1}(z_i^x)}^p } \leq \eps, \quad \forall x \in \bS^{d-1},
\end{equation}
where $g:[-1,1]\to [0,1]$ is defined as $g(t) = \sigma_{d-1}(\{y\in\bS^{d-1}: y_1 \geq t\})$ (area of a spherical cap) and $z_i^x = g(\inner{y_i,x})$. Let $f(z) = \abs{g^{-1}(z)}^p$. The left-hand side of \eqref{eqn:uniform_distribution_discrepancy} equals 
\begin{equation}\label{eqn:uniform_distribution_discrepancy_form}
    \abs{\int_0^1 f'(z) \Delta_{x}(z) dz} 
\end{equation}
where $\Delta_{x}$ is the discrepancy function of the points $\{z_i^x\}$ on $[0,1]$, defined as
\[
\Delta_{x}(z) = \frac{\abs{i: z_i^x \in [0,z]}}{N}-z.
\]
This is the starting point of Koksma's inequality. The discrepancy of $\{z_i^x\}$ on $[0,1]$ translates to the discrepancy of $\{y_i\}$ on spherical caps, for which Beck obtained  asymptotically tight upper bounds (see, e.g. \cite[Theorem 24]{beck:book}):
\[
\norm{\Delta_{x}(z)}_\infty \lesssim_{d} N^{-\frac{d}{2(d-1)}}\sqrt{\log N} , \quad \norm{\Delta_{x}(z)}_2\lesssim_{d} N^{-\frac{d}{2(d-1)}}.
\]
An upper bound (Koksma-type inequality) for \eqref{eqn:uniform_distribution_discrepancy_form} then follows as
\[
    \abs{\int_0^1 f'(z) \Delta_{x}(z) dz} \leq \norm{f'}_2 \norm{\Delta_{p,x}}_2 \lesssim_{d,p} N^{-\frac{d}{2(d-1)}}
\]
and thus
\[
	N_p(\ell_2^d,\eps) \lesssim_{d,p} \eps^{-\frac{2(d-1)}{d}}.
\]
The main part has the form $O(\eps^{-2 + O(1/d)})$, independent of $p$ but is not tight when $p=1$. This independence of $p$ comes from the fact that the discrepancy function does not depend on $p$. 
When $d=3$, Linhart~\cite{linhart89} suggests to use Lambert's area-preserving map $\phi$ to reduce the problem from $\bS^2$ to $[0,1]^2$. Here $\phi:[0,1]^2\to \bS^2$ is defined as $\phi(\tau,\alpha) = (2\sqrt{\tau-\tau^2}\cos(2\pi\alpha), 2\sqrt{\tau-\tau^2}\sin(2\pi\alpha), 1-2\tau)$. For each $x\in\bS^2$, one can define an $f_x$ on $[0,1]^2$ as $f_x(\tau,\alpha) = \abs{\inner{x,\phi(\tau,\alpha)}}^p$. Then the left hand side of \eqref{eqn:uniform_distribution_approx} can be written as
\[
\abs{ \int_0^1\int_0^1 f_x(\tau,\alpha) d\tau d\alpha - \frac{1}{N}\sum_i f_x(z_i)},
\]
where $z_i = \phi^{-1}(y_i)$. By Koksma-Hlawka inequality, 
\begin{equation}\label{eqn:uniform_distribution_discrepancy_d=3}
\abs{ \int_0^1\int_0^1 f_x(\tau,\alpha) d\tau d\alpha - \frac{1}{N}\sum_i f_x(z_i)} \leq V(f_x) \norm{D(\tau,\alpha; z_1,\dots, z_N)}_\infty,
\end{equation}
where $V(\cdot)$ denotes the variation in the Hardy-Krause sense and 
\[
D(\tau,\alpha; z_1\dots,z_N) = \frac{1}{N}\sum_{i=1}^N \mathbf{1}_{\{z_i \in [0,\tau]\times[0,\alpha]\}} - \tau\alpha
\] 
is the local discrepancy function on $[0,1]^2$. It can be verified that $V(f_x)$ is uniformly bounded by a constant depending on $p$ for all $x\in\bS^2$ and there exist $z_1,\dots,z_N$ such that $\norm{D}_\infty \lesssim (\log N)/N$ (see, e.g. \cite[Theorem 4]{beck:book}), we can thus take $N \simeq \eps^{-1}\log(\eps^{-1})$ to fulfil~\eqref{eqn:uniform_distribution_approx}. This is again independent of $p$ and is worse than the $\tilde{O}(\eps^{-4/5})$ upper bound for $N_1(d,\eps)$. It looks very difficult to analyse \eqref{eqn:uniform_distribution_discrepancy_form} to give rise to a bound for $N$ incorporating both $p$ and $d$ in the denominator of the exponent.

We can extend Linhart's approach to the non-uniform case. Note that $\phi$ maps the boundary of $[0,1]^2$ to a specific half great circle $C\subset \bS^2$. Given a finite set of points on $\bS^2$, we can assume, without loss of generality, that none of them lies on $C$ (otherwise we can rotate the points). Instead of \eqref{eqn:uniform_distribution_discrepancy_d=3}, we now have from the generalized Koksma-Hlawka inequality~\cite{Goetz2002,Aistleitner2015} that
\[
\abs{ \int_{[0,1]^2} f_x(\tau,\alpha) d\mu - \frac{1}{N}\sum_i f_x(z_i)} \lesssim V(f_x) \norm{D(\tau,\alpha; z_1,\dots, z_N;\mu)}_\infty
\]
where $\mu$ is a normalized discrete measure supported on $\{z_1,\dots,z_N\}\subset (0,1)\times(0,1)$ and 
\[
D(\tau,\alpha; z_1\dots,z_N;\mu) = \frac{1}{N}\sum_{i=1}^N \mathbf{1}_{\{z_i \in [0,\tau]\times[0,\alpha]\}} - \mu([0,\tau]\times [0,\alpha])
\] 
is the local discrepancy function with respect to $\mu$. Using the discrepancy result~\cite{Aistleitner2014} for non-uniform distributions, we can obtain, similarly to the above, that
\[
N_p(3,\eps) \lesssim_p \eps^{-1}\log^{7/2}(\eps^{-1}).
\]
More generally, the same bound (with hidden constants depending on both $p$ and $d$) holds also for larger constant $d\geq 4$ and $p > d-2$, because $V(f_x)$ in this case is finite when $\phi: [0,1]^{d-1}\to \bS^d$ corresponds to the usual spherical coordinates.

In summary, we have the following motley collection of upper bounds for non-integer $p$. For $N_p(\ell_2^d, \eps)$,
\begin{align*}
N_p(\ell_2^2,\eps) &\simeq_p \eps^{-1/(p+1)} \\
N_p(\ell_2^3,\eps) &\lesssim_p \begin{cases}
						\eps^{-1}\log(\eps^{-1}) &  \text{if }1 < p < 2\\
						\eps^{-2/p} & \text{if }p \geq 2
					\end{cases}\\
N_p(\ell_2^4,\eps) &\lesssim_p \begin{cases}
						\eps^{-8/7} &  \text{if }1 < p < 21/8\\
						\eps^{-3/p} & \text{if }p \geq 21/8 
					\end{cases}\\
N_p(\ell_2^d,\eps) &\lesssim_{d,p} \begin{cases}
						\eps^{-2(d-1)/(d+2)} & \text{if }1 < p \leq d/2 + 1\\
						\eps^{-(d-1)/p} & \text{if } p > d/2 + 1
					\end{cases}, \quad d \geq 5.
\end{align*}
Here the bound for $N_p(\ell_2^4,\eps)$ and small $p$ comes from the bound for $N_p(\ell_2^5,\eps)$ and small $p$.

For general $N_p(d,\eps)$, we have for non-integer $p > 1$ that
\begin{align*}
	N_p(d,\eps) \lesssim_{d,p} \begin{cases}
						\eps^{-1}\log^{7/2}(\eps^{-1}), & p > d - 2\\
						\eps^{-8/7}, & d = 4 \text{ and } 1 < p < 2\\
						\eps^{-2(d-1)/(d+2)}, & d\geq 5 \text{ and } 1 < p \leq d - 2
					   \end{cases}
\end{align*}
For odd integers $p$, we have the near-tight bound for all $d\geq 2$
\[
\eps^{-2(d-1)/(d+2p)} \lesssim_{d,p} N_p(\ell_2^d,\eps) \leq N_p(d,\eps) \lesssim_{d,p} (\eps^{-2}\log(\eps^{-1}))^{(d-1)/(d+2p)}.
\]

\paragraph{Closing Remarks.} We conjecture that the lower bound \eqref{eqn:N_p_lower_bound} is tight for $N_p(\ell_2^d,\eps)$ for all $p\notin 2\Z$ and leave finding a matching upper bound for non-integer $p$ as an open problem. It is also an intriguing open problem to find an upper bound for $N_p(d,\eps)$ for $d=2,3,4$ and non-integer $p$ that matches $N_1(d,\eps)$. A major open problem is to find an upper bound for $N_p(d,\eps)$ that contains both $p$ and $d$ as positive terms in the denominator of the exponent for general $d$.

\end{document}